\title{Random Reed-Solomon Codes and Random Linear Codes are Locally Equivalent}
\author{Matan Levi\thanks{Ben-Gurion University. lmat@post.bgu.ac.il} \and Jonathan Mosheiff\thanks{Ben-Gurion University. mosheiff@bgu.ac.il. Supported by Israel Science Foundation grant 3450/24 and an Alon Fellowship. Part of this research was conducted while JM was visiting the Simons Institute for the Theory of Computing.}\and  
	Nikhil Shagrithaya\thanks{University of Michigan, Ann Arbor. nshagri@umich.edu. Supported by NSF awards CCF-2236931 and CCF-2107345. Part of this research was conducted while NS was visiting Ben-Gurion University.}}
\date{}
\begin{document}

	\maketitle

	\thispagestyle{empty}	
	\begin{abstract}
       We establish an equivalence between two important random ensembles of linear codes: random linear codes (RLCs) and random Reed-Solomon (RS) codes. Specifically, we show that these models exhibit identical behavior with respect to key combinatorial properties---such as list-decodability and list-recoverability---when the alphabet size is sufficiently large.

        We introduce monotone-decreasing local coordinate-wise linear (LCL) properties, a new class of properties tailored for the large alphabet regime. This class encompasses list-decodability, list-recoverability, and their average-weight variants. We develop a framework for analyzing these properties and prove a threshold theorem for RLCs: for any LCL property $\mathcal{P}$, there exists a threshold rate $R_\mathcal{P}$ such that RLCs are likely to satisfy $\mathcal{P}$ when $R < R_\mathcal{P}$ and unlikely to do so when $R > R_\mathcal{P}$. We extend this threshold theorem to random RS codes and show that they share the same threshold $ R_\mathcal{P} $, thereby establishing the equivalence between the two ensembles and enabling a unified analysis of list-recoverability and related properties.
        
        Applying our framework, we compute the threshold rate for list-decodability, proving that both random RS codes and RLCs achieve the generalized Singleton bound. This recovers a recent result of Alrabiah, Guruswami, and Li (2023) via elementary methods. Additionally, we prove an upper bound on the list-recoverability threshold and conjecture that this bound is tight.
        Our approach suggests a plausible pathway for proving this conjecture and thereby pinpointing the list-recoverability parameters of both models. Indeed, following the release of a prior version of this paper, Li and Shagrithaya (2025) used our equivalence theorem to show that random RS codes are near-optimally list-recoverable.
	\end{abstract}

	\clearpage
	\setcounter{tocdepth}{2}
    {\small \tableofcontents}
    
	\thispagestyle{empty}
	\clearpage
	\pagestyle{plain}
	\setcounter{page}{1}

	\section{Introduction} \label{sec:intro}
	An \deffont{(error-correcting) code} of length $n$ is a set $\cC$ of words over some fixed alphabet $\Sigma$. When $\Sigma$ is the finite field $\F_q$ (for some prime power $q$) and $\cC$ is a linear subspace of $\F_q^n$, we say that $\cC$ is a \deffont{linear code}. The \deffont{rate} of $\cC$ is $\frac{\log_{|\Sigma|}|C|}n$, which equals $\frac{\dim \cC}{n}$ when $\cC$ is linear. The \deffont{distance} of a linear code $\cC$ is $\min_{x\in \cC\setminus \inset 0}\wt x$, where $\wt x$ is the \deffont{Hamming weight}, which is the number of non-zero entries in $x$.

	In this work, we uncover a deep connection between two of the most important random ensembles of linear codes. These ensembles are:
	\begin{enumerate}
		\item The \deffont{random linear code (RLC)}. An \deffont{RLC} of length $n$ and rate $R$ is the kernel $\cC$ of a uniformly random matrix $P\in \F_q^{(n-k)\times n}$ where $k = Rn$.\footnote{A more common definition for an RLC is to sample a linear subspace of $\F_q^n$ at uniform from among all subspaces of dimension $k$. Fortunately, the two models are almost identical. See \cref{sec:Models} for details.}
		\item The \deffont{random Reed-Solomon (RS) code}. A \deffont{Reed-Solomon code} is defined by a length $n$, a dimension $1\le k\le n$, an alphabet $\F_q$, and a set of evaluation points $\alpha_1,\dots,\alpha_n\in \F_q$. Every codeword is the sequence of evaluations of some polynomial of degree $<k$ over the evaluation points.\footnote{Note that we allows evaluation points to repeat. There exists another model, arguably more common, in which the evaluation points are chosen without repetition. In \cref{sec:Models} we show that the two models are rather close, in a certain well-defined manner, allowing us to use one as a proxy for the other.} Formally, the code is
		$$\CRS[\F_q]{\inparen{\alpha_1,\dots,\alpha_n}}k := \inset{\inparen{f(\alpha_1),\dots,f(\alpha_n)}\mid f\in \F_q\inbrak{X},~\deg f < k}\eperiod$$
		\sloppy When $\alpha_1,\dots,\alpha_n$ are sampled independently and uniformly from $\F_q$, we say that $\CRS[\F_q]{\inparen{\alpha_1,\dots,\alpha_n}}k$ is a \deffont{Random RS Code}. 
	\end{enumerate}

	Our research is motivated by the desire to study two ubiquitous combinatorial properties of codes---\deffont{list-decodability} and \deffont{list-recoverability}. We define these notions as follows: let $\rho\in [0,1]$ and $L\in \N$. A code $\cC\subseteq \F_q^n$ is \deffont{$(\rho,L)$-list-decodable} if for every $z\in \F_q^n$, 
    \[
        \inabset{x\in \cC\mid \wt{x-z}\le \rho n}\le L,
    \]
    is true. Namely, every Hamming ball of radius $\rho n$ in $\F_q^n$ contains at most $L$ codewords. Fixing $\ell\in \N$, the code $\cC$ is said to be \deffont{$(\rho,\ell,L)$-list-recoverable} if for every list of sets $Z_1,\dots,Z_n\subseteq \F_q$, each of size at most $\ell$, there holds $$\inabset{x\in \cC\mid \inabset{i\in [n] \mid x_i\notin Z_i} \le \rho n}\le L\eperiod$$ 
	We note that $(\rho,L)$-list-decodability is equivalent to $(\rho,1,L)$-list-recoverability, so the latter notion generalizes the former. All of our results mentioned in this section also apply to the average-weight versions of list-decodability and list-recoverability (see \cref{sec:averageWeight}).

    \subsection{Our Contributions}\label{sec:contributions}
    In the first part of this work, we develop a new framework for analyzing a broad and natural family of properties, referred to as \deffont{(monotone-decreasing) local coordinate-wise linear (LCL) properties}.\footnote{In the main body of the paper (starting from \cref{sec:LCL}), we define and use the dual notion of a \deffont{(monotone-increasing) local coordinate-wise linear property}, which allows us to discuss the complements of properties like list-decodability and list-recoverability directly.} Notably,  list-decodability, list-recoverability and their average-weight variants fall into this category. For a given LCL property $\cP$, we examine the likelihood that an RLC $\cC$ of rate $R$ satisfies $\cP$. Our main result is a \deffont{threshold theorem}: we identify a specific \deffont{threshold rate} $R_\cP$ such that $\cC$ is highly likely to satisfy $\cP$ when $R < R_{\cP}$, and highly unlikely to satisfy it when $R > R_{\cP}$.

    Using our characterization of $R_\cP$, we compute the threshold rate for list-decodability (and its average-weight variant), showing that RLCs approach the generalized Singleton bound \cite{GST2022}, provided that the alphabet size $q$ is at least $2^{\Omega\inparen{L^2}}$. This recovers a recent result from \cite{AGL2023}, via an elementary approach that does not rely on external results such as the GM-MDS theorem \cite{DSY14, Lovett18, YH19} employed in \cite{AGL2023}. Additionally, we show that the threshold rate for $(\rho,\ell,L)$-list-recoverability is at most $\inparen{1-\frac 1{\log_\ell (L+1)}}\inparen{1-\frac{L+1}{L+1-\ell}\cdot \rho} + o(1)$ and conjecture this bound to be tight.\footnote{Since our paper was first published, \cite{LS25} have shown that the same negative bound, $R\le \inparen{1-\frac 1{\log_\ell (L+1)}}\inparen{1-\frac{L+1}{L+1-\ell}\cdot \rho} + o(1)$, holds for every linear code.}
    This upper bound was shown in a recent prior work by \cite{CZ24} for (Folded) Reed-Solomon codes, who also conjectured that this bound is tight for that code model. 

    The second part of this paper focuses on \deffont{random RS codes}. Our main result in this part is a threshold theorem for LCL properties in the random RS model. We show that a threshold exists for any such property $\cP$, and remarkably, this threshold is equal to $R_{\cP}$! This means that $\cP$ is likely to hold for a random RS code if and only if it is likely to hold for an RLC of similar rate, provided the alphabet size $q$ is sufficiently large. We thus say that these two code models are \deffont{equivalent for LCL properties} or \deffont{locally equivalent}.

    An immediate conclusion of this equivalence is that random RS codes and RLCs share the same list-recoverability parameters. For instance, to prove that a random RS code of rate $R$ is $\LR \rho \ell L$, it essentially suffices to establish the same for an RLC, and vice versa. Moreover, our characterization makes it possible to compute the relevant threshold $R_\cP$, yielding results applicable to both code models simultaneously.

    Prior works such as \cite{BGM2023, GZ2023, AGL2023} have proved that random RS codes also approach the generalized Singleton bound by using a technique first introduced in \cite{BGM2023}, which made use of an equivalence between the GMS-MDS theorem and list-decoding for random RS codes. Combining the list-decodability result for RLCs from the first part of our work with the equivalence result between random RS codes and RLCs in the second part, we show that random RS codes also approach the generalized Singleton bound. This effectively recovers the main result of the prior works without relying on the GM-MDS theorem.

    Our result about the local equivalence between random RS codes and RLCs, besides being interesting in its own right, provides a definite answer to the connection between the two well-studied code models, which had been hinted at in previous works. Furthermore, our result enables researchers to establish findings regarding local properties in one code model and directly transfer them to the other, allowing them to work with their preferred code model.

    \paragraph{Subsequent work.} The results of our work have already been used in \cite{LS25} to establish near-optimal upper bounds on the list size for list-recoverability of random RS codes. Specifically, they showed that random RS codes of rate $R$ are $(1 - R - \varepsilon, \ell, (\ell/\varepsilon)^{O(\ell/\varepsilon)})$-list-recoverable. Their approach proceeds by proving the bound for RLCs, after which the same result holds for random RS codes via our equivalence theorem. Notably, this is the first result on list-recovery of random RS codes that provides reasonable parameters for all rates $R$, whereas previous results were limited to the low-rate regime.

    \subsection{Local Code Properties and the LCL Framework}\label{sec:LCLIntro}
    Our notion of LCL properties extends a general framework, initially introduced in \cite{MRR+2020}, for studying list-decodability and list-recoverability as part of a broader class of code properties. A set of words in \( \mathbb{F}_q^n \) that all lie within the same Hamming ball of radius \( \rho n \) is called \deffont{\(\PC{\rho}\)}. A set of words \( X \subseteq \mathbb{F}_q^n \) is said to be \deffont{\(\RC \rho \ell\)} if there exists a sequence of sets \( Z_1, \dots, Z_n \subseteq \mathbb{F}_q \), each of size at most \( \ell \), such that for every \( x \in X \), the condition \( \left| \{ i \in [n] \mid x_i \notin Z_i \} \right| \le \rho n \) holds. Thus, a code \( \mathcal{C} \) is \(\LD \rho L\) (or \(\LR \rho \ell L\)) if and only if it does \emph{not} contain a \(\PC \rho\) (or \(\RC \rho \ell\)) set of size \( L+1 \). We interpret a clustered (or recovery-clustered) set of codewords in \( \mathcal{C} \) as a \deffont{witness} to the code's non-list-decodability (or non-list-recoverability). This leads to the observation that list-decodability and list-recoverability have short witnesses, drawing a natural analogy to \(\textsc{co-NP}\) languages or \(\Pi_1\) logical formulas.

    In \cite{MRR+2020}, this viewpoint was used to study list-decodability and list-recoverability within a newly defined class of \deffont{(monotone-decreasing) local code properties}. A property is said to be \deffont{local} if it has short witnesses, like clustered sets for list-decodability. The main finding of \cite{MRR+2020} is that \deffont{Gallagher codes}, a random ensemble of LDPC codes, possess local properties similar to those of RLCs, meaning they are just as list-decodable and list-recoverable (with high probability). Later, \cite{GM2022} extended this framework to show that \deffont{randomly punctured low-bias codes} also share local properties with RLCs. These results can be seen as \deffont{reductions} between code ensembles.\footnote{Note that \cite{MRR+2020,GM2022} only prove one-sided reductions, transferring monotone-decreasing local properties from RLCs to other ensembles. In contrast, the reduction in the present paper goes both ways.}

   The reduction results in \cite{MRR+2020} and \cite{GM2022} stem from the \deffont{threshold theorem for local properties for RLCs} \cite[Theorem 2.8]{MRR+2020}. This theorem states that for any local property $\cP$, there is a threshold rate $R_\cP$ such that RLCs with a rate below $R_\cP$ are likely to satisfy $\cP$, whereas those with a rate above $R_\cP$ are unlikely to do so.  An analogous theorem for plain random codes is proven in \cite{GMR+2022}.

    The threshold theorem crucially relies on the property $\cP$ having short witnesses. For instance, if $\cP$ is the property of $(\rho,L)$-list-decodability, the witnesses are $\PC \rho$ sets of size $L+1$. To apply the theorem, subsets of $\F_q^b$ (where $ b = L+1 $) are classified into types, and the theorem uses the expected number of these witnesses in a code $\cC$ to determine whether $\cC$ is likely to satisfy $\cP$. However, the number of types grows exponentially in $q^b$, which is manageable for fixed $q$ but becomes unwieldy when $q$ increases with $n$. For a more formal treatment of this technique, see \cite{MRSY24}.

    This exponential growth in the number of types motivates the need for a refined approach. Our framework of \deffont{LCL properties} avoids explicitly enumerating types and can handle cases where \( q \) is large, often keeping the complexity independent of \( q \) (see \cref{rem:reasonableProperties}).

      Notably, random RS codes arise as random puncturings of the \deffont{full RS code} \( \CRS[\mathbb{F}_q]{\mathbb{F}_q}{k} \), which is known to have low bias, suggesting that the results of \cite{GM2022} could apply to their analysis. However, the methods developed in \cite{MRR+2020,GM2022} are tailored to codes over relatively small alphabets and do not extend to RS codes, where \( q \ge n \) is required. Our LCL framework fills this gap by extending the local properties approach to the large-alphabet regime.
    
    Informally, a property $\cP$ is $b$-LCL if it is witnessed by a small set of codewords $x^1,\dots, x^b \in \F_q^n$ such that, for each $1\le i\le n$, the vector $(x^1_i, \dots, x^b_i)$ satisfies certain linear constraints. For example, $\rho$-clusteredness is characterized by a large number of equality constraints (which are, in particular, linear) among $x^1_i,\dots, x^b_i$ for many coordinates $1\le i\le n$. We provide a formal treatment in \cref{sec:LCL}.

    As mentioned, the LCL framework encompasses list-decodability and list-recoverability, along with their average-weight variants. It also naturally captures other properties, such as average pairwise distance (defined in \cite{CGV2013}) and list-decodability from burst errors (e.g., \cite{RV2009}) under a unified lens.        

    \subsection{Background \footnote{This section describes the state of knowledge as it stood prior to the initial publication of this work. It does not reflect the subsequent contribution of \cite{LS25}, discussed in \cref{sec:contributions}, which established a new result on the list-recoverability of RLCs in the large alphabet regime and, using our main theorem, significantly advanced the known bounds for list-recoverability of random RS codes.}
}

	\subsubsection{List-Decodability and List-Recoverability of RS Codes}\label{sec:prevRS}
	In light of the importance of RS codes, it is a major open problem to explicitly construct RS codes with good list-decodability or list-recoverability parameters. Currently, no non-trivially list-decodable or list-recoverable explicit RS codes are known. For list-decodability, this means that known explicit RS codes are only list-decodable up to the Johnson bound (\cite{Sudan1997}, \cite{GS1998}). A negative list-decodability result for the full RS code $\CRS[\F_q]{\F_q}k$ in some parameters regimes is given in  \cite{BKR09}.
	
	Lacking explicit constructions, much attention has been given to the corresponding existential problems. List-decodability of random RS codes was studied in \cite{RW2014,ST2020,GLS+2021,FKS2022,GST2023}. Finally, in \cite{BGM2023, GZ2023, AGL2023} it was shown that random RS codes are list-decodable up to capacity,\footnote{A family of codes is list-decodable up to capacity if it achieves $(\rho,L)$-list-decodability with rate $1-h_q(\rho)-\eps$ for arbitrarily small $\eps$ and $L \le \poly(n)$. For large $q$, the rate approaches $1-\rho-\eps$.} and, furthermore, that they achieve the \deffont{Elias bound}\footnote{A family of codes achieves the Elias bound if it satisfies $(\rho, L)$-list-decodability with rate $1 - h_q(\rho) - \epsilon$ for arbitrarily small $\epsilon$ and $L \le O\left(\frac{1}{\epsilon}\right)$. This is a stricter requirement than achieving list-decoding capacity. Essentially, codes achieving the Elias bound are at least as list-decodable as plain random codes. See \cite{MRSY24}.} and the \deffont{generalized Singleton bound} \cite{ST2020}. \cite{BGM2023} noted an interesting connection between the GM-MDS theorem (proven in \cite{Lovett18,YH2019}) and list-decoding of RS codes, which allowed them to prove that random RS code can achieve the Generalized Singleton bound, but with exponential field size. In \cite{BDG24} the authors prove that in order to achieve the Generalized Singleton bound exactly, exponential field size is in fact necessary. However, \cite{GZ2023} circumvent this by showing that it is possible to approach the bound arbitrarily closely with only quadratic field size, which was improved to linear by the work of \cite{AGL2023}. Recent works have generalized these results to Algebraic-Geometric codes \cite{BDGZ24}, other polynomial based code ensembles \cite{BDG23}, and also to Gabidulin codes in the rank metric \cite{GXYZ24}. We note that \cite{BGM2023,GZ2023,AGL2023,BDG23,BDGZ24,GXYZ24} all utilize a common framework that relies on the GM-MDS theorem, or a variant of it as a crucial component. It is not yet clear whether this framework can be extended to deal with other local properties, such as list-recovery.
	
	Much less is known about list-recovery of RS codes, even in the random setting. The existing results include \cite{LP2020,GLS+21,GST2023}. The first is a non-trivial list-recovery result for random RS codes, where the rate decreases to zero as $n$ grows, while the second shows the existence of random RS codes of rate $\Omega\left(\frac{\varepsilon}{\sqrt{\ell}\cdot \log 1/\varepsilon}\right)$ that are $(1-\varepsilon, \ell, O(\ell/\varepsilon))$-list-recoverable. The third result shows that random RS codes are $\LR \rho \ell L$ with rate approaching $\frac{1-\rho}{\ell + \rho}$.

    We mention that for explicit Folded Reed-Solomon (FRS) codes, there have been a series of recent works \cite{KRS+2018, Tam24, Sri24, CZ24} which prove positive list-decodability and list-recoverability results for these codes. In particular, \cite{Tam24} proved that explicit FRS codes having relative distance $\delta$ are $(\delta-\varepsilon, \ell, L)$-list-recoverable with $L \leq O\left(\frac{\ell}{\varepsilon}^{O\left(\frac{1+\log \ell}{\varepsilon}\right)}\right)$. For the setting of list-decoding, \cite{CZ24} proved that the codes meet the generalized Singleton bound, and thus are $(1-R-\varepsilon, L)$-list-decodable with $L \leq O(1/\varepsilon)$.
	Additionally, they also prove a negative result for the list-recoverability of Folded Reed-Solomon codes, showing that every Folded Reed-Solomon code of rate $R$ having distinct evaluation points cannot be $(1-R-\varepsilon, \ell, \ell^{\frac{R}{2\varepsilon}-1}-1)$ list-recoverable. This result holds for every constant folding parameter $s \geq 1$, and hence also holds for Reed-Solomon codes having distinct evaluation points.
	
       \subsubsection{List-Decodability and List-Recoverability of RLCs}\label{sec:RLCPrev}

        The $(\rho, L)$-list-decodability of RLCs for a fixed $L$ has been the focus of extensive research \cite{ZP1981,GHS+2002,GHK2011,CGV2013,Wootters2013,RW2014,RW2018,LW2021,GLM+2022,AGL2023}, employing at least four different methods. Collectively, these works demonstrate that RLCs achieve the Elias bound in the following settings:
        \begin{itemize}
            \item When $q = 2$ \cite{GHS+2002, LW2021}.
            \item When $q \le O(1)$ and $\rho$ is bounded away from $\frac{q-1}{q}$ \cite{GHK2011}.
            \item When $q \ge 2^{\Omega\left(\frac{L}{\epsilon}\right)}$, as shown in \cite{AGL2023} and in this work.
        \end{itemize}
        The studies \cite{CGV2013, Wootters2013, RW2014} provide positive results in the regime where $\rho \to \frac{q-1}{q}$, though these do not achieve the Elias bound. A significant open problem remains: proving that RLCs meet the Elias bound across all parameter settings, ideally through a unified proof technique.
        
        The study of list-recoverability for RLCs can be divided into the large $q$ and small $q$ regimes. For large $q$ (e.g., when $q$ is exponentially large in $L$), the only known positive result comes from a straightforward application of the Zyablov-Pinsker Lemma \cite{ZP1981}, which provides a weak lower bound on the threshold rate that rapidly diminishes as $q$ or $L$ increase. In this work, we establish the upper bound $R_{\cP} \le \left(1 - \frac{1}{\log_\ell(L+1)}\right)\left(1 - \frac{L+1}{L+1-\ell} \cdot \rho\right)$ and  conjecture that this bound is tight.
        This is the same bound that was first proven in a recent prior work by \cite{CZ24} for $(\rho, \ell, L)$-list-recoverability of (Folded) Reed-Solomon codes. Importantly, our result indicates that RLCs do not reach an analogue of the Elias bound for list-recovery in the large $q$ regime, meaning they are not as list-recoverable as plain random codes.
        
        List-recovery in the regime where $\ell \ge q^{\Omega(1)}$ is studied in \cite{RW2014, RW2018}, which give positive results. However, \cite{GLM+2022} demonstrates that when $q$ is a large power of a small prime, RLCs underperform significantly compared to plain random codes in this setting as well.

	\subsection{Open Problems}
    \subsubsection{List-Recovery and Non-Local Properties}

        Determining the threshold for list-recovery in the large $q$ regime remains a significant open problem, as it would reveal the list-recovery parameters of both random RS codes and RLCs. As noted earlier, we conjecture this threshold to be $\left(1 - \frac{1}{\log_\ell(L+1)}\right)\left(1 - \frac{L+1}{L+1-\ell} \cdot \rho\right)$.
       
        Another challenge arises in the small $q$ regime, such as when $q = \mathrm{poly}(\ell)$. Our current methods cannot directly address list-recoverability in this setting due to the exponential dependence of $q$ on locality. Relaxing this dependence is an important goal, as it could deepen our understanding of list-recovery in this regime.
        
        Furthermore, extending our framework to analyze highly non-local properties would be valuable. Consider the following example: fix a large $q$ and let $0 < R < 1$, $q^{(1-R)} < \ell < q$, and $\epsilon > 0$. Let $\cP$ represent the property of $\inparen{0, \ell, L}$-list-recoverability for codes in $\F_q^n$, where $L = \ell^n \cdot q^{-(1-R)n} \cdot (1 + \epsilon)$. Note that $\cP$ is not a local property, since $L$ is typically exponential in $n$, but it is a natural one to consider.
        
        Notice that $\ell^n \cdot q^{-(1-R)n}$ is the expected size of the set $\cC \cap Z_1 \times Z_2 \times \dots \times Z_n$, where $\cC \subseteq \F_q^n$ is a fixed code of rate $(1-R)$, and $Z_1, \dots, Z_n$ are independent uniformly random subsets of $\F_q$ of size $\ell$. Thus, a code $\cC$ satisfies $\cP$ if its maximum intersection with any product set $Z_1 \times Z_2 \times \dots \times Z_n$ is at most $(1+\epsilon)$ times this expectation. From a pseudorandomness perspective, a code satisfying $\cP$ can be said to \deffont{fool product sets (also known as \deffont{combinatorial rectangles})} within a $(1+\epsilon)$ factor.

    Consider the question of whether any linear code satisfies this property. In particular, do RLCs satisfy it? Both questions were essentially answered in the affirmative by \cite{MPS+2021} when \( R > \frac{1}{2} \) and \( q \) is a large prime. For \( R \le \frac{1}{2} \), these questions remain open. We conjecture that the answer is again positive, provided that \( q \) is a sufficiently large prime.

        \subsubsection{An Alphabet-Uniform Framework}
        
        As discussed in \cref{rem:reasonableProperties}, the LCL framework developed here is well-suited for sufficiently large alphabets. For instance, \cref{cor:RLCLD} on the list-decodability of RLCs applies when $q \ge 2^{\Omega\left(\frac{L}{\epsilon}\right)}$, where $\epsilon$ is the gap to capacity. In contrast, the classical local property framework from \cite{MRR+2020} is designed for smaller alphabets. Our goal is to unify these approaches into a single framework that works uniformly across all alphabet sizes.
        
        Currently, RLCs are shown to achieve the Elias bound for list-decodability in two cases: when $q$ is constant or when $q \ge 2^{\Omega\left(\frac{L}{\epsilon}\right)}$, leaving a gap between these two regimes. This gap highlights the differences between the two methods used to study local properties. The proof for the constant $q$ case \cite{GHK2011} aligns well with the classical local properties framework, whereas our proof for the large $q$ case (\cref{cor:RLCLD}) is captured within the LCL framework. Developing a unified local property framework may provide a uniform proof that RLCs achieve the Elias bound across all alphabet sizes.

	\subsubsection{Optimality of Random Linear Codes}\label{sec:prevOptimality}
        In the large \( q \) regime, RLCs and random RS codes are essentially optimally list-decodable, as they approach the generalized Singleton bound. We conjecture that this optimality extends more broadly within the class of linear codes and holds for all LCL properties (see \cref{sec:LCL} for a formal definition).

	\begin{conjecture}[Optimality of RLCs and random RS codes for LCL properties]\label{conj:optimal}
		For every $\eps > 0$ and $b \in \N$, there exist constants 
		$n_0$ and $q_0$ such that the following holds: Let $\cP$ be 
		a $b$-local (monotone-decreasing) LCL property. If $q \ge q_0$ 
		and $n \ge n_0$, then any linear code in $\F_q^n$ with rate 
		at least $R_\cP + \eps$ does not satisfy $\cP$.
	\end{conjecture}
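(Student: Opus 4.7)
The plan is to combine the combinatorial characterization of the RLC threshold $R_\cP$ from \cref{sec:RLCLCL} with a structural argument about linear codes of large rate. The guiding principle is that, for $b$-local LCL properties, the RLC threshold should admit a generalized-Singleton-type expression intrinsic to the property's combinatorial structure, which can then be leveraged against arbitrary linear codes, not merely random ones.

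First, I would analyze the characterization of $R_\cP$ as an optimization problem over the profile family $\cF$ associated with $\cP$. The aim is to identify a profile $f^\ast \in \cF$ that drives the threshold, and to express $R_\cP$ combinatorially via $f^\ast$: roughly, as a quantity of the form $1 - \codim(U^\ast)/n$, where $U^\ast \subseteq (\F_q^n)^b$ is the subspace cut out by the coordinate-wise linear constraints induced by $f^\ast$. Because the property is $b$-local and coordinate-wise linear, such a characterization should be extractable directly from the definitions, and for known cases (list-decoding, list-recovery) it should recover the generalized Singleton bound of \cite{GST2022}.

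Next, I would fix an arbitrary linear code $\cC \subseteq \F_q^n$ of rate $R \ge R_\cP + \eps$ and argue that $\cC$ must contain a witness to $\cP$. The natural tool is a dimension count: the space of $b$-tuples of codewords of $\cC$ is a linear subspace of $(\F_q^n)^b$ of dimension $b \cdot R n$, which, once $R > R_\cP$, strictly exceeds the dimension of the locus of profiles that avoid a witness. To upgrade this numerical gap into an actual existence statement, one would invoke a GM-MDS-type theorem generalized to the LCL setting, asserting that a linear code of sufficiently large dimension must contain a $b$-tuple of codewords realizing any given admissible profile $f^\ast$ on some coordinate set, provided $q$ is large enough to avoid degenerate subfield obstructions. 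Averaging over the choice of coordinate set and over $\cF$ would then yield the witness and hence the desired contradiction.

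The main obstacle will be proving such a \emph{profile-MDS} theorem for arbitrary LCL properties. The existing GM-MDS machinery \cite{DSY14, Lovett18, YH19} and its descendants used in \cite{BGM2023, GZ2023, AGL2023, BDG23, RVW24} handle the special case in which several codewords share a common large support, which is exactly what arises in list-decodability and list-recoverability; but a general LCL property can impose much richer coordinate-wise linear constraints on a $b$-tuple, and it is not clear how to translate these into classical support-based MDS statements. Formulating and establishing the analogous guarantee in this broader setting --- essentially, converting a rate lower bound into the guaranteed existence of codewords fitting any prescribed admissible profile --- appears to require genuinely new ideas, and is presumably the reason the statement is advanced as a conjecture rather than proved.
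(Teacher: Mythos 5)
The statement you are addressing is stated in the paper as a \emph{conjecture}, and the paper offers no proof of it; your proposal, by your own admission in its final paragraph, also does not prove it. What you describe is a research plan whose decisive step --- a ``profile-MDS'' theorem asserting that any linear code of rate above $R_\cP$ must contain a $b$-tuple of codewords realizing an admissible profile --- is precisely the open content of the conjecture. Reducing the conjecture to an unproven generalization of GM-MDS does not close the gap; it restates it. So the verdict is that there is a genuine gap, and it sits exactly where you place it.

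Beyond that, two of your preparatory steps are shakier than your sketch suggests. First, $R_\cP$ is not of the simple form $1-\codim(U^\ast)/n$: by \cref{eq:R_PDef,eq:RVUDef} it is a $\min$ over profiles $\cV\in\cF$ and subspaces $U\in\cL_{\distinct}(\F_q^b)$ of a $\max$ over proper subspaces $W\subsetneq U$, and the inner $\max$ (the ``implied'' profiles of \cref{ex:degMarkovNotTight}) is essential; any argument for arbitrary codes must respect it. Second, the dimension count you invoke is already available deterministically --- the argument around \cref{eq:dimC_W,eq:dimM*DegBound} shows for \emph{any} linear code $\cC$ of rate $R$ that $\dim(\cM^*_{\cV,U}\cap\cC^b)\ge \deg(\cV,U,R)$ --- but this only produces matrices satisfying the constraints with row span \emph{contained} in $U$. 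The entire difficulty is to force a matrix whose row span equals $U$ and lies in $\cL_{\distinct}(\F_q^b)$, i.e.\ to rule out that all of $\cM^*_{\cV,U}\cap\cC^b$ collapses into the union $\bigcup_{W\subsetneq U}\cM^*_{\cV,W}$; the paper can do this for an RLC only by a first-moment bound over the (up to $q^{b^2}$ many) subspaces $W$, and for a worst-case code no counting argument of this kind is available. That is why the needed statement is genuinely of GM-MDS type rather than a ``numerical gap upgraded by averaging,'' and why the claim remains a conjecture.
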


    Following the original publication of this paper, \cite{LS25} essentially proved the conjecture for the case of list-recoverability, by proving nearly tight upper bounds on the output list size (which is equivalent to proving an upper bound on the threshol rate corresponding to a fixed output list size).
    If the conjecture is true, then it would simultaneously generalize this negative result for list-recovery and the generalized Singleton bound.

	\section{Preliminaries} \label{sec:preliminaries}
	\subsection{General Notation}
	We denote the set $\{1,\ldots,n\}$ by $[n]$ and let $\F_q$ be the finite field of order $q$, where $q$ is a prime power.  Given a vector space $V$, the family of all linear subspaces of $V$ is denoted by $\cL(V)$.

	Given a matrix $A\in \F_q^{n\times b}$ ($b\in \N$) and a code $\cC\subseteq \F_q^n$, we write $A\subseteq \cC$ to mean that every column of $A$ belongs to $\cC$. We then say that $\cC$ \deffont{contains} $A$. We write $\mrow Ai$ and $\mcol Aj$ to refer, respectively, to the $i$-th row and $j$-th column of $A$. We denote $$\mdist qnb = \inset{A\in \F_q^{n\times b} \mid A\textrm{ has pairwise-distinct columns}}\eperiod$$
	
	We use boldface symbols $\0, \1$ to represent the all zeroes vector and the all ones vector, respectively. If $I\subseteq [b]$, we let $\1_I\in \F_q^b$ be the indicator vector for $I$.
	
	\subsection{Average-Weight List-Decodability and Average-Weight List-Recoverability}\label{sec:averageWeight}
	As mentioned in \cref{sec:intro}, list-decodability and list-recoverability have average-weight variants. We define them here.
	
	\begin{definition}
		Let $X\subseteq \F_q^n$ be a set of words. If there exists some $z\in \F_q^n$ such that $\frac{\sum_{x\in X} \wt{x-z}}{|X|} \le \rho n$, we say that $X$ is \deffont{$\APC \rho$}. If there exists a sequence of sets $Z_1,\dots, Z_n\subseteq \F_q^n$, each of size at most $\ell$, such that $\frac{\sum_{x\in X} \inabset{i\in [n]\mid x_i\notin Z_i}}{|X|}\le \rho n$, we say that $X$ is \linebreak \deffont{$\ARC \rho \ell$}.
		
		A code is said to be \deffont{$\ALD \rho L$} if it does not contain a \linebreak $\APC \rho$ set of size larger than $L$. A code is said to be \linebreak \deffont{$\ALR \rho \ell L$} if it does not contain a $\ARC \rho \ell$ set of size larger than $L$.
	\end{definition}
	
	Note that the average-weight variants are stronger than the plain versions of these properties. Specifically, if a code is $\ALD{\rho}{L}$, it is also $\LD{\rho}{L}$, and similarly for list-recoverability.
	
	\subsection{Local Coordinate-Wise Linear Properties of Codes}\label{sec:LCL}
	A \deffont{property} of ($q$-ary length $n$) codes is a family of codes $\cP$ in $\F_q^n$. For a code $\cC\subseteq \F_q^n$, if $\cC\in \cP$, we say that $\cC$ \deffont{satisfies} $\cP$. 
	
	We now define a special class of code properties called \deffont{local coordinate-wise linear properties}. Fix $b\in \N$. A sequence $\cV = \inparen{\cV_1,\dots, \cV_n}\in \cL(\F_q^b)^n$ is called a \deffont{$b$-local profile}. A matrix $A\in \F_q^{n\times b}$ is said to \deffont{satisfy} the profile $\cV$ if $\mrow Ai\in \cV_i$ for all $1\le i \le n$. We write 
	$$\cM_{\cV} = \inset{A\in \F_q^{n\times b}\mid A\text{ satisfies }\cV}$$
	and denote
	$$\cM^{\distinct}_{\cV} = \cM_{\cV}\cap \mdist qnb\eperiod$$
	
	A code $\cC\subseteq \F_q^n$ is said to \deffont{contain} $\cV$ if it contains some matrix $A\in \cM^{\distinct}_\cV$. A property $\cP$ of length $n$ codes is \deffont{(monotone-increasing)\footnote{The properties discussed informally in \cref{sec:intro} are monotone-decreasing, namely, adding codewords to a code $\cC$ makes them harder to satisfy. List-decodability and list-recoverability are monotone-decreasing properties. Henceforth, we deal with monotone-increasing properties instead. In particular, rather than directly studying list-decodability and list-recoverability, we investigate the complements of these properties.} $b$-LCL ($b$-local coordinate-wise linear)} if there is a family of $b$-local profiles $\cF \subseteq \cL(\F_q^b)^n$ such that
	$$\cP = \inset{C\subseteq \F_q^n \mid \exists \cV\in \cF \text{ such that }C\text{ contains }\cV}\eperiod$$
	In other words, the property $\cP$ is $b$-LCL if it consists of those codes that satisfy at least one profile from a certain family of $b$-local profiles.
    
    The notion of $b$-local profiles can be regarded as a generalization of Intersection Matrices, originally introduced in \cite{ST2020} (see also \cite{GZ2023}, which presents an equivalent concept under the name Reduced Intersection Matrices). The fundamental observation underlying these works is that any code that is not list-decodable must contain a subset of codewords that exhibit significant agreement with some fixed vector in the ambient space. Equivalently, these codewords must agree with one another on a substantial number of coordinates. \cite{ST2020} and \cite{GZ2023} leveraged this insight by representing codewords as columns of a matrix, where coordinate-wise agreements among codewords manifest as linear constraints on the matrix rows. (Reduced) Intersection Matrices serve as symbolic matrices that encode such agreement constraints. The concept of a $b$-local profile extends this framework by encoding constraints on coordinates in a more general manner, permitting the representation of linear constraints rather than merely agreement constraints. This generalization plays a crucial role in establishing the threshold theorem for RLCs and in demonstrating the equivalence between RLCs and random RS codes.
    
    As shown by the following proposition, this framework allows us to capture natural code properties such as list-decodability and list-recoverability.
	\begin{proposition}\label{prop:LDLRareLDL}
		The following holds:
		\begin{enumerate}
			\item  The complement of $(\rho, L)$-list-decodability is an $(L+1)$-LCL property with an associated $(L+1)$-local profile family of size at most $\binom{n}{\rho n}^{L+1}$.
			\item  The complement of $(\rho, \ell,L)$-list-recoverability is an $(L+1)$-LCL property with an associated $(L+1)$-local profile family of size at most $\binom{n}{\rho n}^{L+1}\cdot \ell^{(L+1)n}$.
		\end{enumerate}
	\end{proposition}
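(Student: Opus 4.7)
The plan is to exhibit, for each of the two properties, an explicit family of $(L+1)$-local profiles $\cF$ of the required size such that a code fails the property if and only if it contains some profile from $\cF$. Write $b=L+1$ and regard a candidate witness set of $b$ codewords $x_1,\dots,x_b\in \cC$ as the columns of a matrix $A\in \F_q^{n\times b}$; these codewords are distinct precisely when $A\in \mdist qnb$. The subspace $\cV_i\subseteq \F_q^b$ at position $i$ will encode the coordinate-wise structure of the failure at row $i$, and linearity will come for free because the only constraints we impose are of the form ``certain entries of $\mrow{A}{i}$ must agree.''

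For part (1), starting from codewords $x_1,\dots,x_b$ within Hamming distance $\rho n$ of some $z\in \F_q^n$, let $E_j=\{i\in[n] : x_{j,i}\ne z_i\}$ and pad each $E_j$ arbitrarily so that $|E_j|=\lfloor \rho n\rfloor$. Set $J_i=\{j\in [b] : i\notin E_j\}$ and define
\[
\cV_i(E_1,\dots,E_b) := \{v\in \F_q^b : v_j=v_{j'} \text{ for all } j,j'\in J_i\},
\]
which is linear. Each row $\mrow{A}{i}$ lies in $\cV_i$ because its entries indexed by $J_i$ all equal $z_i$. Conversely, given any $A\in \cM^\distinct_\cV$ with $\cV$ in this family, define $z$ by letting $z_i$ be the common value of $A_{i,j}$ for $j\in J_i$ (arbitrary when $J_i=\emptyset$); then column $\mcol{A}{j}$ disagrees with $z$ at most on $E_j$, giving $b$ distinct codewords of $\cC$ in a single Hamming ball of radius $\rho n$. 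The family is indexed by $(E_1,\dots,E_b)\in \binom{[n]}{\lfloor \rho n\rfloor}^b$, so $|\cF|\le \binom{n}{\rho n}^b$.

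For part (2), the extra ingredient is a sequence of sets $Z_1,\dots,Z_n\subseteq \F_q$ with $|Z_i|\le \ell$: at each pair $(i,j)$ with $j\in J_i$, the entry $x_{j,i}\in Z_i$ is identified by some index $t_{j,i}\in [\ell]$. Augment the profile to
\[
\cV_i(E_1,\dots,E_b,t) := \{v\in \F_q^b : v_j=v_{j'} \text{ whenever } j,j'\in J_i \text{ and } t_{j,i}=t_{j',i}\},
\]
again a linear subspace. The forward direction is immediate from the construction of $t$. For the converse, given $A\in \cM^\distinct_\cV$, recover $Z_i := \{A_{i,j} : j\in J_i\}$ (padded to size $\ell$ if smaller); the $t$-based equality constraints guarantee $|Z_i|\le \ell$ since the entries of row $i$ over $J_i$ take at most $|\{t_{j,i} : j\in J_i\}|\le \ell$ values, and each column $\mcol{A}{j}$ has at most $\lfloor \rho n\rfloor$ positions outside the $Z_i$'s. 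The family is indexed by the $(E_j)$ tuples together with assignments $t_{j,i}\in [\ell]$ for $i\notin E_j$, yielding $|\cF|\le \binom{n}{\rho n}^b\cdot \ell^{bn}$. The only mild obstacle is ensuring that the count of profiles does not depend on the alphabet size $q$; this is resolved by parameterizing profiles using purely combinatorial data (the error sets $E_j$ and the symbolic labels $t_{j,i}$), thereby sidestepping any explicit enumeration of $z$ or of the sets $Z_i$ themselves.
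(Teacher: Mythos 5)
Your proposal is correct and follows essentially the same route as the paper: profiles are parameterized by the per-codeword error/agreement sets together with per-coordinate identification data (your labels $t_{j,i}\in[\ell]$ are just the paper's equivalence relations $\sim_i$ with at most $\ell$ classes), and both containment directions are verified as in the paper. The only cosmetic differences are that you treat list-decodability directly instead of as the $\ell=1$ special case, and you count via labels rather than equivalence relations, which gives the same bound $\binom{n}{\rho n}^{L+1}\cdot\ell^{(L+1)n}$.
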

	\begin{proof}
		It suffices to prove the claim for $(\rho,\ell,L)$-list-recoverability, since list-decodability is merely list-recoverability with $\ell=1$. Consider the code property
		$$\cP = \inset{\cC\subseteq \F_q^n \mid \cC \text{ is not }(\rho,\ell,L)\text{-list-recoverable}}\eperiod$$
		To prove that $\cP$ is $(L+1)$-LCL, we define a corresponding set $\cF$ of $(L+1)$-local profiles.
		
		Let $I_1,\dots, I_{L+1}\subseteq [n]$ be sets, each of size at least $(1-\rho)\cdot n$. For each $i\in [n]$, let $\sim_i$ be an equivalence relation over $[L+1]$, consisting of at most $\ell$ equivalence sets. Denote $\cI = (I_1,\dots, I_{L+1})$ and $\sim =(\sim_1,\dots,\sim_n)$.  Define the profile $\cV^{\cI,\sim}\in \cL\inparen{\F^{L+1}_q}^n$ by
		\[
		\cV^{\cI,\sim}_i := \inset{x \in \F_q^{L+1} \mid \forall r,s \in [L+1]\textrm { if }i\in I_r\cap I_s\textrm{ and }r\sim_i s\textrm{ then }x_r=x_s} \eperiod
		\]
		Let $\cF$ be the family of all profiles $\cV^{\cI,\sim}$ for $\cI$ and $\sim$ of the above form.  Observe that $|\cF| \le \binom{n}{\rho n}^{L+1}\cdot \ell^{(L+1)n}$.
		
		To prove the proposition it suffices to show that
		$$\cP = \inset{\cC\subseteq \F_q^n\mid \exists \cV\in \cF \text{ such that $\cC$ contains $\cV$}}\eperiod$$
		We do so by proving containment in both directions.
		
		First, suppose that $\cC$ contains some $\cV^{\cI,\sim}\in \cF$, where $\cI$ and $\sim$ are as above. Let $y^1,\dots, y^{L+1}\in \cC$ be distinct codewords that satisfy $\cV$. Define the sets $Z_1,\dots, Z_n\subseteq \F_q$ by $Z_i = \inset{y^r_i \mid i\in I_r}$. It is straightforward to verify that, because $y^1,\dots,y^{L+1}$ satisfies $\cV^{\cI,\sim}$, it must hold that $|Z_i|\le \ell$. Furthermore, since each set $I_r$ is of size at least $(1-\rho)n$, there are at most $\rho n$ coordinates $i$ in which $y^r_i\notin Z_i$. Thus, $y^1,\dots,y^{L+1}$ is a witness that $\cC$ is not $(\rho,\ell,L)$-list-recoverable.
		
		In the other direction, suppose that $\cC$ is not $(\rho,\ell,L)$-list-recoverable. Let $y^1,\dots,y^{L+1}$ be a witness to this fact and let $Z_1,\dots, Z_{n}$ be a corresponding sequence of input lists, each of size at most $\ell$. For $r\in [L+1]$, let $I_r = \inset{i\in [n] \mid y^r_i \in Z_i}$. Clearly, $|I_r|\ge (1-\rho) n$. Let $\sim_i$ be an equivalence relation over $[L+1]$ such that whenever $y^r_i = y^s_i$ and $y_r\in Z_i$, then $r\sim_i s$. Note that there exists such a relation with at most $|Z_i|\le \ell$ equivalence sets. It is now straightforward to verify that $y^1,\dots, y^{L+1}$ satisfy the profile $\cV^{\cI,\sim}$ where $\cI = (I_1,\dots, I_{L+1})$ and $\sim = (\sim_1,\dots,\sim_n)$.
	\end{proof}
	
	\section{Organization and Formal Statements of Main Results}
	Having defined the notion of \deffont{LCL properties}, we turn to formally state our main results. 
	
	\subsection{Results for Random Linear Codes}
	In \cref{sec:RLCLCL} we study LCL properties of RLCs, and prove the \deffont{threshold theorem} for RLCs.
	\begin{theorem}[RLC thresholds for LCL properties over a large alphabet]\label{thm:RLCThresholdForLCLIntro}
		Let $\cP$ be a $b$-LCL property of codes in $\F_q^n$ and let $\cF\subseteq \cL\inparen{\F_q^b}^n$ be a corresponding family of profiles. Let $\cC\subseteq \F_q^n$ be an RLC of rate R. Then, there is some threshold rate $R_\cP$ for which the following holds.
		\begin{enumerate}
			\item If $R \ge R_{\cP}+\eps$ then $\PR{\cC\textrm{ satisfies }\cP} \ge 1 - q^{-\eps n+b^2}$.
			\item If $R \le R_{\cP}-\eps$ then $\PR{\cC\textrm{ satisfies }\cP} \le \inabs{\cF}\cdot q^{-\eps n + b^2}$.
			\item In particular, if $R \le R_{\cP}-\eps$ and $q \ge 2^{\frac{2\log_2 |\cF|}{\eps n}}$ then $\PR{\cC\textrm{ satisfies }\cP} \le q^{-\frac{\eps n}2 + b^2}$.
		\end{enumerate}
	\end{theorem}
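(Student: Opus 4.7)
The plan is to use a first-moment method for the upper probability bound (Part 2) and a second-moment method for the lower probability bound (Part 1), with the threshold $R_\cP$ defined so the two phases meet. The starting observation is that for an RLC $\cC = \ker P$ with $P \in \F_q^{(n-k) \times n}$ uniform, any matrix $A \in \F_q^{n\times b}$ satisfies $\Pr[A \subseteq \cC] = \Pr[PA = 0] = q^{-\rank(A)(1-R)n}$. Setting $X_\cV := \inabs{\inset{A \in \cM^\distinct_\cV : A \subseteq \cC}}$ and $N_\cV^d := \inabs{\inset{A \in \cM^\distinct_\cV : \rank A = d}}$, the first moment reads $\E{X_\cV} = \sum_{d=1}^b N_\cV^d \cdot q^{-d(1-R)n}$. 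I would then define
\[
R_\cV := 1 - \max_{d \,:\, N_\cV^d > 0} \frac{\log_q N_\cV^d}{dn}, \qquad R_\cP := \min_{\cV \in \cF} R_\cV,
\]
so that $R_\cV$ is exactly the rate at which $\E{X_\cV}$ transitions from decaying to growing exponentially in $n$.

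For Part 2 ($R \le R_\cP - \eps$), the inequality $R_\cV \ge R_\cP$ gives $N_\cV^d q^{-d(1-R)n} \le q^{-d\eps n} \le q^{-\eps n}$ for every $d$, whence $\E{X_\cV} \le b \cdot q^{-\eps n}$. Markov and a union bound over $\cV \in \cF$ yield $\Pr[\cC \in \cP] \le |\cF| \cdot b \cdot q^{-\eps n}$, and absorbing $b$ into $q^{b^2}$ gives the stated bound; Part 3 is a one-line specialization.

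For Part 1 ($R \ge R_\cP + \eps$), I would pick $\cV^* \in \cF$ and $d^* \in [b]$ attaining the extremes defining $R_\cP$, and apply Chebyshev to the random variable $Y$ counting $d^*$-dimensional subspaces $W \subseteq \cC$ that arise as column spans of some matrix in $\cM^\distinct_{\cV^*}$. Since at most $q^{d^* b}$ matrices in $\F_q^{n \times b}$ share any given column span, the eligible collection $\mathcal{W}^*$ has size $|\mathcal{W}^*| \ge N_{\cV^*}^{d^*} q^{-d^* b}$, and $\E{Y} \ge q^{d^*(\eps n - b)}$, which is enormous. Using the identity $\Pr[W_1 + W_2 \subseteq \cC] = q^{-\dim(W_1 + W_2)(1-R)n}$ and noting that the covariance vanishes when $W_1 \cap W_2 = 0$, the Chebyshev bound reduces to controlling, for each $t \in \inset{1, \ldots, d^*}$, the number of pairs $(W_1, W_2) \in (\mathcal{W}^*)^2$ with $\dim(W_1 \cap W_2) = t$. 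A Gaussian-binomial estimate together with the extremality of $d^*$ (which ensures $N_{\cV^*}^{d'} \le q^{d' n (1 - R_\cP)}$ for every $d'$) should bound this count tightly enough to obtain $\Var{Y}/\E{Y}^2 \le q^{-\eps n + O(b^2)}$, hence $\Pr[Y = 0] \le q^{-\eps n + b^2}$.

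The main obstacle is making the second-moment analysis tight enough that the slack between the Chebyshev ratio and $q^{-\eps n}$ is only $q^{b^2}$ rather than something much cruder. The challenge lies in simultaneously leveraging the extremality of $(\cV^*, d^*)$ both to lower-bound $|\mathcal{W}^*|$ and to upper-bound the mass of degenerate overlapping pairs, using the same collection $\inset{N_{\cV^*}^{d'}}_{d'}$ of counts. The bookkeeping must carefully distribute the $q^{d^* b}$ column-span multiplicities through the covariance sum and use that Gaussian binomials $\binom{n}{s}_q$ are within an absolute constant of $q^{s(n-s)}$, so that every loss stays within the $b^2$ budget.
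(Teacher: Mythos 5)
Your proposal is correct in the negative direction (Part 2 is exactly a first-moment/Markov union bound, as in the paper), but the positive direction fails, and the failure is in the definition of the threshold itself. You set $R_\cV$ to be the rate at which $\E{X_\cV}$ switches from decay to growth, i.e.\ the first-moment threshold over rank classes. This is provably not the containment threshold, and the paper's own \cref{ex:degMarkovNotTight} is a counterexample: take $b=2$, $\cV_i=\F_q^2$ on the first $n/2$ coordinates and $\cV_i=\inset{x\in\F_q^2\mid x_1=x_2}$ on the rest. Then $N_\cV^2\approx q^{3n/2}$, so your $R_\cV=1/4$; yet an RLC of rate $1/3>R_\cV+\eps$ contains a witness with probability at most $q^{-n/6+O(1)}$, because any matrix in $\cM^{\distinct}_{\cV}$ forces $\cC$ to contain a nonzero codeword vanishing on the second half of the coordinates (for rank $2$, the difference of the two columns; for rank $1$, a nonzero column), and such codewords appear with probability at most $q^{-(1/2-R)n}$. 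So Part 1 with your $R_\cP$ is false, and no amount of care in the Chebyshev bookkeeping can rescue it: since $\PR{Y\ge 1}\le q^{-\Omega(n)}$ while $\E{Y}\ge q^{\Omega(n)}$ in this example, Paley--Zygmund forces $\E{Y^2}\ge \inparen{\E{Y}}^2\cdot q^{\Omega(n)}$, i.e.\ the covariance mass from overlapping spans is exponentially \emph{larger} than $\inparen{\E{Y}}^2$, the opposite of your hoped-for $q^{-\eps n+O(b^2)}$. The extremality of $(\cV^*,d^*)$ in your sense records nothing about this phenomenon, because it only compares rank classes, not the constraints that a witness imposes on lower-dimensional linear combinations of its columns.

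The missing idea is that the threshold must be taken over all ``implied'' sub-profiles obtained by quotienting: for a candidate row span $U$ and every $W\subsetneq U$, projecting the rows by a map with kernel $W$ yields a profile whose own first moment must also be large, and the correct rate is $R_{\cV,U}$ as in \cref{eq:RVUDef} (a maximum over $W\subsetneq U$), with $R_\cV$ the minimum over $U\in\cL_{\distinct}\inparen{\F_q^b}$ and $R_\cP$ the minimum over $\cF$; the paper's negative direction is then your Markov bound applied after projecting by the worst $W$. For the positive direction the paper avoids second moments entirely (\cref{prop:RLCThresholdForProfile}): the set $\cM^*_{\cV,U}\cap\cC^b$ is an intersection of $\F_q$-linear spaces whose dimension is deterministically at least $\deg\inparen{\cV,U,R}$, and the matrices in it whose row span is a proper subspace $W\subsetneq U$ are removed by a first-moment bound summed over the at most $q^{b^2}$ subspaces $W$, using $\deg\inparen{\cV,W,R}\le\deg\inparen{\cV,U,R}-M$; whatever survives has row span exactly $U\in\cL_{\distinct}\inparen{\F_q^b}$ and hence has distinct columns. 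If you insist on a second-moment route you would at minimum have to build the quotient conditions into $R_\cP$ and then control the overlap covariances in terms of the projected counts, which is precisely the bookkeeping the paper's linear-algebraic argument sidesteps.
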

	\begin{remark}[Characterization of $R_\cP$]\label{rem:ThresholdCharacterization}
		The usefulness of \cref{thm:RLCThresholdForLCLIntro} depends on having a clear characterization of $R_\cP$ in terms of certain first-moment terms. Such a characterization is developed and given explicitly in \cref{eq:R_PDef} in \cref{sec:RLCLCL}.
	\end{remark}
	\begin{remark}[The alphabet size and ``reasonable'' properties]\label{rem:reasonableProperties}
		As demonstrated by the second and third parts of \cref{thm:RLCThresholdForLCLIntro}, the usefulness of the theorem hinges on $\cF$ not being too large in terms of $q$. For the probability bound to be meaningful, we need 
		\begin{equation}\label{eq:FInequality}
			q \ge |\cF|^{\Omega{\left(\frac{1}{\eps n}\right)}}\ecomma
		\end{equation} at the very least. It seems that many natural LCL properties have $|\cF|\ge 2^{\Omega(n)}$ at the very least, making $q \ge 2^{\Omega\inparen{\frac 1\eps}}$ a minimum requirement.
		
		Informally, we define a property $\cP$ as \deffont{reasonable} if its associated set of local profiles satisfies 
		\begin{equation}|\label{eq:FSmall}
			\cF| \le q^{o_{q\to \infty}(n)}\eperiod
		\end{equation}
		Note that only a reasonable property can satisfy \cref{eq:FInequality}.
		Fortunately, the complements of list-recoverability and list-decodability with fixed list-size are reasonable (\cref{prop:LDLRareLDL}).
		
		Furthermore, let $\cP$ denote a $b$-LCL property and let $M = \inset{\cV_i \mid \cV\in \cF}$, the set of all linear subspaces of $\F_q^b$ that can appear in a profile in $\cF$. It is not hard to see that $|\cF|\le |M|^n$. Hence, as long as $|M|$ depends only on $b$ but not on $q$, the property $\cP$ is reasonable. 
		
		For example, suppose that $\cP$ is the property of not being $\LR \rho \ell L$. As demonstrated in the proof of \cref{prop:LDLRareLDL}, each element of $M$ is defined by an equivalence relation over $[b]$, and thus, $|M|\le b^b$. Therefore, \cref{eq:FSmall} is satisfied, making $\cP$ reasonable.
		
		In the classic local property framework (see \cite{MRR+2020}), the number of \deffont{types} associated with a $b$-local property is equal to the number of ways to distribute $n$ unlabeled balls between $q^b$ labeled bins, a term which grows exponentially in $q$. In the present LCL framework, $|M|$ can be seen as analogous to the base $n$ logarithm of the number of types. The fact that $|M|$ can remain constant as $q$ grows is crucial to the suitability of LCL properties for studying the large alphabet regime.
	\end{remark}
	
	In \cref{sec:RLCLD} we use the characterization mentioned in \cref{rem:ThresholdCharacterization} to compute the threshold rate for (average-weight) list-decodability and to give an upper bound on the threshold for list-recoverability.
	
	\begin{restatable}[RLC threshold for list-decodability]{theorem}{RLCLD} \label{thm:RLCThresholdForLDIntro}
		\sloppy
		Fix $\rho \in [0,1]$ and $L\in \N$. Consider the properties $\cP := \inset{\cC\subseteq \F_q^n \mid \cC\textrm{ is not }\LD\rho L}$ and $\cP' := \inset{\cC\subseteq \F_q^n \mid \cC\textrm{ is not }\ALD\rho  L}$. Then, $$ R_{\cP} \ge R_{\cP'} \ge \max\inset{1-\rho\cdot\inparen{1+\frac 1 L},0}\eperiod$$
		Furthermore, if $n$ is divisible by $\binom{L+1}{(1-\rho)(L+1)}$ then the above bound is tight, namely,
		$$ R_{\cP} = R_{\cP'} = \max\inset{1-\rho\cdot\inparen{1+\frac 1 L},0}\eperiod$$
	\end{restatable}
	\begin{remark}        
		The hard part of this theorem is proving the lower bound on $R_\cP$. This is a positive result about RLCs, which also follows from \cite{AGL2023}. The reasoning in \cite{AGL2023} relies on that paper's main result about random RS codes, and, in particular, on the GM-MDS theorem. In contrast, our proof is more direct and elementary.
		
		Our upper bound on $R_\cP$ can also be inferred from the generalized Singleton bound \cite{GST2022} (see \cref{sec:prevOptimality}). Here we prove it directly (assuming the proper divisibility condition) within the LCL framework.
    \end{remark}
	
	\cref{thm:RLCThresholdForLCLIntro,thm:RLCThresholdForLDIntro,prop:LDLRareLDL} immediately yield the following corollary.

    \begin{corollary}[List-decodability of RLCs]\label{cor:RLCLD}
		Fix $\rho \in[0,1]$ and $L\in \N$. Let $\cC\subseteq \F_q^n$ be an RLC of rate $R$. Let $\eps > 0$. The following now holds.
		\begin{enumerate}
			\item If $R \le 1-\rho\cdot\inparen{1+\frac 1L} - \eps$ then $\cC$ is $\LD \rho L$ with probability at least $1-2^{n\cdot(L+1)}\cdot q^{-\eps n+(L+1)^2}$. 
			\item In particular, if $R \le 1-\rho\cdot\inparen{1+\frac 1L} - \eps$ and $q > 2^{\frac{2(L+1)}\eps}$ then $\cC$ is $\ALD \rho L$ with probability at least $1-q^{-\frac{\eps n}2+(L+1)^2}$.
		\end{enumerate}
	\end{corollary}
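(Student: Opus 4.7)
The plan is to obtain the corollary by directly assembling three earlier results: \cref{prop:LDLRareLDL}, \cref{thm:RLCThresholdForLRIntro}, and \cref{thm:RLCThresholdForLCLIntro}. For part 1, I first specialize \cref{prop:LDLRareLDL} to $\ell = 1$: the property $\cP := \inset{\cC\subseteq \F_q^n \mid \cC\text{ is not } \LD\rho L}$ is $(L+1)$-LCL, with an associated profile family $\cF$ of size at most $\binom{n}{\rho n}^{L+1} \le 2^{n(L+1)}$.

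Next, \cref{thm:RLCThresholdForLRIntro} with $\ell = 1$ gives $R_{\cP} \ge 1 - \rho\cdot\inparen{1 + \frac{1}{L}}$, so the hypothesis of the corollary ensures $R \le R_{\cP} - \eps$. Plugging these estimates into part 2 of \cref{thm:RLCThresholdForLCLIntro} (with $b = L+1$) yields
$$\PR{\cC \text{ is not } \LD\rho L} \;=\; \PR{\cC \text{ satisfies } \cP} \;\le\; |\cF| \cdot q^{-\eps n + (L+1)^2} \;\le\; 2^{n(L+1)} \cdot q^{-\eps n + (L+1)^2},$$
and taking the complementary probability gives part 1. For part 2, the additional hypothesis $q > 2^{2(L+1)/\eps}$ implies $q \ge 2^{2\log_2|\cF|/(\eps n)}$, so part 3 of \cref{thm:RLCThresholdForLCLIntro} applies and yields the stronger bound $q^{-\eps n/2 + (L+1)^2}$.

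The one piece of bookkeeping needed is that part 2 of the corollary asserts the stronger \emph{average-weight} list-decodability, while \cref{prop:LDLRareLDL} as stated addresses only the plain version. This is handled by a straightforward adaptation of that proposition's proof: replace the per-index constraint $|I_r| \ge (1-\rho)n$ by the aggregate $\sum_r |I_r| \ge (1-\rho)(L+1)n$, producing an LCL profile family whose size is still bounded by $2^{n(L+1)}$. Since \cref{thm:RLCThresholdForLRIntro} provides the same threshold $1 - \rho\cdot(1+1/L)$ for both $\LR$ and $\ALR$ (hence for $\LD$ and $\ALD$), the remainder of the argument is identical. I do not anticipate any substantive obstacle; the corollary is essentially a specialization of the three cited results to $\ell = 1$ and $b = L+1$, with the $\ALD$ extension being a minor adaptation of the LCL encoding in \cref{prop:LDLRareLDL}.
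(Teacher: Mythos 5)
Your proposal is correct and follows essentially the same route as the paper, which derives this corollary immediately by combining \cref{prop:LDLRareLDL} (with $\ell=1$, giving $|\cF|\le 2^{n(L+1)}$), \cref{thm:RLCThresholdForLRIntro} (giving $R_\cP \ge 1-\rho(1+\tfrac1L)$), and parts 2--3 of \cref{thm:RLCThresholdForLCLIntro}. Your remark on extending the LCL encoding to the average-weight version (relaxing the per-column constraint to the aggregate one, as in the profile family $\cF'$ inside the proof of \cref{thm:RLCThresholdForLRIntro}) correctly fills in a bookkeeping step the paper leaves implicit.
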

	\begin{remark}
		A very similar result to \cref{cor:RLCLD} is proved in \cite{AGL2023}.
	\end{remark}

    	\begin{restatable}[Upper bound on RLC threshold for list-recoverability]{theorem}{RLCLR} \label{thm:RLCThresholdForLRIntro}
        \sloppy
		Fix $\rho \in [0,1]$ and $\ell,L\in \N$ such that $\ell \ge 2$ and $L+1 = \ell ^m$ for some $m\in \N$. Consider the properties $\cP := \inset{\cC\subseteq \F_q^n \mid \cC\textrm{ is not }\LR\rho\ell L}$ and $\cP' := \inset{\cC\subseteq \F_q^n \mid \cC\textrm{ is not }\ALR\rho \ell L}$. 
       Suppose that $n\cdot\left( 1-\frac{L+1}{L+1-\ell}\cdot \rho\right)$ is divisible by $m$ and $(L+1)\rho n$ is divisible by $\binom{L+1}\ell$. Then,
       $$ R_{\cP'} \leq R_{\cP} \leq \left(1- \frac{1}{m}\right)\left( 1-\frac{L+1}{L+1-\ell}\cdot \rho\right)\eperiod$$ 
	\end{restatable}
    \begin{remark}
       The divisibility conditions in \cref{thm:RLCThresholdForLCLIntro,thm:RLCThresholdForLRIntro} are not very significant, as the theorems provide nearly identical bounds even if the length $n$ does not satisfy these conditions. Indeed, one can puncture the RLC $\cC$, removing a tiny fraction of its coordinates to obtain a shortened code $\cD$ that satisfies the divisibility condition, allowing the theorem to be applied. Moreover, $\cC$ and $\cD$ differ only slightly in their list-decodability and list-recoverability parameters, so the result for $\cD$ also gives a close bound for $\cC$.
    \end{remark}

	\subsection{The Equivalence Theorem and Results About Random Reed-Solomon Codes }
	In \cref{sec:RSLDL,sec:RSContainsPRofile} we prove the equivalence between  RLCs and random RS codes.
	\begin{restatable}[Threshold theorem for RS codes]{theorem}{reduction}\label{thm:MainRSIntro}
		Let $\cP$ be a $b$-LCL property of codes in $\F_q^n$, with associated local profile family $\cF\subseteq \cL\inparen{\F_q^b}^n$. Let $0 < R < 1$ and let $\cC = \CRS[\F_q]{\alpha_1,\dots,\alpha_n}k$, where $k = Rn$ and $\alpha_1,\dots,\alpha_n$ are sampled independently and uniformly from $\F_q$. Fix an $\eps>0$ satisfying  $\eps n \ge 2b(b+1)$. Furthermore, let $q$ satisfy $q > c (4b)^{4b}k/\eps$, for some constant $c>1$. The following now holds:
        \begin{enumerate}
            \item If $R\le R_\cP-\eps$, then 
		      \begin{equation}\label{eq:MainReductionRLCtoRS}
		          \PR{\cC \textrm{ satisfies }\cP} \le (2^b-1) \cdot \inparen{\frac {(4b)^{4b}k}  {\eps q}}^{\frac{\eps n}{2b}} \cdot |\cF|\eperiod
		      \end{equation}
            \item If $R\ge R_{\cP}+\eps$, then
            		      \begin{equation}\label{eq:MainReductionRStoRLC}
                                \PR{\cC \textrm{ satisfies }\cP} \ge 1 - \binom b2\cdot (2^b-1)\cdot \inparen{\frac {(4b)^{4b}k}{\eps q}}^{\frac{\eps n}{2b}} \eperiod
                            \end{equation}
        \end{enumerate}
	\end{restatable}
    For a fixed LCL property $\cP$, \cref{thm:MainRSIntro} reveals a threshold phenomenon for random RS codes. Specifically, $\cP$ is highly likely to be satisfied when $R > R_{\cP}$ and almost certainly not satisfied when $R < R_{\cP}$. The sharpness of this transition increases rapidly with $q$. Crucially, the threshold matches that of RLCs, implying that an RLC and a random RS code with similar rates are likely to satisfy the same local properties. Indeed, any positive result about local properties of RLCs can immediately be transferred to random RS codes, and vice versa. We illustrate this paradigm by the following corollary.
    
    \begin{corollary}[List-recoverability of random RS codes]\label{cor:RSLR}
		Fix $\rho\in [0,1]$,  $L,\ell,n\in \N$ and $\eps > \frac {c''}n$. Let $k = Rn$ where $R\in [0,1]$ and let $q$ be a prime power satisfying $$q \ge 2^{\frac c\eps}\cdot n\eperiod$$
          
        Let $\cC\subseteq \F_q^n$ be an RLC of rate $R$ and let $\cD\subseteq \F_q^n$ be a random RS code of rate $R':=R-\eps - \frac{c'}n$, where $\eps \ge \frac{c''}n$. Suppose that $$\PR{\cC \textrm{ is }\LR \rho \ell L} \ge \frac 12\eperiod$$ Then, $$\PR{\cD \textrm{ is }\LR\rho \ell L} \ge 1 -  2^{-n}\eperiod$$

        Here, $c$, $c'$ and $c''$ are positive constants that depend polynomially on $\rho$, $\ell$ and $L$.
	\end{corollary}
	\begin{proof}
       Let $\cP$ denote the complement to the property of being $\LR \rho \ell L$ and let $\cF$ be its associated local profile family. By \cref{prop:LDLRareLDL}, $|F| \le \binom n{\rho n}^{L+1}\cdot \ell^{(L+1)n}$.

        Write $R = R_{\cP}+\nu$ for some $\nu \in \R$. If $\nu > 0 $ then, by \cref{thm:RLCThresholdForLCL}
        $$\frac 12 \le \PR{\cC\textrm{ is }\LR \rho \ell L} \le q^{-\nu n+b^2} \ecomma$$
        so $\nu \le \frac{b^2+1}n$. Hence, taking $c' = b^2+1$, we have $R' \le R_{\cP} - \eps$. Take $c$ and $c''$ large enough to satisfy the prerequisites of \cref{thm:MainRSIntro}. The theorem then yields
		\begin{align*}
			\PR{\cD\text{ is }\LR \rho \ell L}  &= \PR{\cD\text{ does not satisfy }\cP}\\&\ge 1- (2^b-1)\cdot \inparen{\frac{(4(L+1))^{4(L+1)}\cdot k}{\eps q}}^{\frac{\eps n}{2(L+1)}}\cdot (2\ell)^{(L+1)n}\eperiod
		\end{align*}
        For $c$ large enough, the right hand side is at least $1-2^{-n}$.        
	\end{proof}
	
	Finally, the following immediate corollary of \cref{thm:RLCThresholdForLDIntro,thm:MainRSIntro}, is very similar to the main theorem of \cite{AGL2023}. Our required lower bound on $q$ is somewhat more demanding than the $q \ge n+k\cdot 2^{10 L /\eps}$ bound stipulated by \cite{AGL2023}.
	\begin{corollary}[List-decodability of random RS codes]\label{cor:RSLD}
		Fix $\rho\in [0,1]$, $L,n\in \N$ and $\eps > 0$ such that $\eps n\ge 2(L+1)(L+2)$. Let $k = Rn$ where $R \le 1-\rho\cdot\inparen{1+\frac 1L} - \eps$ and let $q$ be a prime power satisfying $$q \ge 2^{2(L+1)\cdot\frac{L+1+\eta}\eps}\cdot \frac{k\cdot(4(L+1))^{4(L+1)}}\eps\eperiod$$ Let $\cC\subseteq \F_q^n$ be a random RS code of dimension $k$. Then, $\cC$ is $\ALD \rho L$ with probability at least $1 - 2^{-\eta \cdot n}$.
	\end{corollary}

	\section{LCL Properties of Random Linear Codes}\label{sec:RLCLCL}
	The following is a basic property of an RLC.
	\begin{lemma}[Probability that an RLC contains a set]\label{lem:ProbInRLC}
		Let $\cC\subseteq \F_q^n$ be an RLC of rate $R$ and let $A\in \F_q^{n\times b}$. Then
		$$\PR{A\subseteq \cC} = q^{-(1-R)n\cdot \rank A}$$
	\end{lemma}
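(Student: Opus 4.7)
Let $r = \rank A$. Unpacking the definition of an RLC from \cref{sec:intro}, we have $\cC = \ker P$ where $P \in \F_q^{(n-k)\times n}$ is a uniformly random matrix with $k = Rn$. The plan is to rewrite the event $A \subseteq \cC$ as a linear condition on $P$, compute the probability that a single row of $P$ satisfies this condition, and then use the independence of the rows of $P$.

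The first step is to observe that $A \subseteq \cC$ means every column of $A$ lies in $\ker P$, which is equivalent to the matrix equation $P \cdot A = 0$. Row by row, this says that each row $p \in \F_q^n$ of $P$ must satisfy $p \cdot A = 0$, i.e., $p$ lies in the left kernel of $A$, which I will denote $\Ker_L(A) := \{v \in \F_q^n \mid vA = 0\}$.

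Next, since $A$ has rank $r$, the subspace $\Ker_L(A)$ has dimension $n - r$, and in particular contains $q^{n-r}$ vectors. Each individual row of $P$ is uniformly distributed in $\F_q^n$, so the probability that a given row lies in $\Ker_L(A)$ is $q^{n-r}/q^n = q^{-r}$. Finally, since $P$ is uniformly random over $\F_q^{(n-k)\times n}$, its $n - k = (1-R)n$ rows are mutually independent, so the events that each row lies in $\Ker_L(A)$ are independent, and we obtain
\[
\PR{A \subseteq \cC} \;=\; \prod_{i=1}^{(1-R)n} \PR{\mrow{P}{i} \in \Ker_L(A)} \;=\; \bigl(q^{-r}\bigr)^{(1-R)n} \;=\; q^{-(1-R)n\cdot \rank A},
\]
as desired. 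There is no real obstacle here; the only thing to be careful about is the direction of the kernel (left versus right) and keeping the count of rows of $P$ straight as $(1-R)n$ rather than $Rn$.
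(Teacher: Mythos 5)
Your proposal is correct and follows essentially the same route as the paper: writing $\cC = \ker P$, translating $A \subseteq \cC$ into the condition that every row of $P$ annihilates $A$, and using independence of the rows together with the fact that a uniform row lies in the left kernel of $A$ (dimension $n - \rank A$) with probability $q^{-\rank A}$. The paper compresses this into one line, while you spell out the per-row kernel computation explicitly; there is no substantive difference.
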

	\begin{proof}
		Write $\cC = \ker P$ where $P\in \F_q^{(1-R)n\times n}$ is uniformly random. Now,
		$$\PR{A\subseteq \cC} = \PR{\forall i\in [(1-R)n]~~\mrow Pi\cdot A = 0} = \prod_{i=1}^{(1-R)n}\PR{\mrow Pi\cdot A = 0} = q^{-(1-R)\cdot n\cdot \rank A}\eperiod$$
	\end{proof}
	
	We are interested in the probability that an RLC $\cC$ of rate $R$ satisfies a given $b$-LCL property $\cP$. In particular, we would like to estimate $\PR{\cC \textrm{ contains }\cV}$ for a given linear profile $\cV\in \cL\inparen{\F_q^b}^n$. As a first-order estimate, it makes sense to first compute the expectation
	\begin{equation}\label{eq:M_vExpectation}
		\E{\inabset{A\in \cM_{\cV}^{\distinct}\mid A\subseteq \cC}} = \sum_{A\in \cM_{\cV}^{\distinct}}\PR{A\subseteq \cC} = \sum_{A\in \cM_{\cV}^{\distinct}} q^{-(1-R)n\cdot \rank A}\ecomma
	\end{equation}
	where the last transition is due to \cref{lem:ProbInRLC}. As we shall see, to understand the likelihood of $\cC$ containing $\cV$ we will need a more nuanced expectation argument, in which we classify the matrices in $\cM_\cV$ according to their row span. For each $U\in \cL\inparen{\F_q^b}$ we define 
	$$\cM_{\cV,U} = \inset{A\in \cM_\cV \mid \rspn(A) = U}.$$
	If there exists $A\in \cM_{\cV,U}$ such that $A\subseteq\cC$, we say that \deffont{$\cC$ contains $(\cV,U)$}. 
	Let $$\cL_{\distinct}\inparen{\F_q^b} = \inset{U\in \cL\inparen{\F_q^b}\mid \forall 1\le i < j\le b \:\: \exists x\in U \textrm{ such that }x_i\ne x_j}$$
	and note that a matrix $A\in \cM_{\cV}$ belongs to $\cM_{\cV}^{\distinct}$ if and only if $\rspn(A) \in \cL_{\distinct}\inparen{\F_q^b}$. Hence,
	$$\max_{U\in \cL_{\distinct}\inparen{\F_q^b}}\PR{\exists A\in \cM_{\cV,U},~A\subseteq \cC} \le\PR{\cC \textrm{ contains }\cV} \le \sum_{U\in \cL_{\distinct}\inparen{\F_q^b}} \PR{\exists A\in \cM_{\cV,U},~A\subseteq \cC}\eperiod$$
	Observe that the left-hand and right-hand sides differ by a factor of at most $\inabs{\cL\inparen{\F_q^b}}\le q^{b^2}$. For constant $b$, this is merely polynomial in $q$ and will thus end up being negligible. 
	We therefore turn to estimating the probability that $\cC$ contains $(\cV,U)$ for a given fixed $U\in \cL\inparen{\F_q^b}$. Analogously to \cref{eq:M_vExpectation}, we have
	\begin{equation}\label{eq:M_vuExpectation}
		\E{\inabset{A\in \cM_{\cV,U}\mid A\subseteq \cC}}= \sum_{A\in \cM_{\cV,U}} q^{-(1-R)n\cdot \rank A} = \sum_{A\in \cM_{\cV,U}} q^{-(1-R)n\cdot \dim U} = \inabs{\cM_{\cV,U}}\cdot q^{-(1-R)n\cdot \dim U}\eperiod
	\end{equation}
	To bound the right-hand side we need to bound the term $\inabs{\cM_{\cV,U}}$. To this end, we define 
	$$\cM^*_{\cV,U} = \inset{A\in \cM_\cV \mid \rspn(A) \subseteq U}$$
	and note that $\cM_{\cV,U}\subseteq \cM^*_{\cV,U}$. Notice that even when
	$U\in \cL_{\distinct}\inparen{\F_q^b}$, the set $\cM^*_{\cV,U}$ will also include some matrices with non-distinct columns, in contrast to $\cM_{\cV,U}$. In particular, observe that the all-zero matrix is always an element of $\cM^*_{\cV,U}$.
	
	The cardinality of $\cM^*_{\cV,U}$ is easier to compute than that of $\cM_{\cV,U}$ since the former is a linear subspace of $\F_q^{n\times b}$. Indeed, a matrix $A\in \F_q^{n\times b}$ belongs to $\cM^*_{\cV,U}$ if and only if $\mrow Ai \in \cV_i\cap U$ for all $i\in [n]$. Hence, 
	\begin{equation}\label{eq:dimM*}
		\dim \cM^*_{\cV,U} = \sum_{i=1}^n \dim\inparen{\cV_i\cap U}\eperiod
	\end{equation}
	Defining the potential function \footnote{Although $n$ does not appear explicitly as a parameter to $\pdeg(\cV,U,R)$, it is given implicitly as the length of the profile $\cV$. Thus, \cref{eq:degDef} is indeed a proper definition.}
	\begin{equation}\label{eq:degDef}
		\pdeg\inparen{\cV,U,R} := \sum_{i=1}^n \dim\inparen{\cV_i\cap U} - (1-R)n\cdot \dim U\ecomma
	\end{equation}
	we conclude from \cref{eq:M_vuExpectation} that
	\begin{align*}\E{\inabset{A\in \cM_{\cV,U}\mid A\subseteq \cC}} &= \inabs{\cM_{\cV,U}}\cdot q^{-(1-R)n\cdot \dim U} \le \inabs{\cM^*_{\cV,U}}\cdot q^{-(1-R)n\cdot \dim U} \\&= q^{\dim \cM^*_{\cV,U}}\cdot q^{-(1-R)n\cdot \dim U} \\&= q^{\pdeg\inparen{\cV,U,R}}\eperiod\numberthis \label{eq:degBound}\end{align*}
	
	When $\pdeg(\cV,U,R) \ge \Omega(n)$ we shall say that the potential is \deffont{large}, whereas a potential satisfying $\pdeg(\cV,U,R) \le -\Omega(n)$ will be called \deffont{small}. In our informal discussion we will ignore the middle case in which a potential is neither large nor small. To justify this dichotomy, observe that a potential is ``almost always'' either large or small. Namely, if $-\omega(n)\le \pdeg(\cV,U,R) \le \omega(n)$ then for any fixed $\eps$ we have $\pdeg(\cV,U,R+\eps) = \pdeg(\cV,U,R) + \eps n\cdot \dim U \ge \Omega(n)$ and $\pdeg(\cV,U,R-\eps) = \pdeg(\cV,U,R) - \eps n\cdot \dim U \le -\Omega(n)$.
	
	\cref{eq:degBound} yields an immediate upper bound on the probability that $\cC$ contains $(\cV,U)$. Namely, Markov's bound yields
	\begin{equation}\label{eq:degMarkov}
		\PR{\exists A\in \cM_{\cV,U},~A\subseteq \cC} =  \PR{\inabset{A\in \cM_{\cV,U}\mid A\subseteq \cC} \ge 1} \le \E{\inabset{A\in \cM_{\cV,U}\mid A\subseteq \cC}} \le q^{\pdeg\inparen{\cV,U,R}}\eperiod
	\end{equation}
	Hence, $\pdeg\inparen{\cV,U,R}$ being large is a necessary condition for $\cC$ to be likely to contain $(\cV,U)$. One immediately wonders whether this is also a sufficient condition. The answer turns out to be no. In the following example, $\cC$ is very unlikely to contain $(\cV,U)$ despite $\pdeg\inparen{\cV,U,R}$ being large.
	\
	\begin{example}\label{ex:degMarkovNotTight}
		Let $b=2$ and $U = \F_q^2$. Define $\cV\in \cL\inparen{\F_q^2}^n$ by
		$$
		\cV_i = \begin{cases}
			\F_q^2&\text{if }1\le i\le \frac n 2\\
			\inset{x\in \F_q^2\mid x_1=x_2}&\text{if }\frac n2\le i\le n\eperiod
		\end{cases}
		$$
		Then,
		\begin{align*}
			\pdeg\inparen{\cV,\F_q^2,R} &= \sum_{i=1}^n \dim\inparen{\cV_i\cap \F_q^2} - (1-R)n\cdot \dim \F_q^2 =  \sum_{i=1}^n \dim\inparen{\cV_i} - 2(1-R)n \\ &= \frac n2\cdot 2 + \frac n2\cdot 1 - 2(1-R)n = n\cdot\inparen{2R-\frac 12}\eperiod
		\end{align*}
		In particular, taking $R = \frac 13$ yields $\pdeg\inparen{\cV,\F_q^2,R} = \frac n6 \ge \Omega(n)$, which is large. 
		
		We claim that, in spite of the above, an RLC $\cC$ of rate $\frac 13$ is very unlikely to contain $(\cV,\F_q^2)$. Indeed, suppose that $\cC$ contains a matrix $A\in \cM_{\cV,\F_q^2}$. Let $A'\in \F_q^{n\times 1}$ consist of the difference between the two columns of $A$.  Observe that $A'$ must belong to the set $\cM_{\cV',\F_q^1}$, where $\cV'\in \cL\inparen{\F_q^1}^n$ is given by
		$$
		\cV'_i = \begin{cases}
			\F_q&\text{if }1\le i\le \frac n 2\\
			\{0\}&\text{if }\frac n2\le i\le n\eperiod
		\end{cases}
		$$
		Indeed, since $A$ satisfies $\cV$, the matrix $A'$ must satisfy $\cV'$, and since $\rspn(A) = \F_q^2$, it must hold that $\rspn(A') = \F_q$. Thus, to contain $(\F_q^2,\cV)$, the code $\cC$ must also contain $(\F_q^1,\cV')$. However, \cref{eq:degMarkov} bounds that probability of the latter event by $q^{\pdeg\inparen{\cV',\F_q^1,R}}$, where
		$$\pdeg\inparen{\cV',\F_q^1,R} = \sum_{i=1}^n \dim\inparen{\cV'_i} - (1-R)n = \frac n2 - (1-R)n = n\cdot\inparen{R-\frac 12}\eperiod$$
		Taking $R= \frac 13$ yields
		$$\PR{\cC \textrm{ contains }(\cV,\F_q^2)}\le \PR{\cC \textrm{ contains }(\cV',\F_q^1)} \le q^{\pdeg\inparen{\cV',\F_q^1,\frac13}} = q^{-\frac n6} \le q^{-\Omega(n)}\eperiod$$
	\end{example}
	
	In \cref{ex:degMarkovNotTight}, the expectation bound based on $\pdeg(\cV,U,R)$ is not tight because, informally, the constraints presented by $\cV$ are \emph{skewed} towards a certain part of $\F_q^2$ (in this case, the difference between the two coordinates). Thus, a tighter bound is obtained by considering the potential $\pdeg(\cV',U',R)$ (here $U' = \F_q^1$), which we informally think of as \deffont{implied} by $(\cV,U)$. As we show in the following \deffont{threshold proposition}, considering $\pdeg(\cV,U)$ as well as $\pdeg(\cV',U')$ for all $(\cV',U')$ implied by $(\cV,U)$, yields a necessary and sufficient condition for $\cC$ to contain $(\cV,U)$ with high probability.
	
	\begin{proposition}[RLC Thresholds for linear profiles]\label{prop:RLCThresholdForProfile}
		Let $n\in \N$, and let $q$ be a prime power (which may depend on $n$). Let $\cC\subseteq \F_q^n$ be an RLC of rate $R\in [0,1]$. 
		
		Fix $b\in \N$ and $U\in \cL\inparen{\F_q^b}\setminus \inset{\inset 0}$ and let $\cV\in \inparen{\cL\inparen{\F_q^b}}^n$. Let $$M = \min\inset{\pdeg\inparen{\cV,U,R} - \pdeg\inparen{\cV,W,R}\mid W\in \cL\inparen{\F_q^b}\textrm{ and }W\subsetneq U}\eperiod$$ The following then holds.
		\begin{enumerate}
			\item If $M < 0$ then $\PR{\cC \textrm{ contains }(\cV,U)} \le q^{M}$.
			\item If $M > 0$ then $\PR{\cC \textrm{ contains }(\cV,U)} \ge 1 - q^{-M+b^2}$.
		\end{enumerate}
	\end{proposition}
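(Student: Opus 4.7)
My plan is to handle the two parts by quite different routes. Part 1 will follow from a reduction of $(\cV, U)$ to a pair of lower ambient dimension, together with an application of \cref{eq:degMarkov} in that reduced space. Part 2 will be a second moment (Paley--Zygmund) calculation on the random variable counting the matrices in $\cM_{\cV, U}$ contained in $\cC$.

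For Part 1, I would pick $W_0 \subsetneq U$ achieving the minimum in the definition of $M$, and fix a linear surjection $T \colon \F_q^b \to \F_q^{b - \dim W_0}$ with $\ker T = W_0$. For any $A \in \cM_{\cV, U}$ contained in $\cC$, the matrix $A' := A T^\top$ has rows $T(\mrow A i) \in T(\cV_i)$, row span $T(\rspn(A)) = T(U)$, and columns that are $\F_q$-linear combinations of columns of $A$ and hence still lie in $\cC$. Setting $\cV'_i := T(\cV_i)$ and $U' := T(U)$, the event that $\cC$ contains $(\cV, U)$ implies the event that $\cC$ contains $(\cV', U')$, so \cref{eq:degMarkov} applied in the smaller ambient space $\F_q^{b - \dim W_0}$ yields $\PR{\cC \text{ contains } (\cV, U)} \le q^{\deg(\cV', U', R)}$. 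Since $W_0 \subseteq U$, one verifies $\dim(T(\cV_i) \cap T(U)) = \dim(\cV_i \cap U) - \dim(\cV_i \cap W_0)$ and $\dim T(U) = \dim U - \dim W_0$, from which $\deg(\cV', U', R) = \deg(\cV, U, R) - \deg(\cV, W_0, R) = M$.

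For Part 2, let $X$ count matrices $A \in \cM_{\cV, U}$ with $A \subseteq \cC$, and apply Paley--Zygmund: $\PR{X \ge 1} \ge (\E X)^2 / \E{X^2}$. The first moment is $\E X = \inabs{\cM_{\cV, U}} \cdot q^{-(1-R) n \dim U}$, and I would show this is very close to $q^{\deg(\cV, U, R)}$: from $\inabs{\cM_{\cV, U}} \ge \inabs{\cM^*_{\cV, U}} - \sum_{W \subsetneq U} \inabs{\cM^*_{\cV, W}}$ together with the immediate estimate $\inabs{\cM^*_{\cV, W}} / \inabs{\cM^*_{\cV, U}} \le q^{-M - (1-R) n (\dim U - \dim W)}$ (a rearrangement of the definition of $M$), the ratio $\inabs{\cM_{\cV, U}} / \inabs{\cM^*_{\cV, U}}$ is at least $1 - O(q^{b^2 - M - (1-R) n})$. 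The second moment expands as $\E{X^2} = \sum_{A_1, A_2 \in \cM_{\cV, U}} q^{-(1-R) n \rank(A_1 \mid A_2)}$; I would partition pairs by the row span $V := \rspn(A_1 \mid A_2) \subseteq U \oplus U$, whose two coordinate projections are both surjective onto $U$. The ``diagonal'' case $V = U \oplus U$ contributes at most $q^{2 \deg(\cV, U, R)}$, matching the main term $(\E X)^2$. For each proper $V$, writing $V = \inset{(x, \phi(x) + y) : x \in U,\, y \in K}$ with $K := \inset{y \in U : (0, y) \in V} \subsetneq U$ and $\phi \colon U \to U$ linear, I would show $\deg(\cV \times \cV, V, R) \le \deg(\cV, U, R) + \deg(\cV, K, R) \le 2 \deg(\cV, U, R) - M$, where $(\cV \times \cV)_i := \cV_i \times \cV_i$ and the second inequality uses $K \subsetneq U$. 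Summing over the at most $O(q^{b^2})$ such $V$'s yields $\E{X^2} \le q^{2 \deg(\cV, U, R)} (1 + O(q^{b^2 - M}))$, and Paley--Zygmund delivers the claimed bound.

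The main obstacle is this second moment estimate, specifically extracting from each off-diagonal $V$ a proper subspace $K \subsetneq U$ whose degree $\deg(\cV, K, R)$ controls the contribution of $V$. This is the one place where the hypothesis $M > 0$ enters, and the exponent $b^2$ in the final bound is essentially the $\log_q$ of the number of relevant $V$'s, bounded by the size of the subspace lattice of $U \oplus U$.
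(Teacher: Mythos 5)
Your Part 1 is essentially the paper's argument: the paper also picks a minimizing $W$, pushes every $A\in\cM_{\cV,U}$ forward through a map killing $W$ (there via a matrix $B$ with $xB=\varphi(x)$ on $U$, in your case via $T$ on all of $\F_q^b$), uses linearity of $\cC$ to keep the image inside $\cC$, and applies the first-moment bound \cref{eq:degMarkov} in the quotient; your dimension bookkeeping $\dim\inparen{T(\cV_i)\cap T(U)}=\dim(\cV_i\cap U)-\dim(\cV_i\cap W_0)$ is correct because $W_0\subseteq U$. Part 2, however, is a genuinely different route. The paper never computes a second moment: it observes that $\cM^*_{\cV,U}\cap\cC^b$ is itself an $\F_q$-linear space and lower-bounds its dimension \emph{deterministically} by $\deg(\cV,U,R)$, using $\dim\inparen{\cC^b\cap\F_q^{n\times U}}=Rn\cdot\dim U$ and the dimension formula; then a single Markov/union bound over proper subspaces $W\subsetneq U$ (each contributing expected count at most $q^{\deg(\cV,W,R)}\le q^{\deg(\cV,U,R)-M}$) shows the degenerate-span matrices cannot exhaust this space, which gives the clean bound $1-q^{-M+b^2}$. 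Your Paley--Zygmund argument also works: the classification of pairs by $V=\rspn(A_1\mid A_2)\subseteq U\oplus U$, the decomposition $V=\inset{(x,\phi(x)+y)}$ with $K\subsetneq U$, and the resulting bound $\deg(\cV\times\cV,V,R)\le\deg(\cV,U,R)+\deg(\cV,K,R)\le 2\deg(\cV,U,R)-M$ are all valid (the key inequality follows by projecting $(\cV_i\times\cV_i)\cap V$ onto the first factor, whose kernel is $\{0\}\times(\cV_i\cap K)$). Two small caveats: you should reduce to $M>b^2$ (otherwise the claim is vacuous and your first-moment lower bound $\E X\ge q^{\deg(\cV,U,R)}\inparen{1-q^{b^2-M-(1-R)n}}$ could be useless), and your count of relevant $V$ is the number of subspaces of a $2\dim U$-dimensional space, which is about a constant (depending on $b$) times $q^{b^2}$, not at most $q^{b^2}$; so as written you obtain $1-O_b\inparen{q^{b^2-M}}$ rather than the literal $1-q^{-M+b^2}$. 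This constant-factor slack is immaterial for every application in the paper, but it is the price of the second-moment route compared with the paper's deterministic dimension count, which both avoids the pair bookkeeping and delivers the stated constant-free bound.
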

	Before proving \cref{prop:RLCThresholdForProfile}, we discuss the proposition and its implications. We first note that $\pdeg\inparen{\cV,\inset0,R} = 0 $ for every $\cV$ and $R$. Hence, the condition $M > 0$ in the proposition implies in particular that $\pdeg\inparen{\cV,U,R} = \pdeg\inparen{\cV,U,R} - \pdeg\inparen{\cV,\inset 0,R} > 0$. As we have already seen in \cref{eq:degBound}, large potential is indeed a necessary condition for $\cC$ to be likely to contain $(\cV,U)$.
	
	Let us now revisit \cref{ex:degMarkovNotTight} in light of \cref{prop:RLCThresholdForProfile}. In that example, consider the vector space $W = \inset{x\in \F_q^2\mid x_1=x_2}$. It is not difficult to see that $\pdeg\inparen{\cV,W,R} = Rn$, which is larger than $\pdeg\inparen{\cV,\F_q^2,R} = \inparen{2R-\frac 12}n$ whenever $R < \frac{1}{2} $. \cref{prop:RLCThresholdForProfile} thus implies that an RLC of rate $< \frac{1}{2}$ is unlikely to contain $\inparen{\cV,\F_q^2}$. In \cref{ex:degMarkovNotTight} we reached a similar conclusion by projecting the rows of a matrix $A\in \cM_{\cV,\F_q^2}$ onto their difference. The kernel of this projection $\F_q^2\to \F_q^1$ is exactly $W$. The proof of the first part of \cref{prop:RLCThresholdForProfile} generalizes this method. The second part of the proposition says that this method is essentially tight.
	
	We refer to \cref{prop:RLCThresholdForProfile} as a \deffont{threshold result} since, given $U\in \cL\inparen{\F_q^b}\setminus \inset{\inset 0}$ and $\cV\in \inparen{\cL\inparen{\F_q^b}}^n$, the proposition gives a \deffont{threshold rate} $R_{\cV,U}$ such that, essentially, an RLC of rate below the threshold is very unlikely to contain $(\cV,U)$, while an RLC of rate above the threshold almost surely contains $(\cV,U)$. The threshold rate is given by
	\begin{equation}\label{eq:RVUDef}
		R_{\cV,U} := \max_{\substack{W\in \cL\inparen{\F_q^b}\\ W\subsetneq U}} \inset{1-\frac{\sum_{i=1}^n \inparen{\dim\inparen{\cV_i\cap U}- \dim \inparen{\cV_i\cap W}}}{n\cdot \inparen{\dim U - \dim W}}}\eperiod
	\end{equation}    
	Indeed, one can verify that the term $M$ in \cref{prop:RLCThresholdForProfile} is positive if and only if $R > R_{\cV,U}$. In other words,
        \begin{equation}\label{eq:RVUDefAlt}
            R_{\cV,U} = \min\inset{R\in [0,1]\mid \pdeg(\cV,U,R) \ge \pdeg (\cV,W,R) ~\textrm{ for every linear subspace } W\subseteq U}\eperiod
        \end{equation}
	
	Recall that a code $\cC$ is said to \deffont{contain $\cV$} if and only if it contains $\inparen{\cV,U}$ for some $U\in \cL_{\distinct}\inparen{\F_q^{b}}$. We thus define 
	\begin{equation}\label{eq:RVDef}
		R_{\cV} := \min_{U\in \cL_{\distinct}\inparen{\F_q^{b}}} \inset{R_{\cV,U}} = \min_{U\in \cL_{\distinct}\inparen{\F_q^{b}}} \max_{\substack{W\in \cL\inparen{\F_q^b}\\ W\subsetneq U}} \inset{1-\frac{\sum_{i=1}^n \inparen{\dim\inparen{\cV_i\cap U}- \dim \inparen{\cV_i\cap W}}}{n\cdot \inparen{\dim U - \dim W}}}\eperiod
	\end{equation}
	
	Let $\cP$ be a $b$-LCL property whose associated profile family is $\cF$. Recall that $\cC$ \deffont{satisfies} $\cP$, if $\cC$ contains some $\cV\in \cF$. We thus define
	\begin{equation}\label{eq:R_PDef}
		R_{\cP} := \min_{\cV\in\cF} \inset{R_\cV} = \min_{\substack{\cV\in\cF\\U\in \cL_{\distinct}\inparen{\F_q^{b}}}} \inset{R_{\cV,U}}\eperiod
	\end{equation}
	We now state and prove \cref{thm:RLCThresholdForLCL}---a more detailed version of \cref{thm:RLCThresholdForLCLIntro}. \cref{thm:RLCThresholdForLCL} states that \cref{eq:R_PDef} correctly characterizes the \deffont{threshold rate} for LCL properties of RLCs.
	\begin{theorem}[More detailed version of \cref{thm:RLCThresholdForLCLIntro}]\label{thm:RLCThresholdForLCL}
		Let $\cP$ be a $b$-LCL property of codes in $\F_q^n$ and let $\cF\subseteq \cL\inparen{\F_q^b}^n$ be the corresponding family of profiles. Let $\cC\subseteq \F_q^n$ be an RLC of rate R. Define $R_\cP$ as in \cref{eq:R_PDef}. The following now holds
		\begin{enumerate}
			\item If $R \ge R_{\cP}+\eps$ then $\PR{\cC\textrm{ satisfies }\cP} \ge 1 - q^{-\eps n+b^2}$.
			\item If $R \le R_{\cP}-\eps$ then $\PR{\cC\textrm{ satisfies }\cP} \le \inabs{\cF}\cdot q^{-\eps n + b^2}$.
			\item In particular, if $R \le R_{\cP}-\eps$ and $q \ge 2^{\frac{2\log_2 |\cF|}{\eps n}}$ then $\PR{\cC\textrm{ satisfies }\cP} \le q^{-\frac{\eps n}2 + b^2}$.
		\end{enumerate}
	\end{theorem}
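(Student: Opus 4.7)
The plan is to reduce \cref{thm:RLCThresholdForLCL} to \cref{prop:RLCThresholdForProfile} applied to every pair $(\cV, U)$ with $\cV \in \cF$ and $U \in \cL_\distinct\inparen{\F_q^b}$, glued together by a union bound in the low-rate direction and by a single witness in the high-rate direction. The key preliminary is to translate the quantity $M$ from \cref{prop:RLCThresholdForProfile} into the language of the threshold $R_{\cV,U}$ defined in \cref{eq:RVUDef}. A direct expansion of \cref{eq:degDef} gives, for every $W \subsetneq U$,
\[ \deg\inparen{\cV, U, R} - \deg\inparen{\cV, W, R} = n \inparen{\dim U - \dim W} \inparen{R - r_W}, \]
where $r_W := 1 - \frac{\sum_{i=1}^n \inparen{\dim(\cV_i \cap U) - \dim(\cV_i \cap W)}}{n \inparen{\dim U - \dim W}}$ is exactly the $W$-indexed term in the maximum defining $R_{\cV,U}$. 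From this identity I would extract two one-sided consequences: if $R \ge R_{\cV,U} + \eps$ then every summand in the defining minimum is at least $n \cdot 1 \cdot \eps$, so $M \ge n\eps$; and if $R \le R_{\cV,U} - \eps$ then choosing $W^\ast$ to be the argmax of $r_W$ makes the corresponding summand at most $-n\eps$, so $M \le -n\eps$.

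For part 1, if $R \ge R_\cP + \eps$ then unrolling the nested minima in \cref{eq:R_PDef} produces some $\cV^\ast \in \cF$ and $U^\ast \in \cL_\distinct\inparen{\F_q^b}$ attaining $R_\cP = R_{\cV^\ast, U^\ast}$, so that $R \ge R_{\cV^\ast, U^\ast} + \eps$. Plugging $M \ge n\eps$ into \cref{prop:RLCThresholdForProfile}(2) then yields $\PR{\cC \text{ contains } (\cV^\ast, U^\ast)} \ge 1 - q^{-\eps n + b^2}$, and since $U^\ast \in \cL_\distinct\inparen{\F_q^b}$ this event entails that $\cC$ contains a matrix in $\cM^\distinct_{\cV^\ast}$, hence contains $\cV^\ast$, and hence satisfies $\cP$.

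For part 2, if $R \le R_\cP - \eps$ then the definition of $R_\cP$ forces $R \le R_{\cV,U} - \eps$ for every $\cV \in \cF$ and every $U \in \cL_\distinct\inparen{\F_q^b}$, so $M \le -n\eps$ uniformly and \cref{prop:RLCThresholdForProfile}(1) gives $\PR{\cC \text{ contains } (\cV, U)} \le q^{-\eps n}$. A union bound over at most $|\cF| \cdot \inabs{\cL\inparen{\F_q^b}} \le |\cF| \cdot q^{b^2}$ such pairs then delivers the stated $|\cF| \cdot q^{-\eps n + b^2}$, using the standard estimate $\inabs{\cL\inparen{\F_q^b}} \le q^{b^2}$ on the number of subspaces. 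Part 3 is immediate: the hypothesis $q \ge 2^{2\log_2 |\cF| / (\eps n)}$ rewrites as $|\cF| \le q^{\eps n / 2}$, which absorbs the $|\cF|$ factor into half of the $q^{-\eps n}$ budget.

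I do not expect a serious obstacle, since \cref{prop:RLCThresholdForProfile} has already absorbed all of the probabilistic content and only combinatorial bookkeeping remains. The one nontrivial step is the algebraic identity linking $M$ with $R_{\cV,U}$, which is a one-line computation once the definitions are unpacked; the only other place that requires mild care is ensuring, in the high-rate direction, that the specific $U^\ast$ realizing the outer minimum of \cref{eq:R_PDef} lies in $\cL_\distinct\inparen{\F_q^b}$, which is automatic because the outer minimum in \cref{eq:RVDef} is taken over exactly that set.
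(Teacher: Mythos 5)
Your proposal is correct and follows essentially the same route as the paper's proof: translate the quantity $M$ of \cref{prop:RLCThresholdForProfile} into the rate language via the identity $\deg(\cV,U,R)-\deg(\cV,W,R)=n(\dim U-\dim W)(R-r_W)$, apply part 2 of that proposition to a single pair $(\cV^\ast,U^\ast)$ attaining $R_\cP$ in the high-rate direction, and apply part 1 with a union bound over $|\cF|\cdot\inabs{\cL\inparen{\F_q^b}}\le|\cF|\cdot q^{b^2}$ pairs in the low-rate direction. No gaps.
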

	\begin{proof}
		Suppose first that $R\ge R_\cP+\eps$.  Let $\cV\in \cF$ and $U\in \cL_{\distinct}\inparen{\F_q^b}$ such that $R_\cP = R_{\cV,U}$, so $R\ge R_{\cV,U}+\eps$. Let $M$ be as in \cref{prop:RLCThresholdForProfile}, namely
		$$M = \min\inset{\pdeg\inparen{\cV,U,R} - \pdeg\inparen{\cV,W,R}\mid W\in \cL\inparen{\F_q^b}\textrm{ and }W\subsetneq U}\eperiod$$
		Fix a subspace $W\subsetneq U$ and write $d = \dim U - \dim W$ and $g = \sum_{i=1}^n \inparen{\dim\inparen{\cV_i\cap U}- \dim\inparen{\cV_i\cap W}}$. By \cref{eq:RVUDef}, $R_{\cV,U} \ge 1- \frac{g}{nd}$. Thus,
		\begin{align*}
			M:=\pdeg(\cV,U,R)- \pdeg(\cV,W,R) = g - (1-R)n\cdot d\ge g - (1-R_{\cV,U}-\eps)n\cdot d
			\ge \eps n d\ge \eps n
		\end{align*}
		
		Therefore, by \cref{prop:RLCThresholdForProfile},
		$$\PR{\cC\textrm{ satisfies }\cP}\ge \PR{\cC\textrm{ contains }(\cV,U)} \ge 1-q^{-M+b^2}\ge 1-q^{-\eps n+b^2}\eperiod$$
		
		We turn to proving the second claim. Let $R \le R_{\cP}-\eps$. Let $\cV\in \cF$ and $U\in \cL_{\distinct}\inparen{\F_q^b}$. Let $W\subsetneq U$ be a linear space such that 
		$R_{\cV,U} = 1-\frac{g}{nd}$ where $d = \dim U - \dim W$ and $g = \sum_{i=1}^n \inparen{\dim\inparen{\cV_i\cap U}- \dim\inparen{\cV_i\cap W}}$. Then, $R\le R_{\cP}-\eps \le R_{\cV,U}-\eps = 1-\frac{g}{nd}-\eps$.
		Hence, 
		$$M_{\cV,U} \leq \pdeg(\cV,U,R)-\pdeg(\cV,W,R) = g-nd(1-R) \le -\eps nd\ecomma$$
		so 
		$$\PR{\cC \textrm{ contains }(\cV,U)}\le q^{-\eps nd}\le q^{-\eps n}$$
		by \cref{prop:RLCThresholdForProfile}.
		Now,
		$$
		\PR{\cC \textrm{ satisfies }\cP} \le \sum_{\cV\in \F}\sum_{U\in \cL_{\distinct}\inparen{\F_q^b}} \PR{\cC \textrm{ contains }(\cV,U)} \le |\cF|\cdot \inabs{\cL_{\distinct}\inparen{\F_q^b}}\cdot q^{-\eps n} = |\cF|\cdot q^{-\eps n + b^2}\eperiod
		$$
		
		Finally, the third claim follows immediately from the second claim.            
	\end{proof}
	
	We turn to proving \cref{prop:RLCThresholdForProfile}.

	\begin{proof}[Proof of \cref{prop:RLCThresholdForProfile}]
		We begin with the first statement. Suppose that $M < 0$ and let $W\in \cL\inparen{\F_q^b}$ such that $\pdeg(V,U,R) - \pdeg(V,W,R) = M$. Write $d = \dim U- \dim W$ and let $\varphi:U\to \F_q^d$ be a linear map such that $\ker \varphi = W$. Let $B\in \F_q^{b\times d}$ such that $xB = \varphi(x)$ for all $x\in U$.
		
		Since $\cC$ is a linear code, to contain a matrix $A\in \F_q^{n\times b}$ the code must also contain the matrix $AB$, whose columns are merely linear combinations of the columns of $A$. Hence, in order to contain a matrix from $\cM_{\cV,U}$, the code $\cC$ must also contain some matrix from the set $\cM_{\cV,U}B:=\inset{AB\mid A\in \cM_{\cV,U}}$. Observe that every matrix in the latter set is of rank $d$. Indeed, if $A\in \cM_{\cV,U}$ then $$\rspn (AB) = \varphi\inparen{\rspn(A)} = \varphi(U) = \F_q^d\eperiod$$ Therefore, by \cref{lem:ProbInRLC} 
		\begin{align*}
			\PR{\cC \textrm{ contains }(V,U)} &\le \PR{\exists D\in \cM_{\cV,U}B,~D\subseteq \cC}\le \sum_{D\in \cM_{\cV,U}B}{\PR{D\subseteq \cC}} \\&= \sum_{D\in \cM_{\cV,U}B}q^{-(1-R)n\cdot \rank D} = \inabs{\cM_{\cV,U}B}\cdot q^{-(1-R)nd}
		\end{align*}
		
		To bound $\inabs{\cM_{\cV,U}B}$, we consider the linear space $\cM_{\cV,U}^*B:=\inset{AB\mid A\in \cM^*_{\cV,U}}$, which clearly contains $\cM_{\cV,U}B$. Thus,
		\begin{align*}
			\log_q\inabs{\cM_{\cV,U}B} &\le \log_q\inabs{\cM_{\cV,U}^*B} = \dim \cM_{\cV,U}^*B  = \sum_{i=1}^n \dim\inparen{\varphi\inparen{\cV_i\cap U}} \\ 
			&= \sum_{i=1}^n \inparen{\dim\inparen{\cV_i\cap U}-\dim\inparen{\ker \varphi\cap \cV_i\cap U}} \\ 
			&= \sum_{i=1}^n \inparen{\dim\inparen{\cV_i\cap U}-\dim\inparen{\cV_i\cap W}}\eperiod
		\end{align*}
		Therefore,
		\begin{align*}
			\PR{\cC \textrm{ contains }(V,U)} \le \inabs{\cM_{\cV,U}B}\cdot q^{-(1-R)nd} &\le q^{\sum_{i=1}^n \inparen{\dim\inparen{\cV_i\cap U}-\dim\inparen{\cV_i\cap W}}-(1-R)nd} \\ &= q^{\sum_{i=1}^n \inparen{\dim\inparen{\cV_i\cap U}-\dim\inparen{\cV_i\cap W}}-(1-R)n\inparen{\dim U - \dim W}} \\
			&= q^{\pdeg\inparen{\cV,U,R} - \pdeg\inparen{\cV,W,R}} = q^M\eperiod
		\end{align*}
		
		We turn to proving the second claim. Suppose that \begin{equation}\label{eq:DegDifferenceLowerBound}
			\pdeg(\cV,U,R) - \pdeg(\cV,W,R) \ge M
		\end{equation} 
		for every proper linear subspace $W\subsetneq U$, for some positive $M$. 
		
		For any $b\in \N$, denote
		$\cC^b := \inset{A\in \F_q^{n\times b} \mid A \subseteq \cC}$. Given $W\in \cL\inparen{\F_q^b}$, let
		\[
		\F_q^{n\times W} = \inset{A\in \F_q^{n\times b}\mid \rspn(A)\subseteq W} \eperiod
		\]
		Note that $\dim (\cC^b) = Rnb$ and $\dim (\F_q^{n\times W}) = n\cdot \dim W$. We claim that 
		\begin{equation}\label{eq:dimC_W}
			\dim \inparen{\cC^b\cap \F_q^{n\times W}} = Rn\cdot \dim W\eperiod
		\end{equation}
		Indeed, suppose without loss of generality that the first $\dim W$ coordinates are an information set for $W$. In other words, the projection of a vector $x\in W$ onto its first $\dim W$ coordinates is a bijection from $W$ onto $\F_q^{\dim W}$. Now, consider the linear transformation $\pi:\cC^b\cap \F_q^{n\times W}\to \cC^{\dim W}$ which maps a matrix to the submatrix consisting of its first $\dim W$ columns. It follows readily from the linearity of $\cC$ that $\pi$ is bijective, which implies \cref{eq:dimC_W}. \cref{eq:dimM*,eq:dimC_W} now yield
		\begin{align*}
			\dim\inparen{\cM^*_{\cV,U}\cap \cC^b} &= \dim\inparen{\cM^*_{\cV,U}\cap \cC^b\cap \F_q^{n\times U}} \\&= \dim \cM^*_{\cV,U} + \dim \inparen{\cC^b\cap \F_q^{n\times U}} - \dim \inparen{\cM^*_{\cV,U}+\inparen{\cC^b\cap\F_q^{n\times U}}} 
			\\&\ge \dim \cM^*_{\cV,U} + \dim \inparen{\cC^b\cap \F_q^{n\times U}} - \dim\inparen{\F_q^{n\times U}} \\
			&=\sum_{i=1}^n\dim\inparen{\cV_i\cap U} + Rn\cdot \dim U - n\cdot \dim U \\
			&=\sum_{i=1}^n\dim\inparen{\cV_i\cap U} - (1-R)n\cdot \dim U = \pdeg(\cV,U,R)\eperiod \numberthis\label{eq:dimM*DegBound}
		\end{align*}
		
		Let $A\in \F_q^{n\times b}$ such that $\rspn(A) \subseteq U$. Note that $\rspn(A) = U$ if and only if $\rspn(A) \ne W$ for every linear space $W\subsetneq U$. Thus,         
		$$\inabs{\cM_{\cV,U}\cap \cC^b} = \inabs{\inparen{\cM^*_{\cV,U}\cap \cC^b}\setminus \bigcup_{W\subsetneq U} \inparen{\cM_{\cV,W}\cap \cC^b}}\ge q^{\pdeg(\cV,U,R)}-\sum_{W\subsetneq U}{\inabs{\cM_{\cV,W}\cap \cC^b}\eperiod}$$
		Therefore, by Markov's inequality,
		\begin{align*}
			\PR{\cC \textrm{ does not contain }(V,U)} &= \PR{\cM_{\cV,U}\cap \cC^b = \emptyset} \\&\le \PR{\sum_{W}{\inabs{\cM^*_{\cV,W}\cap \cC^b}\ge q^{\pdeg(\cV,U,R)}}} \\
			&\le \sum_W\E{\inabs{\cM^*_{\cV,W}\cap \cC^b}}\cdot q^{-\pdeg\inparen{\cV,U,R}} \\
			&\le \sum_W q^{\pdeg\inparen{\cV,W,R}}\cdot q^{-\pdeg\inparen{\cV,U,R}} &\textrm{by \cref{eq:degBound}} \\
			&\le q^{b^2}\cdot q^{\pdeg\inparen{\cV,W,R}}\cdot q^{-\pdeg\inparen{\cV,U,R}} & \textrm{since }\inabs{\cL\inparen{\F_q^b}}\le q^{b^2}\\
			&\le q^{b^2-M} & \textrm{by \cref{eq:DegDifferenceLowerBound}}\eperiod
		\end{align*}
	\end{proof}
	
	We end this section with a useful fact. It turns out that threshold rate $R_\cV$, defined in \cref{eq:RVDef}, has another convenient characterization. 
	Concretely, we have the following lemma.
	\begin{lemma}\label{lem:RVAlt}
		Let $n,b\in \N$, $q$ a prime power and $\cV\in \inparen{\cL\inparen{\F_q^b}}^n$. For $R\in [0,1]$, denote
		$$\argmax \inset{\pdeg(\cV,*,R)} \coloneqq \inset{U\in \cL\inparen{\F_q^b}\mid \pdeg(\cV,U,R) = \max_{W\in \cL\inparen{\F_q^b}}\inset{\pdeg(\cV,W,R)}}\eperiod$$            
		The following then holds:
		\begin{enumerate}
			\item For all $R\in [0,R_\cV)$ we have $\argmax \inset{\pdeg(\cV,*,R)} \subseteq \cL\inparen{\F_q^b} \setminus\cL_\distinct\inparen{\F_q^b}$.
			\item For all $R\in (R_\cV,1]$ we have $\argmax \inset{\pdeg(\cV,*,R)} \subseteq \cL_\distinct\inparen{\F_q^b}$.
			\item $\argmax\inset{\pdeg(\cV,*,R_{\cV})}$ contains an element of $\cL\inparen{\F_q^b} \setminus\cL_\distinct\inparen{\F_q^b}$ and an element of $\cL_\distinct\inparen{\F_q^b}$.
		\end{enumerate}
	\end{lemma}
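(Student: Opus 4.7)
The plan rests on two ingredients. The first is a supermodularity inequality for the functional $U\mapsto \deg(\cV,U,R)$ on the subspace lattice $\cL\inparen{\F_q^b}$:
\[
\deg(\cV, U_1, R) + \deg(\cV, U_2, R) \leq \deg(\cV, U_1\cap U_2, R) + \deg(\cV, U_1 + U_2, R),
\]
which follows immediately from the identity $\dim U_1 + \dim U_2 = \dim(U_1\cap U_2) + \dim(U_1 + U_2)$ together with the pointwise inequality $\dim(\cV_i\cap U_1) + \dim(\cV_i\cap U_2) \le \dim(\cV_i\cap U_1\cap U_2) + \dim(\cV_i\cap(U_1+U_2))$, which itself comes from the inclusion $(\cV_i\cap U_1)+(\cV_i\cap U_2)\subseteq \cV_i\cap(U_1+U_2)$. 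The second ingredient is the trivial but crucial observation that if $U^*\in\cL_{\distinct}\inparen{\F_q^b}$ and $U^*\subseteq U$, then $U\in\cL_\distinct\inparen{\F_q^b}$ as well.

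Part 1 is immediate from the characterizations in \cref{eq:RVDef,eq:RVUDef}: if $R < R_\cV$ and $U\in\cL_\distinct\inparen{\F_q^b}$, then $R_{\cV,U}\ge R_\cV > R$, and unpacking the definition of $R_{\cV,U}$ at once produces some $W\subsetneq U$ with $\deg(\cV,W,R) > \deg(\cV,U,R)$, so $U$ cannot be a maximizer.

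Part 2 is the substantive step. Fix a minimizer $U^*\in\cL_\distinct\inparen{\F_q^b}$ achieving $R_{\cV,U^*}=R_\cV$, take $R > R_\cV$, and consider any $U_0\notin \cL_\distinct\inparen{\F_q^b}$. The key observation above rules out $U^*\subseteq U_0$. If $U_0\subseteq U^*$, then $U_0\subsetneq U^*$, and the hypothesis $R > R_{\cV,U^*}$ combined with the definition of $R_{\cV,U^*}$ yields the strict inequality $\deg(\cV,U^*,R) > \deg(\cV,U_0,R)$. Otherwise $U_0$ and $U^*$ are incomparable, so both $U_0\cap U^*\subsetneq U^*$ and $U_0+U^*\supsetneq U^*$; the same $R > R_{\cV,U^*}$ argument gives $\deg(\cV,U^*,R) > \deg(\cV,U_0\cap U^*,R)$, and plugging this into the supermodularity inequality yields $\deg(\cV,U_0+U^*,R) > \deg(\cV,U_0,R)$. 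Either way $U_0\notin\argmax\inset{\deg(\cV,*,R)}$. The main obstacle is the incomparable case — subspaces incomparable to $U^*$ are not directly controlled by $R_{\cV,U^*}$, and the supermodularity trick is exactly what promotes $U_0$ to the larger space $U_0+U^*$, which by the second ingredient remains in $\cL_\distinct\inparen{\F_q^b}$.

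Part 3 is a continuity argument. Since $\cL\inparen{\F_q^b}$ is finite and each $R\mapsto \deg(\cV,U,R)$ is affine, the argmax is constant on a small punctured one-sided neighborhood of $R_\cV$; call these sets $\cA_-$ (just below) and $\cA_+$ (just above). By the continuity of each $\deg(\cV,U,\cdot)$ and of $R\mapsto \max_U\deg(\cV,U,R)$, every element of $\cA_-\cup\cA_+$ lies in $\argmax\inset{\deg(\cV,*,R_\cV)}$, and by Parts 1 and 2 we get $\cA_-\subseteq \cL\inparen{\F_q^b}\setminus\cL_\distinct\inparen{\F_q^b}$ and $\cA_+\subseteq \cL_\distinct\inparen{\F_q^b}$. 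The edge cases $R_\cV=0$ and $R_\cV=1$ (where one side of the neighborhood is absent) are handled directly from the formula for $\deg$: at $R=0$ the maximum value is $0$ and is attained by $\{0\}\notin\cL_\distinct\inparen{\F_q^b}$, while at $R=1$ the maximum is attained by $\F_q^b\in\cL_\distinct\inparen{\F_q^b}$, supplying the missing side in each case.
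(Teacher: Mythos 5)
Your proof is correct, and it uses the same two ingredients as the paper --- the supermodular inequality $\deg(\cV,U_1,R)+\deg(\cV,U_2,R)\le \deg(\cV,U_1\cap U_2,R)+\deg(\cV,U_1+U_2,R)$ (the paper's \cref{lem:degOfCap}) and the characterizations \cref{eq:RVUDef,eq:RVDef} --- but deploys them in a mirrored way. The paper proves the contrapositives: for statement 2 it fixes a maximizer $W\notin\cL_\distinct\inparen{\F_q^b}$ and, for each $U\in\cL_\distinct\inparen{\F_q^b}$, compares $U$ against the meet $W\cap U\subsetneq U$ (using that no subspace of a non-distinct space is distinct) to conclude $R\le R_{\cV,U}$. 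You instead argue directly: fixing a distinct minimizer $U^*$ with $R_{\cV,U^*}=R_\cV$, you eliminate any non-distinct candidate $U_0$ either by $U^*$ itself (when $U_0\subsetneq U^*$) or by promoting it to the join $U_0+U^*$ via supermodularity, using the upward-closedness of $\cL_\distinct\inparen{\F_q^b}$ to rule out $U^*\subseteq U_0$. The two arguments are dual uses of the same lattice inequality and carry the same content; your version buys a slightly more constructive picture (it exhibits which subspace beats the non-distinct candidate), while the paper's is marginally shorter since non-strict inequalities suffice for the contrapositive. Your treatment of statement 3 is also more explicit than the paper's one-line continuity remark, in particular in supplying the boundary cases $R_\cV=0$ and $R_\cV=1$ (where the argument correctly notes that $\inset{0}$, respectively $\F_q^b$, is a maximizer); this is a welcome level of detail, with the only caveat --- shared with the paper --- that the degenerate case $b=1$, where $\cL_\distinct\inparen{\F_q^b}$ is all of $\cL\inparen{\F_q^b}$, is implicitly excluded.
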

	The following is an immediate corollary obtained by taking the contrapositives of the two statements in \cref{lem:RVAlt}.
	\begin{corollary}\label{cor:RVAltContrapos}
		Let $n,b\in \N$, $q$ a prime power and $\cV\in \inparen{\cL\inparen{\F_q^b}}^n$. Let $R\in [0,1]$. The following holds.
		\begin{enumerate}
			\item If $\argmax \inset{\pdeg(\cV,*,R)}$ contains an element of $ \cL_{\distinct}\inparen{\F_q^b}$ then $R\ge R_\cV$.
			\item If $\argmax \inset{\pdeg(\cV,*,R)}$ contains an element of $ \cL\inparen{\F_q^b}\setminus\cL_{\distinct}\inparen{\F_q^b}$ then $R\le R_\cV$.
		\end{enumerate}        
	\end{corollary}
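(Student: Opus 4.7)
The plan is to exploit the supermodularity of the map $U \mapsto \deg(\cV, U, R)$ on the subspace lattice $\cL(\F_q^b)$. For any $A, B \in \cL(\F_q^b)$, the identity $\dim A + \dim B = \dim(A \cap B) + \dim(A + B)$ makes the $(1-R)n\cdot\dim U$ term in \cref{eq:degDef} cancel between the two sides, so supermodularity reduces to the coordinatewise inequality
\[
\dim(\cV_i \cap A) + \dim(\cV_i \cap B) \le \dim(\cV_i \cap A \cap B) + \dim(\cV_i \cap (A + B)),
\]
which follows from the dimension identity applied to the subspaces $\cV_i \cap A$ and $\cV_i \cap B$, combined with the containment $(\cV_i \cap A) + (\cV_i \cap B) \subseteq \cV_i \cap (A+B)$. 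A standard consequence is that $\argmax\inset{\deg(\cV,*,R)}$ is closed under $\cap$ and $+$. I will also repeatedly use the trivial fact that $\cL_\distinct(\F_q^b)$ is \emph{upward closed} under inclusion, since a superspace of a coordinate-separating subspace still separates those coordinates.

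Part 1 is then immediate from the definition of $R_\cV$: if some $U \in \cL_\distinct(\F_q^b)$ lay in $\argmax\inset{\deg(\cV,*,R)}$, then $\deg(\cV,U,R) \ge \deg(\cV,W,R)$ for every $W \subsetneq U$, which rearranges via \cref{eq:RVUDef} into $R_{\cV,U} \le R$; minimizing over $\cL_\distinct(\F_q^b)$ yields $R_\cV \le R$, contradicting $R < R_\cV$.

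For part 2, pick $U_0 \in \cL_\distinct(\F_q^b)$ attaining $R_{\cV, U_0} = R_\cV$. Since $R > R_\cV = R_{\cV, U_0}$, every term of the max defining $R_{\cV, U_0}$ is \emph{strictly} less than $R$, which rearranges to the strict downward optimality $\deg(\cV, U_0, R) > \deg(\cV, W, R)$ for every proper $W \subsetneq U_0$. Now let $U^* \in \argmax\inset{\deg(\cV,*,R)}$. Supermodularity gives
\[
\deg(\cV, U^*, R) + \deg(\cV, U_0, R) \le \deg(\cV, U^* \cap U_0, R) + \deg(\cV, U^* + U_0, R),
\]
and global optimality of $U^*$ forces $\deg(\cV, U^* + U_0, R) \le \deg(\cV, U^*, R)$, so $\deg(\cV, U_0, R) \le \deg(\cV, U^* \cap U_0, R)$. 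Combined with the strict downward optimality of $U_0$, this rules out $U^* \cap U_0 \subsetneq U_0$, so $U_0 \subseteq U^*$; upward closure of $\cL_\distinct$ then yields $U^* \in \cL_\distinct(\F_q^b)$.

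Part 3 follows by continuity. Each $R \mapsto \deg(\cV, U, R)$ is affine, and $\cL(\F_q^b)$ is finite. For any sequence $R_n \to R_\cV^+$, part 2 gives $\argmax\inset{\deg(\cV,*,R_n)} \subseteq \cL_\distinct(\F_q^b)$, so a pigeonhole over the finite lattice produces a fixed $U \in \cL_\distinct(\F_q^b)$ lying in $\argmax\inset{\deg(\cV,*,R_n)}$ for infinitely many $n$; taking $n \to \infty$ and using continuity of the affine function $\deg(\cV, U, \cdot)$ and of the pointwise max gives $U \in \argmax\inset{\deg(\cV,*,R_\cV)}$. A symmetric argument using $R_n \to R_\cV^-$ and part 1 produces an element of $\cL(\F_q^b) \setminus \cL_\distinct(\F_q^b)$ in $\argmax\inset{\deg(\cV,*,R_\cV)}$; the boundary cases $R_\cV \in \{0,1\}$ are handled separately, noting that $\{0\}$ always lies in $\argmax\inset{\deg(\cV,*,0)}$ (since $\deg(\cV,U,0)\le 0$ with equality at $U=\{0\}$) and $\F_q^b$ in $\argmax\inset{\deg(\cV,*,1)}$. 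I expect the main technical point to be the strict-inequality step in part 2: translating $R > R_\cV$ into \emph{strict} downward optimality of $U_0$, so that the supermodular bound $\deg(\cV, U_0, R) \le \deg(\cV, U^* \cap U_0, R)$ forces the equality $U^* \cap U_0 = U_0$ rather than a weaker containment.
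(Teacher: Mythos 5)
Your argument is correct, and it reaches the corollary by a somewhat different route than the paper takes. The paper derives \cref{cor:RVAltContrapos} as the literal contrapositive of the two statements of \cref{lem:RVAlt}, which it has already proven using the supermodularity identity (\cref{lem:degOfCap}). You instead give a self-contained proof, re-deriving supermodularity inline, and — for what you call Part 2 — use it in a structurally different way: rather than fixing an arbitrary $U \in \cL_\distinct(\F_q^b)$ and producing a specific witness subspace $W' = W \cap U$ for the $\max$ defining $R_{\cV,U}$ (as the paper's proof of the lemma does), you fix the \emph{extremal} $U_0 \in \cL_\distinct(\F_q^b)$ attaining $R_{\cV,U_0} = R_\cV$, observe that $R > R_\cV$ gives strict downward optimality of $U_0$, and combine this with supermodularity plus global optimality of an arbitrary $U^* \in \argmax$ to force $U_0 \subseteq U^*$, at which point upward closure of $\cL_\distinct(\F_q^b)$ finishes. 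This yields a slightly stronger structural fact (every global maximizer contains the threshold-attaining distinct subspace) than the paper actually needs, though the two routes are of comparable length and depend on the same submodularity lemma. Your Part 1 is essentially the same as the paper's proof of the corresponding part of \cref{lem:RVAlt}, modulo a phrasing confusion between a direct implication and a proof by contradiction; the one cosmetic issue is that your Part 3 (the continuity/pigeonhole argument) is not part of the corollary statement at all — it is the third item of \cref{lem:RVAlt} — so it is superfluous here, though not incorrect.
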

	
	Our proof of \cref{lem:RVAlt} relies on the following lemma, which we prove immediately after \cref{lem:RVAlt}.
	\begin{lemma}\label{lem:degOfCap}
		Let $n,b\in \N$, $q$ a prime power, $\cV\in \inparen{\cL\inparen{\F_q^b}}^n$. Fix $R\in [0,1]$ and let $U,W\in \cL\inparen{\F_q^b}$. Then, $$\pdeg(\cV,U,R)+\pdeg(\cV,W,R) \le \pdeg(\cV,U\cap W, R)+ \pdeg(\cV,U+W,R)\eperiod$$
	\end{lemma}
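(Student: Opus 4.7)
The plan is to expand both sides using the definition of $\deg(\cV, \cdot, R)$ from \cref{eq:degDef} and show that the inequality reduces to a pointwise (i.e., per-coordinate) submodularity statement about intersections of subspaces with a fixed $\cV_i$.

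First, I would write out
\[
\deg(\cV,U,R) + \deg(\cV,W,R) = \sum_{i=1}^n \bigl(\dim(\cV_i \cap U) + \dim(\cV_i \cap W)\bigr) - (1-R)n\bigl(\dim U + \dim W\bigr),
\]
and the analogous expression for the right-hand side with $U \cap W$ and $U + W$ in place of $U$ and $W$. The key observation is that the standard dimension formula gives $\dim U + \dim W = \dim(U \cap W) + \dim(U + W)$, so the linear $(1-R)n$ terms on both sides cancel exactly. Thus the lemma reduces to proving
\[
\sum_{i=1}^n \bigl(\dim(\cV_i \cap U) + \dim(\cV_i \cap W)\bigr) \le \sum_{i=1}^n \bigl(\dim(\cV_i \cap U \cap W) + \dim(\cV_i \cap (U+W))\bigr),
\]
which it suffices to prove coordinatewise, i.e., for each fixed $i$.

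For the pointwise bound, I would apply the dimension formula again, this time to the pair of subspaces $\cV_i \cap U$ and $\cV_i \cap W$ (both viewed as subspaces of $\F_q^b$). This yields
\[
\dim(\cV_i \cap U) + \dim(\cV_i \cap W) = \dim\bigl((\cV_i \cap U) \cap (\cV_i \cap W)\bigr) + \dim\bigl((\cV_i \cap U) + (\cV_i \cap W)\bigr).
\]
The intersection on the right simplifies exactly to $\cV_i \cap U \cap W$, matching the first term of the desired bound. For the sum, I would note the elementary containment $(\cV_i \cap U) + (\cV_i \cap W) \subseteq \cV_i \cap (U + W)$, which gives $\dim\bigl((\cV_i \cap U) + (\cV_i \cap W)\bigr) \le \dim\bigl(\cV_i \cap (U+W)\bigr)$. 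Combining these two facts yields the pointwise inequality and hence the lemma.

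There is essentially no hard step here: the lemma is a submodularity-type statement and the only subtlety is recognizing that the linear-in-$\dim$ terms in the definition of $\deg$ are themselves modular (i.e., additive under $U \mapsto (U\cap W, U+W)$), so only the $\sum_i \dim(\cV_i \cap \cdot)$ piece requires work, and there the containment $(\cV_i \cap U) + (\cV_i \cap W) \subseteq \cV_i \cap (U+W)$ (which can be strict in general) together with the dimension formula suffices.
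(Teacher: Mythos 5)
Your proof is correct and follows essentially the same route as the paper's: both reduce to the coordinatewise inequality $\dim(\cV_i\cap U)+\dim(\cV_i\cap W)\le\dim(\cV_i\cap U\cap W)+\dim(\cV_i\cap(U+W))$ via modularity of the $(1-R)$ terms, and both establish that inequality by combining the dimension formula for $\cV_i\cap U$ and $\cV_i\cap W$ with the containment $(\cV_i\cap U)+(\cV_i\cap W)\subseteq\cV_i\cap(U+W)$.
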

	\begin{proof}[Proof of \cref{lem:RVAlt}]
		The third statement follows from the first two since $\pdeg(\cV,U,R)$ is continuous in $R$.
		
		We will prove the contrapositives of the first two statements. For the first statement, suppose there exists some $U\in \argmax\inset{\pdeg(V,*,R)}\cap \cL_\distinct\inparen{\F_q^b}$. We need to show that $R\ge R_\cV$. Indeed, by \cref{eq:RVDef},  
		\begin{align*}
			R_{\cV} &\le R_{\cV,U} = \max_{\substack{W\in \cL\inparen{\F_q^b}\\ W\subsetneq U}} \inset{1-\frac{\sum_{i=1}^n \inparen{\dim\inparen{\cV_i\cap U}- \dim \inparen{\cV_i\cap W}}}{n\cdot \inparen{\dim U - \dim W}}}
		\end{align*}
		Fix $W\subsetneq U$. To prove that $R\ge R_\cV$ it suffices to show that
		$$R \ge 1-\frac{\sum_{i=1}^n \inparen{\dim\inparen{\cV_i\cap U}- \dim \inparen{\cV_i\cap W}}}{n\cdot \inparen{\dim U - \dim W}}\eperiod$$
		The above is equivalent to $\pdeg(\cV,U,R)\ge \pdeg(\cV,W,R)$, which follows from our assumption that $U\in \argmax\inset{\pdeg(V,*,R)}\cap \cL_\distinct\inparen{\F_q^b}$
		
		We turn to the second statement. We now assume that there is some $W\in \argmax\inset{\pdeg(V,*,R)}\cap \inparen{\cL\inparen{\F_q^b}\setminus\cL_\distinct\inparen{\F_q^b}}$ and need to prove that $R \le R_\cV$. Fix $U\in \cL_\distinct\inparen{\F_q^b}$. By \cref{eq:RVDef}, it suffices to prove
		that $R \le R_{\cV,U}$. By \cref{eq:RVUDef}, the latter would follow if we show that
		\begin{equation}\label{eq:RVAltNeedtoshow}
			R \le 1-\frac{\sum_{i=1}^n \inparen{\dim\inparen{\cV_i\cap U}- \dim \inparen{\cV_i\cap W'}}}{n\cdot \inparen{\dim U - \dim W'}}
		\end{equation}
		for some $W' \subsetneq U$. We take $W' = W\cap U$. Note that $W'$ is indeed strictly contained in $U$ since $W\notin \cL_{\distinct}\inparen{\F_q^b}$ and thus so does any subspace of $W$, including $W'$. By assumption, $U\in \cL_{\distinct}\inparen{\F_q^b}$, so $W'\ne U$ and thus the containment is indeed strict. Now, \cref{eq:RVAltNeedtoshow} is equivalent to the statement
		$$\pdeg(\cV,U,R) \le \pdeg(\cV,U\cap W, R)\eperiod$$
		By \cref{lem:degOfCap}, 
		$$\pdeg(\cV,U,R) \le \pdeg(\cV,U\cap W, R) + \pdeg(\cV,U+ W, R) - \pdeg(\cV,W, R)\le \pdeg(\cV,U\cap W, R)\ecomma$$
		where the second inequality is due to the maximality of $\pdeg(\cV,W,R)$.
	\end{proof}
	
	\begin{proof}[Proof of \cref{lem:degOfCap}]
		Fix $Z\in \cL\inparen{\F_q^b}$. Now,
		
		\begin{align*}
			\dim\inparen{U\cap W\cap Z} + \dim\inparen{(U+W)\cap Z}&\ge 
			\dim\inparen{U\cap W\cap Z} + \dim\inparen{(U\cap Z) + (W\cap Z)} \\
			&=\dim(U\cap Z) + \dim(W\cap Z)
		\end{align*}
		Therefore,
		\begin{align*}
			&\phantom{{}={}}\pdeg(\cV,U\cap W,R) + \pdeg(\cV,U+W,R) \\&= \sum_{i=1}^n \inparen{\dim \inparen{U\cap W\cap \cV_i} + \dim\inparen{(U+W)\cap \cV_i} - (1-R)\cdot \dim(U\cap W) - (1-R)\cdot \dim(U+W)} \\
			&\ge \sum_{i=1}^n \inparen{\dim\inparen{U\cap \cV_i} + \dim\inparen{W\cap \cV_i} - (1-R)\cdot \dim(U\cap W) - (1-R)\cdot \dim(U+W)} \\
			&=\sum_{i=1}^n \inparen{\dim\inparen{U\cap \cV_i} + \dim\inparen{W\cap \cV_i} - (1-R)\cdot \dim(U) - (1-R)\cdot \dim(W)} \\
			&= \pdeg(\cV,U,R) + \pdeg(\cV,W,R)\eperiod
		\end{align*}

	\end{proof}
	
	\section{List-Decodability and List-Recoverability of RLCs}\label{sec:RLCLD} 
        In this section we use the characterization of the threshold rate, given in \cref{eq:R_PDef}, to compute the threshold for list-decodability and average-weight list-decodability of RLCs. We then prove an upper bound on the threshold for list-recoverability and average-weight list-recoverability of RLCs. 

    \subsection{Threshold for List-Decodability}
    In this section we prove the following result.
    	\RLCLD*
        \begin{proof}~
        \subsubsection*{Notation for This Proof:}
		We first set up some notation.        
		Given a set $Z\subseteq [L+1]$, let $$D_Z := \inset{x\in \F_q^{L+1}\mid \supp(x)\subseteq Z},$$
        and let 
        $$E_Z:= \inset{x \in \F_q^{L+1} \mid \forall ~i,j\in Z, i=j}\eperiod$$
		Let $$E := E_{[L+1]} = \inset{x\in \F_q^{L+1}\mid \forall i,j~~x_i = x_j}\eperiod$$        
		
		Let $\cF,\cF'\subseteq\cL\inparen{\F_q^{L+1}}^n$ denote the local profiles families associated with $\cP$ and $\cP'$, respectively. Observe that every profile in $\cV\in \cF$ is of the form $\cV_i = E_{Z_i}$ ($i\in [n]$), where:
		\begin{itemize}
			\item $Z_1,\dots,Z_n$ are non-empty subsets of $[L+1]$. 
			\item For each $j\in [L+1]$, it holds that $\inabset{i\in [n]\mid j\in Z_i} \ge (1-\rho)n$.
		\end{itemize}
		Profiles in $\cF'$ are of similar form, except that the second condition is relaxed to $\sum_{i=1}^n |Z_i| \ge (L+1)\cdot(1-\rho)\cdot n$.

		Clearly, $R_{\cP}$ and $R_{\cP'}$ are both monotone-decreasing in $\rho$. Hence, it suffices to prove the theorem under the assumption that $\rho \le 1-\frac{1}{L+1}$, and so, $1-\rho\inparen{1+\frac 1L} \ge 0$. Also, note that $R_{\cP'} \le R_\cP$. Thus, to prove the theorem it is enough to  separately prove the inequalities $R_{\cP'} \ge 1-\rho\inparen{1+\frac 1L}$ and $R_\cP \le 1-\rho\inparen{1+\frac 1L}$.

        \subsubsection*{Proof of $R_{\cP'} \ge 1-\rho\inparen{1+\frac 1L}$:}
		Let $R =  1-\rho\inparen{1+\frac 1L}$. We claim that for every $\cV\in\cF'$ and $U\in \cL_\distinct\inparen{\F_q^{L+1}}$ there exists $W\in \cL\inparen{\F_q^{L+1}}\setminus\cL_\distinct\inparen{\F_q^{L+1}}$ such that \begin{equation}\label{eq:degULessThandegW}
			\pdeg\inparen{\cV,U,R} \le \pdeg\inparen{\cV,W,R}\eperiod
		\end{equation}
		This would imply that $\argmax \inset{\pdeg(\cV,*,R)}\cap \inparen{\cL\inparen{\F_q^{L+1}}\setminus\cL_\distinct\inparen{\F_q^{L+1}}}$ is nonempty, whence $R_{\cV} \ge R$ by virtue of \cref{cor:RVAltContrapos}, and therefore $R_{\cP'} \ge R$ by \cref{eq:R_PDef}.
		
		\sloppy
		To prove \cref{eq:degULessThandegW} we consider two cases. First, if $U+E = \F_q^{L+1}$, we take $W = E$. Observe that $E\in \cL\inparen{\F_q^{L+1}}\setminus\cL_\distinct\inparen{\F_q^{L+1}}$ and 
		$$\pdeg(\cV,W,R) = \sum_{i=1}^n\dim(\cV_i\cap E)-(1-R)\dim E\cdot n = \sum_{i=1}^n 1 - (1-R)\cdot n = R\cdot n\eperiod$$
		On the other hand,
		\begin{align*}
			\pdeg\inparen{\cV,U,R} &= \sum_{i=1}^n \dim\inparen{\cV_i\cap U} - (1-R)\cdot \dim U\cdot n = \sum_{i=1}^n \dim(U\cap E_{Z_i}) - (1-R)\cdot\dim U\cdot n \\
			&=
			\sum_{i=1}^n \inparen{\dim U + \dim E_{Z_i} - \dim \inparen{U+E_{Z_i}}} - (1-R)\cdot\dim U\cdot n
			\\
			&= \sum_{i=1}^n \inparen{\dim U + \dim E_{Z_i} - (L+1)} - (1-R)\cdot\dim U\cdot n\\
			&= \sum_{i=1}^n \inparen{\dim U + (L+2-|{Z_i}|) -(L+1)} - (1-R)\cdot\dim U\cdot n \\
                &= \sum_{i=1}^n \inparen{\dim U + 1 - |Z_i|} - (1-R)\cdot\dim U\cdot n \\
			&\le \inparen{\dim U - (1-\rho)\cdot(L+1)+1}\cdot n -(1-R)\cdot \dim U \cdot n \\
			&= \inparen{\dim U - L}\cdot Rn \le Rn = \pdeg(\cV,W,R)\ecomma
		\end{align*}
		which yields \cref{eq:degULessThandegW}. 
		
		\sloppy
		For the second case, suppose that $K:=U+E\subsetneq \F_q^{L+1}$. For a set $J\subseteq [L+1]$, denote $f(J) := \dim\inparen{K\cap D_J}$.
		We claim that there exists a set $T\subseteq [L+1]$ with the following properties:
		\begin{enumerate}
			\item \label[condition]{enum:TLargerthan2} $|T| \ge 2$
			\item \label[condition]{enum:f(T)=1} $f(T) = 1$
			\item \label[condition]{enum:f(T')=0} Every nonempty $T'\subsetneq T$ has $f(T')=0$.
		\end{enumerate}
		We turn to construct such a set. Let $I = \inset{i\in [L+1]\mid e_i\notin K}$. Observe that $I\ne \emptyset$ by our assumption that $K\ne \F_q^{L+1}$. Note that $f(\{i\}) = 0$ for all $i\in I$. 

        On the other hand, we claim that $\1_I\in K\cap D_I$, so $f(I)\ge 1$. Indeed, $\1\in E$, so it is also in $K=U+E$. By definition of $I$ we have $e_i\in K$ for all $i\in [L+1]\setminus I$. Thus, $\1_I = \1 - \sum_{j\in [L+1]\setminus I}e_j\in K\cap D_I$.
		
		Hence, one can take $T\subseteq I$ to be an inclusion-minimal nonempty set with $f(T) \ge 1$. We show that $T$ satisfies \cref{enum:TLargerthan2,enum:f(T)=1,enum:f(T')=0} above. \cref{enum:f(T')=0} follows immediately from the minimality of $T$. \cref{enum:TLargerthan2} holds since $f(\inset i) = 0$ for all $i\in I$, implying that $|T| > 1$. Finally, let some $z\in T$ and observe that $f(T) \le f(T\setminus \{z\}) + 1 = 1$, where the equality is due to \cref{enum:f(T')=0}. Since $f(T)\ge 1$ by the definition of $T$, \cref{enum:f(T)=1} follows.

        Take $W=(K+D_T)\cap E_T$. Note that $W\in \cL\inparen{\F_q^{L+1}}\setminus \cL_\distinct\inparen{\F_q^{L+1}}$. We turn to prove that $\pdeg(\cV,U,R)\le \pdeg(\cV,W,R)$.
        
        For $t\in T$, define the linear map $\varphi_t: K \to W$ by
        $$\varphi_t(x)_j =\begin{cases} x_j, & j\notin T\\ x_t, &j\in T \end{cases}.$$
        We claim that $\varphi_t$ is a bijection.
        First we establish that the image is indeed contained in $W$. It is evident that $\varphi_t(x)\in E_T$ and $\varphi_t(x)-x\in D_T$. Therefore, $\varphi_t(x)=x+ (\varphi_t(x)-x)\in K+D_T$ and thus $\varphi_t(x)\in (K+D_T)\cap E_T =W$.
        
        To prove injectivity of $\varphi_t$, assume there is a non-zero $x\in K$ such that $\varphi_t(x)=0$. Let $T' = T\setminus \{t\}$. The definition of $\varphi$ implies that $x_i = 0$ for all $i\in ([L+1]\setminus T)\cup \{t\}$, so $x\in D_{T'}$. By \cref{enum:f(T')=0}, $K\cap D_{T'} = \{0\}$, so $x=0$.

        Finally, to prove surjectivity of $\varphi_t$, let $w\in W$. Since $W=(K+D_T)\cap E_{T}$, there are $x\in K$ and $y\in D_T$ such that $x+y=w$. By \cref{enum:f(T)=1,enum:f(T')=0}, there is a vector $z\in K$ whose support is exactly $T$. Evidently, $w= \varphi_t (x+\alpha z)$ for some $\alpha \in \mathbb F_q$.

        We now claim that  $\dim(K\cap E_{Z})\leq \dim(W\cap E_{Z})$ for every $Z\subseteq [L+1]$. If $Z\cap T=\emptyset$ choose an arbitrary $t\in T$ and if $Z\cap T\neq \emptyset$ choose some $t\in Z\cap T$. The claim follows since, in both cases, $\varphi_t$ injectively maps $K\cap E_{Z}$ to $W\cap E_{Z}$.

        Note that $\dim W = \dim K$ since there is a bijection between $W$ and $K$. This and the above inequality imply
        \begin{align*}
            \pdeg(\cV,W,R) &= \sum_{i=1}^n \dim(\cV_i\cap W) - (1-R)\cdot n \cdot \dim W \\
                &= \sum_{i=1}^n \dim(E_{{Z_i}}\cap W) - (1-R)\cdot n \cdot \dim K \\
                &\geq \sum_{i=1}^n \dim(E_{{Z_i}}\cap K) - (1-R)\cdot n \cdot \dim K = \pdeg(\cV,K,R) \eperiod
        \end{align*}
        Lastly, denote $d:= \dim K -\dim U$, and observe that
        \begin{align*}
            \pdeg(\cV,K,R) - \pdeg(\cV,U,R)&= \sum_{i=1}^n \inparen{\dim(\cV_i\cap K) -\dim(\cV_i \cap U)} -(1-R)\cdot n \cdot d\\
            &= \sum_{i=1}^n d -(1-R)\cdot n \cdot d = Rnd \geq 0\ecomma
        \end{align*}
        and thus $\pdeg(\cV,W,R) \geq \pdeg(\cV,U,R). $

            \subsubsection*{Proof of $R_{\cP} \le 1-\rho\inparen{1+\frac 1L}$:} Let $R = 1-\rho\inparen{1+\frac 1L}$. We will construct a linear profile $\cV \in \cF$ such that \begin{equation}\label{eq:RLCLDNegativeBoundNTS}
			\pdeg\inparen{\cV, U, R} \leq \pdeg\inparen{\cV, \F_q^{L+1}, R}
		\end{equation} for every subspace $U \subseteq \F_q^{L+1}$. This would imply that $\argmax \inset{\pdeg(\cV,*,R)}\cap \cL_\distinct\inparen{\F_q^{L+1}}$ is nonempty, whence $R_{\cV} \le R$ by virtue of \cref{cor:RVAltContrapos}. \cref{eq:R_PDef} then yields $$R_\cP \le R_\cV \le R = 1-\rho\inparen{1+\frac 1L}\eperiod$$
		
		We now describe the aforementioned linear profile $\cV$. Let $s := (1-\rho)(L+1)$. Note that $0\le \rho\le \frac{L}{L+1}$. Write $t = \binom{L+1}{s}$. Recall that $n$ is divisible by $t$. Let the sequence $Z_1,\dots, Z_n$ consist of $\frac nt$ instances of each subset of $[L+1]$ of size $s$. Let $\cV_i = E_{Z_i}$ for all $i\in [n]$. Observe that $\cV\in \cF$.  We turn to proving \cref{eq:RLCLDNegativeBoundNTS}.
		
		First, observe that $$\dim \cV_i = \dim E_{Z_i} = L+2 - |Z_i| = (L+1)\cdot \rho + 1\ecomma$$ so 
		\begin{align*}
			\pdeg\inparen{\cV, \F_q^{L+1},R} = \sum_{i=1}^n \dim \cV_i - n(L+1)(1-R) &= n\cdot\inparen{(L+1)\cdot \rho + 1 - (L+1)(1-R)} \\&= n\cdot\inparen{\frac{(L+1)(1-\rho)-1}L}\eperiod
		\end{align*} 
		
		Now, let $U\subsetneq \F_q^{L+1}$. We need to show that \cref{eq:RLCLDNegativeBoundNTS} holds with regard to $U$.
		Denote $d \coloneqq \dim U$. We provide an alternative formulation of $\pdeg \inparen{\cV, U, R}$ in terms of linear maps. Let $\phi: \F_q^{L+1} \rightarrow \F_q^{L+1-d}$ be a full rank linear map such that $\ker \phi = U$.  We can write
		\begin{align*}
			\pdeg \inparen{\cV, U, R} &= \sum_i \dim \inparen{\cV_i \cap U} - \inparen{1-R}nd \\
			&= \sum_i \inparen{ \dim \cV_i - \dim \inparen{\phi(\cV_i)}} - \inparen{1-R}nd \\
			&= \sum_i \dim \cV_i - \sum_i \dim \inparen{\phi(\cV_i)} - \inparen{1-R}nd \\ 
			&= \inparen{(L+1)\cdot \rho + 1}\cdot n - \sum_i \dim\inparen{\phi(\cV_i)} - \frac{L - (1-\rho)(L+1)+1}L\cdot nd\eperiod \numberthis\label{eq:degVUR}
		\end{align*}
		
		We turn to bound the term $\sum_i \dim\inparen{\phi(\cV_i)}$.
		Denote $L' \coloneqq L+1-d$. Let $M \in \F_q^{L' \times (L+1)}$ be a matrix representing the linear map $\phi$ in the standard basis. Because $\phi$ is full rank, $M$ has rank exactly $L'$. Therefore, there exists a set of linearly independent columns from $M$ of size exactly $L'$. Denote the coordinates of such a set of columns by $I\subseteq [L+1]$ (so $|I| = L'$). Let $\pi:\F_q^{L+1}\to \F_q^{L'}$ denote the projection map into the coordinate set $I$. Let $M'\in \F_q^{L'\times L'}$ denote the restriction of the matrix $M$ to the columns indicated by $I$, and let $\phi':\F_q^{L'}\to \F_q^{L'}$ be the bijective linear map represented by $M'$. Observe that for any linear space $V\subseteq \F_q^{L+1}$ we have
		$$\dim \phi(V) \ge \dim \phi'(\pi(V)) = \dim \pi(V)\eperiod$$
		
		In particular, if $V = E_Z$ for some $Z\subseteq [L+1]$ then
		$$\dim \phi(V) \ge \dim \pi(V) = L' + 1 - \max\inset{|I\cap Z|,1}\eperiod$$
		Thus,
		$$
		\sum_{i}\dim \phi(\cV_i) = \sum_i \dim \phi(E_{Z_i}) \ge n\cdot \inparen{L'+1 - \Eover{Z}{\max\inset{|I\cap Z|,1}}}\ecomma
		$$
		where the expectation is over a uniformly random set $Z\subseteq [L+1]$ of size $s$.  Since 
		\begin{align*}
			\Eover{Z}{\max\inset{|I\cap Z|,1}} = \E{|I\cap Z|} + \PR{|I\cap Z|=0} &= (1-\rho)L' + \frac{\binom{L+1-s}{L'}}{\binom{L+1}{L'}}\\&\le (1-\rho)L' + \inparen{\frac{d}{L+1}}^s
			\\&=  (1-\rho)L' + \inparen{\frac{d}{L+1}}^{(1-\rho)(L+1)} \eperiod
		\end{align*}
		We claim that $\inparen{\frac{d}{(L+1)}}^{(1-\rho)(L+1)}\le 1-\frac{L'\cdot \rho}{L+1}$ for all $0\le \rho \le \frac{L}{L+1}$. Since the left-hand side is convex in $\rho$, it suffices to verify the claim for $\rho = 0$ and $\rho = \frac{L}{L+1}$, both of which are straightforward to check. 
		We conclude that
		$$
		\sum_{i}\dim \phi(\cV_i) \ge n\cdot\inparen{L' + 1 - (1-\rho)L'- \inparen{1 - \frac{L'\cdot \rho}L}} = n\cdot \rho\cdot L'\cdot \inparen{1+\frac 1L}\eperiod§
		$$
		By the above and \cref{eq:degVUR},
		$$\pdeg\inparen{\cV,U,R} \le n\cdot\inparen{\frac{(L+1)(1-\rho)-1}L} = \pdeg\inparen{\cV,\F_q^{L+1},R}\eperiod$$
		\cref{eq:RLCLDNegativeBoundNTS} follows.

        \end{proof}

    \subsection{A Negative Bound for List-Recoverability}
        Our goal in this section is to prove the following theorem.
    \RLCLR*

        \begin{proof}
        Let $\alpha := 1-\frac{L+1}{L+1-\ell}\cdot \rho$ and $R=\left(1- \frac{1}{m}\right)\cdot \alpha$.
        We construct a profile $\cV\in \cL(\F_q^{L+1})^n$ that belongs to the local profile family associated with $(\rho, \ell, L)$-list-recoverability. We then prove that $R_{\cV} \le R$. \cref{eq:R_PDef} then yields $R_{\cP}\le R$. This suffices to prove the theorem, since, clearly, $R_{\cP'} \leq  R_\cP$.
        
        We turn to describe the profile $\cV$. Given $x\in [L+1]$ and $i\in [m]$, let $(x)_i$ denote the $i$-th symbol in the $\ell$-ary representation of $x$. For $i\in [m]$, let 
        $$E_i = \inset{u\in \F_q^{L+1} \mid \forall x,y\in [L+1]~~(x)_i=(y)_i \implies u_x = u_y}\eperiod$$

        In the profile $\cV$, each of the subspaces $E_j$ ($j\in [L+1]$) appears $\frac{n\cdot\alpha}m$ times. The other $(1-\alpha)\cdot n$ entries of $\cV$ are $\F_q^{L+1}$. Let $G = \inset{i\in [n]\mid \cV_i = \F_q^{L+1}}$.
        
        Observe that a code containing $\cV$ is not $\LR \rho \ell L$. Indeed, let $A\in \cM_{\cV}^{\distinct}$. We claim that the columns of $A$ are $\RC \rho \ell$. This is straightforward to verify given the input lists $Z_1,\dots, Z_n$ that we now define. For $i\in [n]\setminus G$ take $Z_i$ to be the set of all letters in $\mrow Ai$ and observe that this is a set of size at most $\ell$. Next, fix sets $P_i$ for all $i\in G$ such that when $i$ is uniformly sampled from $G$, the set $P_i$ is uniform over $\binom{[L+1]}\ell$. 
        For $i\in G$, let $Z_i = \inset{A_{i,j}\mid j\in P_i}$.

        Let $U$ denote the linear space $\sum_{i=1}^m E_i$. 
        Clearly, $U\in \cL^\distinct(\F_q^{L+1})$, since every $x\ne y\in [L+1]$ have some $i\in [m]$ for which $(x)_i\ne (y)_i$. We claim that $\pdeg(\cV,U,R)\ge \pdeg(\cV,W,R)$ for every linear subspace $W\subseteq U$. 
        By \cref{eq:RVUDefAlt,eq:RVDef}, this implies that $R_\cV \leq R_{\cV,U} \leq R$, and the lemma follows.        
        
        To proceed, we need the observation that 
        \begin{equation}\label{eq:EsimIntersection}
            E_j\cap \inparen{\sum_{j'\in [m]\setminus\{j\}} E_{j'}} = \spn\{\1\}
        \end{equation}
        for all $j\in [m]$. Indeed, let $u\in E_j$ and suppose that $u \notin \spn\{\1\}$. Then, there exist $x,y\in [L+1]$ such that $u_x\ne u_y$. Let $z\in [L+1]$ such that $(z)_j = (y)_j$, and $(z)_r = (x)_r$ for all $r\in [m]\setminus \{j\}$. Note that $u_z = u_y$ since $z\sim_j y$. Hence, $u_z\ne u_x$. But, $z\sim_{r} x$ for all $r\ne j$, so $v_x = v_y$ for all $v\in \sum_{j'\in [m]\setminus\{j\}} E_{\sim_{j'}}$. Therefore, $u\notin \sum_{j'\in [m]\setminus\{j\}} E_{\sim_{j'}}$. 

        By \cref{eq:EsimIntersection}, $\dim U = (\ell-1) m+1$. Hence,
        \begin{align*}\pdeg(\cV,U,R) &= \sum_{i\in [n]\setminus G} \dim(V_i\cap U) + \sum_{i\in G} \dim(V_i\cap U) - (1-R)\cdot n\cdot \dim U
        \\&=
        \alpha\cdot n\cdot \ell + (1-\alpha)\cdot n\cdot ((\ell-1) m+1) - (1-R)\cdot n\cdot ((\ell-1) m+1) \\
        &= n\cdot \inparen{((\ell-1) m+1)R - (\ell-1)(m-1)\alpha} = n\cdot R\eperiod        
        \end{align*}

        Let $W\subseteq U$ be a linear subspace that maximizes $\pdeg(\cV,W,R)$. We may assume that $\1\in W$ since otherwise 
        $$\pdeg(\cV,W+\spn\{\1\},R) - \pdeg(\cV,W,R) \ge \sum_{i=1}^n \dim(\spn\{\1\}\cap \cV_i) - (1-R)n = n- (1-R)n = R \ge 0\eperiod$$ 
        
        Let $W'\subseteq W$ be a subspace such that $W = \spn\{\1\} + W'$. Similarly, for $i\in [m]$, let $E_i'\subseteq E_i$ be a subspace for which $E_i = \spn\{\1\} + E_i'$. Now,
        \begin{align}
            \sum_{i\in[m]}\dim(E_i\cap W) &\leq  \sum_{i\in[m]}\left( \dim E_i +\dim W - \dim (E_i+W) \right)   \notag\\
            &= \sum_{i\in[m]}\left(2+ \dim E_i' +\dim W'- \dim (E_i+W) \right)   \notag\\
            &= \sum_{i\in[m]}\left(2+ \dim E_i' +\dim W'- \dim (\mathrm{span}\{\vec 1\} + E_i'+W') \right)   \notag\\
            &= \sum_{i\in[m]}\left(1+ \dim E_i' +\dim W'- \dim ( E_i'+W') \right)   \notag\\
            &= \sum_{i\in[m]}\left(1+ \dim ( E_i'\cap W') \right)   \notag\\
            &\leq m + \dim W' = m+ \dim W -1 \notag\ecomma
        \end{align}
        where the last inequality is due to \cref{eq:EsimIntersection}. Therefore, taking $D := \dim W$, we have
        \begin{align*}\pdeg(\cV,W,R) &= \sum_{i\in [n]\setminus G} \dim(V_i\cap W) + \sum_{i\in G} \dim(V_i\cap W) - (1-R)\cdot D
        \\&\le
        \alpha\cdot n\cdot \frac{m+D - 1}m + (1-\alpha)\cdot n\cdot D - (1-R)\cdot D \\
        &= n\cdot \inparen{DR - \frac{(D-1)(m-1)}m\cdot \alpha } = n\cdot R\le \pdeg(\cV,U,R)\eperiod        
        \end{align*}
        
    \end{proof}

	\section{Random RS Codes and RLCs are locally equivalent}\label{sec:RSLDL}
	
	In this section we prove our main theorem about random RS codes, restated below.
	\reduction*

	The technical core of the proof of \cref{thm:MainRSIntro} is the following proposition.
	
	\begin{proposition}\label{prop:RSContainsProfile}
		Let $n\le q$ with $q$ a prime power, and let $b\in \N$. Let $\cV\in \inparen{\cL\inparen{\F_q^b}}^n$ be a $b$-local profile. Let $0 \le R\le 1$ and $\eps > 0$ such that
		\begin{equation}\label{eq:RsContainsProfileRDef}
			\pdeg(\cV,U,R)  \le -\eps \cdot \dim U \cdot n \end{equation}
		for all $U\in \cL\inparen{\F_q^b}$, $U\ne \{0\}$. Let $\cC = \CRS[\F_q]{\alpha_1,\dots,\alpha_n}k$ where $k = Rn$ and $\alpha_1,\dots,\alpha_n$ are identically and independently sampled uniformly from $\F_q$.
		Then,
		\begin{equation}\label{eq:MainPropBadEvent}
		    \PR{\cC\textrm{ contains a nonzero $n\times b$ matrix satisfying } \cV} \le (2^b-1) \cdot \inparen{\frac {(4b)^{4b}k}{\eps q}}^{\frac{\eps n}{2b}}\eperiod 
		\end{equation}
	\end{proposition}
	
	We defer proving \cref{prop:RSContainsProfile} to \cref{sec:RSContainsPRofile}. We now show how this proposition implies \cref{thm:MainRSIntro}.
	
	\begin{proof}[Proof of \cref{thm:MainRSIntro} given \cref{prop:RSContainsProfile}]
\phantom{a}
        \subsubsection*{Reduction from RLCs to Random RS codes}
        We first prove \cref{eq:MainReductionRLCtoRS}. Fix $\cV = (\cV_1,\dots, \cV_n) \in \cF$. By \cref{lem:RVAlt}, there is some $W\in \cL\inparen{\F_q^b}\setminus \cL_{\distinct}\inparen{\F_q^b}$ such that $\pdeg(\cV,W,R_{\cV}) \ge \pdeg(\cV,U,R_{\cV})$ for all $U\in \cL(\F_q^b)$. 

		Let $D = \dim W$ and fix a linear map $\varphi:\F_q^b\to \F_q^{b-D}$ with $\ker \varphi = W$. Consider the $(b-D)$-local profile $\cV' = \inparen{\varphi(\cV_1),\dots,\varphi(\cV_n)}$. Note that for $\cC$ to contain $\cV$ it must also contain $\cV'$. Indeed, suppose that $A\subseteq \F_q^{n\times b}$ is a matrix in $\cC$ satisfying $\cV$. Consider the matrix $A'\in \F_q^{n\times (b-D)}$ whose rows are $\varphi(\mrow A1),\dots,\varphi(\mrow An)$. Then it is straightforward to verify that $A'$ satisfies $\cV'$. Furthermore, by linearity, $A'\subseteq \cC$.
		
		We now claim that \cref{prop:RSContainsProfile} can be applied to $\cV'$. Indeed, by the definition of $R_{\cP}$ (\cref{eq:R_PDef}),
		$$
		R\le R_{\cP}-\eps \le R_{\cV}-\eps.
		$$
		Thus, for all $U\in \cL(\F_q^b)$, we have
		$$\pdeg(\cV,U,R) - \pdeg(\cV,W,R) = \pdeg(\cV,U,R_\cV) - \pdeg(\cV,W,R_{\cV}) - \eps n (\dim U - D) \le -\eps n (\dim U-D)$$
		Therefore,
		$$\sum_{i=1}^n \dim(\cV_i\cap U) - \sum_{i=1}^n \dim(\cV_i\cap W) \le (1-R-\eps)\cdot n \cdot (\dim U-D)$$
		
		Now, let $U'\in \cL(\F_q^{b-D})$ such that $U'\ne \{0\}$, and let $U\in \cL(\F_q^b)$ such that $W\subseteq U$ and $\varphi(U) = U'$, that is, $U = \varphi^{-1}(U')$. Note that $\dim U' = \dim U - D$. Then,

            \begin{align*}
            \pdeg(\cV',U',R) &= \sum_{i=1}^n\dim(\cV'_i\cap U') - (1-R)n\cdot \dim U' \\ &= \sum_{i=1}^n\dim(\varphi(\cV_i \cap U)) - (1-R)n\cdot (\dim U - D) \\&= \sum_{i=1}^n\inparen{\dim(\cV_i\cap U) - \dim(\cV_i\cap W)} - (1-R)n\cdot (\dim U - D) \\
            &= \pdeg(\cV,U,R) - \pdeg(\cV,W,R) \le -\eps \cdot \dim U'\cdot n\eperiod
            \end{align*}
            
	Hence, \cref{prop:RSContainsProfile} applies to $\cV'$. Therefore,
	\begin{align*}\PR{\cC\textrm{ satisfies }\cV} &\le \PR{\cC\textrm{ satisfies }\cV'} \\&\le \PR{\cC\textrm{ contains a nonzero $n\times b$ matrix satisfying } \cV'} \\&\le (2^b-1)\cdot \inparen{\frac {(4(b-D))^{4(b-D)}k}{\eps q}}^{\frac{\eps n}{2(b-D)}}\le (2^b-1)\cdot \inparen{\frac {(4b)^{4b}k}{\eps q}}^{\frac{\eps n}{2b}}\eperiod
        \end{align*}
		\cref{eq:MainReductionRLCtoRS} now follows from a union bound on all $\cV\in \cF$.

    \subsubsection*{Reduction from Random RS Codes to RLCs}
    We turn to prove \cref{eq:MainReductionRStoRLC}.
    
    Fix $\cV\in \cF$ such that $R_\cP = R_\cV$.  We assume without loss of generality that $\eps = R - R_\cV$. We shall prove that
    \begin{equation}\label{eq:RSToRLCReductionForV}
        \PR{\cC\text{ contains }\cV} \ge 1 - \binom b2\cdot (2^b-1)\cdot \inparen{\frac {(4b)^{4b}k}{\eps q}}^{\frac{\eps n}{2b}}\ecomma§
    \end{equation}
    which implies \cref{eq:MainReductionRStoRLC}. Note that the lower bound on this probability is exponentially close to $1$, because of the assumption on $q$, and because $b$ is considered to be a constant integer that is much smaller than $n$.
    
    By \cref{lem:RVAlt}, there exist $U\in \cL_\distinct(\F_q^b)$ and $T\in \cL(\F_q^b) \setminus \cL_{\distinct}(\F_q^b)$ such that
    \begin{equation} \label{eq:MainReductionUMaximizer}
        U,T \in \argmax\inset{\pdeg(\cV,*,R_{\cV})}\eperiod
    \end{equation}   
    We may assume that $T\subseteq U$, since \cref{lem:degOfCap} yields
    $$\pdeg(\cV,U\cap T,R_\cV) \ge \pdeg(\cV,U, R_\cV) +  \pdeg(\cV,T, R_\cV) - \pdeg(\cV,U+T, R_\cV)\ge \pdeg(\cV,U, R_\cV)\ecomma$$
    where the second inequality is due to $\cref{eq:MainReductionUMaximizer}$. Hence, $U\cap T\in \argmax\inset{\pdeg(\cV,*,R_{\cV})}$ by virtue of \cref{eq:MainReductionUMaximizer}. 

    Furthermore, we may assume that $U = \F_q^b$ and $T = \{0\}$. We turn to justify this assumption. Write $U = T \oplus S$ for some linear subspace $S\subseteq \F_q^b$ satisfying $\dim S = \dim U - \dim T$. Let $\pi:\F_q^b\to S$ denote the projection map onto $S$ with regard to this direct sum.    
    Let $\phi: \F_q^{\dim S}\to S$ be a linear bijection. Consider the profile $\cV'\in \cL(\F_q^{\dim S})^n$ where $\cV'_i = \phi^{-1}(\pi(\cV_i))$. Let $W' \in \cL(\F_q^{\dim S})$ and set $W = \phi (W')+T$. Then,

    \begin{align*}
        \pdeg(\cV',W',R_\cV) &= \sum_{i=1}^n \dim(\cV'_i\cap W') - (1-R_\cV)\dim W'  \\ 
        &= \sum_{i=1}^n \dim(\pi(\cV_i) \cap \phi (W')) - (1-R_{\cV})(\dim W - \dim T) \\
        &= \sum_{i=1}^n \dim(\pi(\cV_i) \cap W) - (1-R_{\cV})(\dim W - \dim T)\\
        &= \sum_{i=1}^n \inparen{\dim(\cV_i \cap W) - \dim(\cV_i \cap T)} - (1-R_{\cV})(\dim W - \dim T) \\
        &= \pdeg(\cV,W,R_{\cV})- \pdeg(\cV,T,R_{\cV})\eperiod
    \end{align*}

    The penultimate equality follows from the rank nullity theorem applied to the linear map $\pi$ and the vector space $\cV_i \cap W$, and the observation that $\pi(\cV_i \cap W) = \pi(\cV_i) \cap W$.
    
    Observe that $\pdeg(\cV,T,R_\cV)$ does not depend on $W'$. Therefore, $\pdeg(\cV',W',R_V)$ is maximized when $W'$ corresponds to a space $W\in \argmax\inset{\pdeg(\cV,*,R_{\cV})}$. In particular $\F_q^{\dim S}$ and $\{0\}$ correspond to $U$ and $T$, respectively. Thus, by \cref{eq:MainReductionUMaximizer}, $\F_q^{\dim S},\{0\}\in \argmax\inset{\pdeg(\cV',*,R_{\cV})}$ and so by \cref{cor:RVAltContrapos}, $R_{\cV'} = R_{\cV}$. If \cref{eq:RSToRLCReductionForV} applies to $\cV'$ then it also applies to $\cV$ since, by linearity of $\cC$, containing $\cV$ implies containing $\cV'$. It therefore suffices to prove \cref{eq:RSToRLCReductionForV} under the assumption that $U= \F_q^b$ and $T = \{0\}$. We proceed to do so.

    Denote $K = \cC^b\cap \cM_{\cV}$. 
    By \cref{eq:MainReductionUMaximizer},
    \begin{align*}
        \pdeg(\cV,\F_q^b,R) = \pdeg(\cV,\F_q^b,R_\cV) + \eps \cdot n\cdot b 
        = \pdeg(\cV,\{0\},R_\cV) + \eps \cdot n\cdot b = \eps \cdot n\cdot b\eperiod
    \end{align*}
    Thus, by \cref{eq:dimM*DegBound}, it holds deterministically that
    $$
    \dim K \ge \pdeg(\cV,\F_q^b,R) = d\ecomma
    $$
    where $d = \ceil{\eps\cdot n\cdot b}$.

    Denote $U_{u,v} = \inset{x \in \F_q^b \mid x_u=x_v}$ and $K_{u,v} = \cC^b\cap \cM^*_{\cV,U_{u,v}}$. Let $E_{u,v}$ denote the event that $\dim K_{u,v} \ge d$. We claim that if none of the events $E_{u,v}$ hold then $\cC$ contains $\cV$. Indeed, assume that no event $E_{u,v}$ holds. Note that $$\cC^b\cap M_\cV^\distinct = K\cap \cM_\cV^\distinct = K \setminus \inparen{\bigcup_{u,v} K_{u,v}}\eperiod$$
    Hence,
    $$\inabs{\cC^b\cap M_\cV^\distinct} \ge \inabs{K} - \sum_{u,v}\inabs{K_{u,v}} \ge q^d - \binom b2 \cdot q^{d-1} > 0\ecomma$$
    where the last inequality is by our assumption that $q > kb\ge \eps n b \ge 2b^2(b+1)$. Consequently, $$\cC^b\cap M_\cV^\distinct \ne \emptyset\ecomma$$ so $\cC$ contains $\cV$ and thus satisfies $\cP$.

    We turn to bound the probability of $E_{u,v}$ for some fixed $u,v$. To this end, we find a necessary condition for this event. Let $\cD = \CRS[\F_q]{\alpha_1,\dots,\alpha_n}{k-t}$,  
    where $t = \floor{\frac{d-1}{b-1}}$. 
    Observe that $\cD$ is a linear subspace of $\cC$ of codimension $t$. Moreover, $\cD^b \cap \F_q^{n \times U_{u,v}}$ is a subspace of $\cC^b \cap \F_q^{n \times U_{u,v}}$, with codimension $t\cdot \dim U_{u,v} = t\cdot(b-1)$. Therefore,
    $$\dim \inparen{\cD^b\cap \cM^*_{\cV,U_{u,v}}} \ge \dim \inparen{\cC^b\cap \cM^*_{\cV,U_{u,v}}} - t(b-1) = \dim K_{u,v} - t(b-1)\eperiod$$
    Therefore, the event $E_{u,v}$ implies that $\dim \inparen{\cD^b\cap \cM^*_{\cV,U_{u,v}}} \ge 1$. We next bound the probability of the latter.

    Let $\psi:\F_q^b\to \F_q^{b-1}$ be the projection map onto the coordinate set $[b]\setminus \{u\}$. Define the $(b-1)$-local profile $\cT\in \cL(\F_q^{b-1})^n$ by $\cT_i = \psi(\cV_i\cap U_{u,v})$. Note that $\dim\inparen{\cD^b\cap \cM^*_{\cV,U_{u,v}}} = \dim\inparen{\cD^{b-1}\cap \cM_{\cT}}$. Moreover, let $W'\in \cL(\F_q^{b-1})$ and let $W\in \cL(\F_q^b)$ such that $W\subseteq U_{u,v}$ and $\psi(W) = W'$. Then, denoting $\delta = \frac tn$, we have
    \begin{align*}
    \pdeg(\cT,W',R_\cV-\delta) &= \sum_{i=1}^n \dim(\cT_i \cap W') - \inparen{1-(R_\cV-\delta)}\cdot n\cdot \dim W' 
    \\&= \sum_{i=1}^n \dim(\cV_i \cap W) - \inparen{1-(R_\cV-\delta)}\cdot n\cdot \dim W \\
    &= \pdeg(\cV,W,R_\cV-\delta) = \pdeg(\cV,W,R_\cV) - \delta \cdot n\cdot \dim W \\ &\le \pdeg(\cV,\{0\},R_\cV) - \delta \cdot n\cdot \dim W  
    \le - \delta \cdot n\cdot \dim W \ecomma
    \end{align*}
    where the first inequality is due to \cref{eq:MainReductionUMaximizer}. Hence, \cref{prop:RSContainsProfile} yields
    \begin{align*}
        \PR{E_{u,v}} \le \PR{\dim \inparen{\cD^b\cap \cM^*_{\cV,U_{u,v}}} \ge 1} = \PR{\dim \inparen{\cD^b\cap \cM_{\cT}} \ge 1} &\le (2^b-1) \cdot \inparen{\frac {(4b)^{4b}(k-t)}{\delta q}}^{\frac{\delta n}{2b}} \\ &\le 
        (2^b-1)\cdot \inparen{\frac {(4b)^{4b}k}{\eps q}}^{\frac{\eps n}{2b}}\eperiod
    \end{align*}
    \cref{eq:RSToRLCReductionForV} follows since
    $$\PR{\cC\textrm{ contains }\cV} \ge 1- \PR{\bigcup_{u,v} E_{u,v}} \ge 1- \binom b2\cdot (2^b-1)\cdot \inparen{\frac {(4b)^{4b}k}{\eps q}}^{\frac{\eps n}{2b}}\eperiod$$
    \end{proof}
	
	\section{On local profiles in a random RS code---Proof of \cref{prop:RSContainsProfile}}\label{sec:RSContainsPRofile}

    In this section we prove \cref{prop:RSContainsProfile}. Our first step is to extend the notion of a local profile, defining what we term a \deffont{local polynomial profile}.

    \subsection{Local Polynomial Profiles}
    We require some notation. Fix $a\in \N$. A tuple of polynomials $(P_1,\dots, P_a)\in \F_q[X]^a$ is said to be \deffont{$k$-bounded} if $\deg(P_i)< k$ for every $1\le i\le a$. We denote the set of all $k$-bounded tuples by
    $$
    Q_{k,a} = \inset{\inparen{P_1,\dots,P_a}\in \F_q[X]^a \mid \forall ~1\le i\le a,~~\deg(P_i) < k }\eperiod
    $$
    Observe that $Q_{k,a}$ is an $\F_q$-linear space. If $S\subseteq \F_q[X]^a$ consists of $k$-bounded tuples, we say that $S$ itself is \deffont{$k$-bounded}.
    
    For $\alpha\in \F_q$ and $\bP = \inparen{P_1,\dots, P_a}\in \F_q[X]^a$, we denote the evaluation $\bP(\alpha) := \inparen{P_1(\alpha),\dots, P_a(\alpha)}$. For $S\subseteq \F_q[X]^a$, we denote $S(\alpha) := \inset{\bP(\alpha)\mid \bP\in S}$. If $A\in \F_q[X]^{m\times m'}$ is a matrix of polynomials, let $A(\alpha)\in \F_q^{m\times m'}$ denote the entry-wise evaluation matrix of $A$ on $\alpha$. 

    Let $\F_q(X)$ denote the field of univariate rational functions over $\F_q$. We naturally embed $\F_q$ in $\F_q(X)$ by considering each $\alpha \in\F_q$ as the constant rational function $\alpha \in \F_q(X)$. By abuse of notation, both objects will be denoted by $\alpha$. Note that every $\F_q(X)$-linear space is also $F_q$-linear. Throughout this section we use $\dims $ and $\diml $ to refer to linear dimension of $\F_q$ and $\F_q(X)$, respectively. We similarly distinguish between $\spns$ and $\spnl$, and between $\ranks$ and $\rankl$. Observe that the space $Q_{k,a}$ is $\F_q$-linear but not $\F_q(X)$-linear.

    The following fact will be useful.
    \begin{lemma}[Dimension of $S(\alpha)$]\label{lem:EvalDimProb}
		Let $S\subseteq \F_q[X]^b$ be an $\F_q$-linear space and write  $D = \diml \spnl S$. The following now holds:
		\begin{enumerate}
			\item $\dims  S(\alpha) \le D$ for every $\alpha\in \F_q$.
			\item If $S$ is $k$-bounded ($k\in \N$) then $$\PROver{\alpha \sim \uniform(\F_q)}{\dims  S(\alpha) = D} \ge 1-\frac{Dk}{q}\eperiod$$
		\end{enumerate}
    \end{lemma}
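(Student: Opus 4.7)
The plan is to prove the two parts in tandem, using the same auxiliary object: an $\F_q(X)$-basis of $W := \spn_{\F_q(X)} S$ drawn (when possible) from $S$ itself.

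For part~1, the key observation is that any $D+1$ elements of $S$ must be $\F_q(X)$-linearly dependent, because $\dim_{\F_q(X)} W = D$. So take any $s_1,\dots,s_{D+1}\in S$ and a nontrivial relation $\sum_{i=1}^{D+1} c_i(X)\,s_i = 0$ with $c_i\in \F_q(X)$. After clearing denominators, we may take $c_i\in\F_q[X]$, and after dividing by the common factor, we may further assume $\gcd(c_1,\dots,c_{D+1})=1$. Now evaluating at $\alpha$ gives the identity $\sum_i c_i(\alpha)\eval_\alpha(s_i) = 0$ in $\F_q^b$. Crucially, the scalars $c_i(\alpha)\in\F_q$ cannot all vanish, for otherwise $(X-\alpha)$ would divide every $c_i$, contradicting the gcd condition. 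Thus any $D+1$ elements of $\eval_\alpha(S)$ are $\F_q$-linearly dependent, proving $\dim_{\F_q}\eval_\alpha(S)\le D$.

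For part~2, I would exhibit $D$ specific elements whose evaluations almost always stay independent. Pick $w_1,\dots,w_D\in S$ forming an $\F_q(X)$-basis of $W$; this is possible by the standard exchange argument since $\spn_{\F_q(X)}S = W$. Because $S\subseteq Q_{k,b}$, each $w_i$ has entries in $\F_q[X]$ of degree $<k$. Assemble the $w_i$ as rows of a matrix $M\in \F_q[X]^{D\times b}$. Since $\rank_{\F_q(X)}M = D$, some $D\times D$ submatrix $M'$ has $d(X):=\det M'\ne 0$. Expanding the determinant shows $\deg d < Dk$, so $d$ has fewer than $Dk$ roots in $\F_q$. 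For every $\alpha$ with $d(\alpha)\ne 0$, the matrix $M'(\alpha)$ is invertible over $\F_q$, which forces $w_1(\alpha),\dots,w_D(\alpha)$ to be $\F_q$-linearly independent. Since each $w_i(\alpha)\in\eval_\alpha(S)$ and part~1 caps the dimension at $D$, equality $\dim_{\F_q}\eval_\alpha(S)=D$ holds for all such $\alpha$. A uniform $\alpha\in\F_q$ lands outside the root set of $d$ with probability at least $1-\frac{Dk}{q}$, giving the claimed bound.

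The only real subtlety is the interplay between $\F_q$-linearity and $\F_q(X)$-linearity in part~1: one must be careful that the chosen $\F_q(X)$-dependence has polynomial coefficients with trivial gcd, since that is exactly what prevents all $c_i(\alpha)$ from simultaneously vanishing. Once that lemma is in place, part~2 is essentially a Schwartz--Zippel-style counting of the root set of an explicit determinant polynomial, and poses no genuine difficulty.
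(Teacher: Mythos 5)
Your proof is correct. Part 2 is essentially identical to the paper's argument: both select $\F_q(X)$-linearly-independent rows of a matrix drawn from $S$, pass to a full-rank $D\times D$ submatrix, and bound the probability that its determinant (a nonzero polynomial of degree less than $Dk$) vanishes at a random $\alpha$. For Part 1 you take a mildly different route. The paper argues via minors: since the $\F_q(X)$-rank of any matrix with more than $D$ rows from $S$ is at most $D$, every $(D+1)\times(D+1)$ minor is the zero polynomial, so it vanishes after evaluation, capping the $\F_q$-rank by $D$. You instead take any $D+1$ elements of $S$, extract a nontrivial $\F_q(X)$-linear relation, clear denominators, and divide by the gcd so the coefficients are polynomials with trivial common factor; since $(X-\alpha)$ cannot divide all of them, the relation remains nontrivial after evaluation. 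Both arguments express the same fact — that $\F_q(X)$-dependence in $S$ survives evaluation as $\F_q$-dependence — but your gcd normalization is slightly more elementary for Part 1 (it avoids determinants there), while the paper's minor-based phrasing is more uniform with Part 2. Your proof is valid as written.
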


    \begin{proof}
            Let $A\in \F_q[X]^{|S|\times b}$ be a matrix (possibly with infinitely many rows) whose rows are the elements of $S$. Denote $t := \dims S(\alpha) = \ranks  A(\alpha)$. Note that $A$ contains a minor $M \in \F_q[X]^{t\times t}$ such that $\ranks  M(\alpha) = t$. Thus, 
            $$\inparen{\det M} (\alpha) = \det \inparen{M(\alpha)} \ne 0\ecomma$$
            implying that $\det M$ is a non-zero polynomial, so $M$ has full rank over $\F_q(X)$. Thus, $$D = \rankl  A \ge \rankl  M = t = \dims  S(\alpha)\eperiod$$
            The first claim follows.

            We turn to the second claim. Suppose that $S$ is $k$-bounded. Let $M'\in \F_q[X]^{D\times D}$ be a minor of $A$ with $\rankl  M' = D$, so $\det M' \ne 0$. Since the entries of $M'$ are polynomials of degree at most $k$, the non-zero polynomial $\det M'$ has degree at most $Dk$. Hence, a uniformly random $\alpha \in \F_q$ satisfies $\inparen{\det M'}(\alpha) \ne 0$ with probability at least $1-\frac{Dk}q$. Suppose that this event holds. Then, $$D \ge t = \ranks  A(\alpha) \ge \ranks  M'(\alpha) \ge D\ecomma$$
            so $\dims S(\alpha) = t = D$.
   \end{proof}

    We turn to define the notion of a \deffont{local polynomial profile}.

     \begin{definition}\label{def:polyMap}
            An $\F_q(X)$-linear map $\psi:\F_q(X)^b\to \F_q(X)^a$ is said to be \deffont{polynomial} if $\psi\inparen{\F_q[X]^b} \subseteq \F_q[X]^a$. In other words, $\psi$ is \deffont{polynomial} if it is represented in the standard basis by a matrix with polynomial entries. We use $\deg \psi$ to denote the maximum degree (as a polynomial) of an entry in this matrix.
        \end{definition}
        
	\begin{definition} \label{def:polyProfiles}
		A \deffont{$b$-local polynomial profile} is a sequence $\Psi = \inparen{\psi_1,\dots \psi_n}$ of $\F_q(X)$-linear polynomial maps $\psi_i : \F_q(X)^b \to \F_q(X)^{b_i}$ (for some $b_1,\dots, b_n\in \N$). We denote $\deg \Psi = \max_{1\le i\le n} \deg \psi_i$.
  
        Let $\balpha = \inparen{\alpha_1,\dots,\alpha_n} \in \F_q^n$. A tuple of polynomials $(P_1,\dots, P_b)$ is said to \deffont{satisfy} $(\Psi,\balpha)$ if for every $i\in [n]$ it holds that
		\begin{equation}\label{eq:PsiConstraint}
			\inparen{\psi_i\inparen{P_1,\dots, P_b}}(\alpha_i)=0\eperiod
		\end{equation}
        For an $\F_q$-linear space $S\subseteq \F_q[X]^b$, we denote $$S[\Psi,\balpha] = \inset{\bP\in S\mid \bP\textrm{ satisfies }(\Psi,\balpha)}\eperiod$$
        \end{definition}
        Note that $S[\Psi,\balpha]$ is an $\F_q$-linear space.
        
        Polynomial profiles provide a framework for directly analyzing and manipulating message vectors, which, in the case of random RS codes, correspond to the coefficient vectors of low-degree polynomials. This generalization enables one to define profiles that are well-suited for the stochastic process to proceed until completion (see \cref{sec:LikelyToBeEmpty} for further details). Additionally, we require a notion that is related to the concept of $\pdeg$, which is defined for subspaces of $\F_q^b$.
        
        \begin{definition}[Strength]\label{def:strength}
            For an $\F_q(X)$-linear subspace $U \subseteq \F_q(X)^b$ and a polynomial map $\psi$, we denote $\rf_U(\psi) \coloneqq \diml  \psi(U)$. Note that when $U = \F_q(X)^b$, we have $\rf_U(\psi) = \rankl  \psi$.
            
            For an $\F_q(X)$-linear subspace $U\subseteq\F_q(X)^b$ and a polynomial profile $\Psi = \inparen{\psi_1,\dots \psi_n}$, we denote $\rf_U(\Psi) = \sum_{i=1}^n \rf_U(\psi_i)$.
        \end{definition}

    \begin{definition}
        Given $T,U\subseteq \F_q(X)^b$, where $U$ is an $\F_q(X)$-linear subspace, we say that $U$ is \deffont{$T$-live} if $\spnl (T\cap U) = U$. Otherwise, we say that $U$ is \deffont{$T$-dead}.
    \end{definition}

    Our goal is to rephrase \cref{prop:RSContainsProfile} in terms of local polynomial profiles. We require the following lemma to do so.
    \begin{lemma}\label{lem:LocalProfilesToPolynomialProfiles}
        Let $n\le q$ with $q$ a prime power. Let $\cV =  \inparen{V_1,\dots, V_n}\in \inparen{\cL\inparen{\F_q^b}}^n$ be a $b$-local profile. Then, there exists a $b$-local polynomial profile $\Psi_{\cV}$ with $\deg\inparen{\Psi_{\cV}}=0$ such that, for every $\balpha = \inparen{\alpha_1,\dots,\alpha_n}\in \F_q^n$ and $k\in \N$, the following holds:
        \begin{equation}\label{eq:LocalProfToPolyProfEquivalence}
        \CRS[\F_q]{\alpha_1,\dots,\alpha_n}k\textrm{ contains a nonzero $n\times b$ matrix satisfying } \cV\quad\Rightarrow\quad \dims Q_{k,b}[\Psi_{\cV},\balpha] > 0\eperiod
        \end{equation}
        Furthermore, for every $Q_{k,b}$-live $\F_q(X)$-linear space $U\subseteq \F_q(X)^b$ there exists an $\F_q$-linear space $U'\subseteq \F_q^b$ with $\diml U = \dims  U'$ and
        
        \begin{equation}\label{eq:strengthBoundedByDeg}
        \rf_U(\Psi) \ge \sum_{i=1}^n \inparen{\diml U  - \dims \inparen{U'\cap V_i}}\eperiod\end{equation}
    \end{lemma}

    \begin{proof}
        For $1\le i\le n$, let $\varphi_i: \F_q^b\to \F_q^{b-\dims  V_i}$ be an $F_q$-linear map with $\ker \varphi_i = V_i$. Take $\psi_i : \F_q(X)^{b} \to \F_q(X)^{b-\dims  V_i}$ to be an $\F_q(X)$-linear map whose restriction to $\F_q^b$ is $\varphi_i$. In other words, the matrix representation of $\psi_i$ in the standard basis is the same as that of $\varphi_i$. In particular, $\deg \psi_i = 0$. Observe that $\varphi_i(\bP(\alpha)) = \psi_i(\bP)(\alpha)$ for all $\alpha \in \F_q$ and $\bP\in \F_q[X]^b$.
        Finally, take $\Psi_\cV = \inparen{\psi_1,\dots, \psi_n}$ and note that $\deg \Psi_\cV = 0$.

        Now, suppose that $\cC:=\CRS[\F_q]{\alpha_1,\dots,\alpha_n}k$ contains a non zero matrix $A\in \F_q^{n\times b}$ that satisfies $\cV$. By definition of the RS code, there is a tuple of distinct polynomials $\bz\ne \bP:=\inparen{P_1,\dots, P_b}\in Q_{k,b}$ such that $A_{i,j} = P_j(\alpha_i)$ for all $1\le i\le n$ and $1\le j\le b$. Observe that $\bP$ satisfies $\inparen{\Psi_\cV,\balpha}$. Indeed, 
        $$\psi_i(\bP)(\alpha_i) = \varphi_i\inparen{\bP(\alpha_i)} = \varphi_i(\mrow Ai) = 0\ecomma$$
        since $\mrow Ai\in V_i = \ker \varphi_i$. Therefore, $\bP \in Q_{k,b}[\Psi_\cV,\balpha]$, and thus $\dims Q_{k,b}[\Psi,\balpha] > 0$.

        We turn to proving the last part of the lemma. Let $U\subseteq \F_q(X)^b$ be $Q_{k,b}$-live and denote $D = \diml U$. Let $S = Q_{k,b}\cap U$. Note that $S \subseteq \F_q[X]^b$ is an $\F_q$-linear space and is $k$-bounded, and therefore by the second item of \cref{lem:EvalDimProb}, 
        $$\PROver{\alpha \sim \uniform(\F_q)}{\dims  S(\alpha) = D} \ge 1-\frac{Dk}{q}\ge 1- \frac {bk}q > 0\eperiod$$
        In particular, we may fix some $\alpha \in \F_q$ such that $\dims S(\alpha) = D$. We take $U' =S(\alpha)$. It remains to show that $U'$ satisfies \cref{eq:strengthBoundedByDeg}.

        Fix $1\le i\le n$.  Then,
        \begin{align*}
        \dim(U'\cap V_i) &= D - \dim \varphi_i(U') \\&= D - \dim\inset{\varphi_i(\bP(\alpha))\mid \bP\in S} \\&= D - \dim\inset{\psi_i(\bP)(\alpha)\mid \bP\in S}
        \\&\ge
        D - \diml \inparen{\spnl\psi_i(S)} \\&\ge 
        D - \diml \psi_i(U)\eperiod
        \end{align*}
        The first inequality above follows from the first item of \cref{lem:EvalDimProb}, and the second holds because $U$ is $Q_{k, b}$-live. Hence,
        $$\rf_U(\Psi) = \sum_{i=1}^n \diml \psi_i(U) \ge \sum_{i=1}^n \inparen{D - \dims\inparen{U'\cap V_i}}\eperiod$$

    \end{proof}

    We give some intuition for $\rf$. Recall that for $U, U'$ as defined in the previous lemma, we have $D=\diml U=\dims U'$. By \cref{eq:strengthBoundedByDeg}, we see that
    \begin{align*}
        \pdeg(\cV, U', R) &= \sum_i \dims(\cV_i \cap U') - (1-R)n\dims U' \\
        &= -(\sum_i \dims U'-\dims(\cV_i \cap U)) + Rn\dims U' \\
        &\ge Rn\dims U' - \rf_U(\Psi) \\
        &= Rn\diml U - \rf_U(\Psi)
    \end{align*}
    Thus, we observe a close relationship between $\pdeg$ and $\rf$. The following lemma, which facilitates the transition to working exclusively with $\mathbb{F}_q(X)$-linear structures, makes this connection more explicit.

    \begin{lemma}\label{lem:LikelyToBeEmptyInitial}
        Fix a prime power $q$ and let $k,b\in \N$. Let $\Psi$ be a $b$-local polynomial profile with $\deg(\Psi) = 0$. Suppose that there exists $\lambda \ge 2b(b+1)$ such that for every $Q_{k,b}$-live $\F_q(X)$-linear space $U\subseteq \F_q(X)^b$ with $U\ne \{0\}$, there holds
        \begin{equation}
        \label{eq:LikelyToBeEmptyInitialStrengthGap}
        \rf_U(\Psi) \ge  (k+\lambda)\cdot \diml  U \eperiod
        \end{equation}
        Then, 
        $$\PROver{\balpha}{\dims Q_{k,b}[\Psi,\balpha] > 0} \le (2^b-1) \cdot \inparen{\frac{4e \cdot 2^{b} \cdot b^{2b+1}\cdot k\cdot n}{\lambda\cdot q}}^{\frac{\lambda}{2b}}\ecomma$$
        where $\balpha$ is sampled uniformly from $\F_q^n$.
    \end{lemma}  

   Before proving \cref{lem:LikelyToBeEmptyInitial}, we show that it implies \cref{prop:RSContainsProfile}.

    \begin{proof}[Proof of \cref{prop:RSContainsProfile} given \cref{lem:LikelyToBeEmptyInitial}]
        Let $\Psi_\cV$ be as in \cref{lem:LocalProfilesToPolynomialProfiles}. Take $\lambda = \eps n$.
        Let $U\subseteq \F_q(X)^b$ be a $Q_{k,b}$-live non-trivial $F_q(X)$-linear space and let $U'\subseteq \F_q^b$ be a corresponding $F_q$-linear space satisfying \cref{eq:strengthBoundedByDeg} with $\dims  U' = \diml  U$. Observe that 
        \begin{align*}
        \rf_U(\Psi_\cV) &\ge \sum_{i=1}^n\inparen{\diml  U - \dims \inparen{U'\cap V_i}} \\
        &= n\cdot\inparen{\diml  U - \dims  U'} + k\cdot \dim U' - \pdeg \inparen{\cV,U',R} \\
        &= k\cdot \dim U' - \pdeg \inparen{\cV,U',R} \\
        &\ge (k+\eps n)\cdot \dim U' & \textrm{by \cref{eq:RsContainsProfileRDef}} 
        \\&= (k+\lambda)\cdot \diml U\eperiod
        \end{align*}

        Therefore, by \cref{eq:LocalProfToPolyProfEquivalence,lem:LikelyToBeEmptyInitial},
        \begin{align*}
            \PR{\CRS[\F_q]{\alpha_1,\dots,\alpha_n}k\textrm{ contains a nonzero $n\times b$ matrix satisfying } \cV} &\le \PR{\dims Q_{k,b}(\Psi,\balpha) > 0} \\&\le (2^b-1) \cdot \inparen{\frac{(4b)^{4b}\cdot k\cdot n}{\lambda\cdot q}}^{\frac{\lambda}{2b}} \\ &=
            (2^b-1) \cdot \inparen{\frac{(4b)^{4b}\cdot k}{\eps\cdot q}}^{\frac{\eps n}{2b}}
            \eperiod
        \end{align*}
    \end{proof}
    
     The rest of this section is devoted to proving \cref{lem:LikelyToBeEmptyInitial}. The main ingredient in our proof is a potential function, which we now define.

    \subsection{A Potential Function for Local Polynomial Profiles}
	Let $\Psi = \inparen{\psi_1,\dots,\psi_n}$ be a $b$-local polynomial profile and let $\balpha = \inparen{\alpha_1,\dots, \alpha_n}$ be a uniformly random element of $\F_q^n$ . To prove \cref{lem:LikelyToBeEmptyInitial} we need to bound the probability that $Q_{k, b}[\Psi,\balpha] \supsetneq \inset{\bz}$. To this end, we will tackle a more general question: given a finite $\F_q$-linear space $S\subseteq \F_q[X]^b$, how likely is it that $S[\Psi,\balpha] \supsetneq \inset{\bz}$? To help answer this question, we define a potential function $\pot$. Given an $\F_q(X)$-linear space $W\subseteq \F_q(X)^b$, we denote
	$$d_W(S) := \dims \inparen{S\cap W}$$
	and
	$$\pot_W\inparen{S,\Psi} := d_W(S) - \rf_W(\Psi)\eperiod$$
	
	To motivate our definition of $\pot_W$, consider an iterative stochastic process in which the random evaluation points $\alpha_1,\dots, \alpha_n$ are revealed one by one.  For $0\le i \le n$, denote $\balpha_\ut i = \inparen{\alpha_1,\dots,\alpha_i}$. Also, $\Psi_\ut i = \inparen{\psi_1,\dots, \psi_i}$ and $\Psi_\uf i = \inparen{\psi_{i},\dots, \psi_n}$. Note that $S[\Psi_\ut i,\balpha_\ut i]$ is fully known after the $i$-th step of the stochastic process. Moreover, $$S = S[\Psi_\ut 0,\balpha_\ut 0]\supseteq S[\Psi_\ut 1,\balpha_\ut 1]\supseteq\dots\supseteq S[\Psi_\ut n,\balpha_\ut n] = S[\Psi,\balpha]\eperiod$$
 
    As we show below (\cref{lem:GammaChange}), the decrease in dimension in a single step, namely,\\ $d_W (S[\Psi_\ut{i-1},\balpha_\ut {i-1}]) - d_W( S[\Psi_\ut i,\balpha_\ut i])$, is at most $\diml  \psi_i(W)$. Therefore,
    \begin{align*}
	d_W(S[\Psi,\balpha]) &= d_W(S) - \inparen{\sum_{i=1}^n d_W (S[\Psi_\ut{i-1},\balpha_\ut {i-1}]) - d_W( S[\Psi_\ut i,\balpha_\ut i])} \\&\ge d_W(S) - \inparen{\sum_{i=1}^n \diml \psi_i(W)}  \\&= d_W(S) - \rf_W(\Psi)= \pot_W(S,\Psi))\numberthis \label{eq:POTPositiveBound}
\end{align*}

    Hence, if $\pot_W(S,\Psi)$ is positive then $S[\Psi,\balpha]\cap W$ is guaranteed to have positive dimension, implying, in particular, that $S[\Psi,\balpha] \supsetneq \{0\}$.     

    We turn to the examine the opposite case, namely, when $\pot_W(S,\Psi)$ is very negative (that is, much smaller than zero). Suppose that $\balpha$ is sampled uniformly at random from $\F_q^n$. One may hope to show that $S[\Psi,\balpha]\cap W = \{0\}$ holds in this setting with high probability. However, this is not always true. Indeed, a construction along the lines of \cref{ex:degMarkovNotTight} yields a setting in which $\pot_W(S,\Psi)$ is very negative, but $\pot_{W'}(S,\Psi)$ is positive for some nontrivial $\F_q(X)$-linear subspace $W'\subseteq W$. By \cref{eq:POTPositiveBound}, we then have $S[\Psi,\balpha]\cap W\supseteq S[\Psi,\balpha]\cap W' \supsetneq \{0\}$.

    In \cref{lem:LikelyToNotContainW} we manage to show that a somewhat weaker probabilistic phenomenon does always occur whenever $\pot_W(S,\Psi)$ is very negative. Namely, we show that $W$ is $S[\Psi,\balpha]$-dead with very high probability.
    
    \begin{lemma}[If $\pot_W(S,\Psi)$ is very negative then $W$ is probably $S\inbrak{\Psi,\balpha}$-dead]\label{lem:LikelyToNotContainW}
		Let $S\subseteq \F_q[X]^b$ be a $k$-bounded $\F_q$-linear space ($k\in \N$). Let $\Psi = \inparen{\psi_1,\dots, \psi_n}$ be a $b$-local polynomial profile and let $g = \deg \Psi$. Let $W \subseteq \F_q(X)^b$ be an $\F_q(X)$-linear space and let $D = \diml  W$. Let $L > 0$ and suppose that $\pot_W\inparen{S,\Psi}\le -L$. Then,
		\begin{equation}\label{eq:ProbOfSpannnigW}
			\PROver{\balpha\sim \uniform\inparen{\F_q^n}}{W\textnormal{ is }S[\Psi,\balpha]\textnormal{-live}} \le \binom n {\ceil{\frac{L}D}}\cdot\inparen{\frac{(k+g)D}{q}}^{\ceil{\frac{L}D}}\eperiod
		\end{equation}
		Furthermore, let $0 \le L'\le L$ and let $E_{L'}$ denote the event that at least one of the following occurs:
        \begin{itemize}
              \item $W$ is $S[\Psi,\balpha]$-live \emph {OR}
              \sloppy
              \item Let $0\le s\le n$ denote the minimum index for which $W$ is $S[\Psi_\ut s,\balpha_\ut s]$-dead. Also, $\pot_W\inparen{S\inbrak{\Psi_\ut{s}, \balpha_{\ut{s}}},\Psi_{\uf{s+1}}} \ge -L'$ holds.
        \end{itemize}
        Then, for all $0\le L'\le L$,
		\begin{equation}\label{eq:ProbOfSmallGamma}
			\PROver{\balpha\sim \uniform(\F_q^n)}{E_{L'}} \le \binom n {\ceil{\frac{L-L'}{D}}}\cdot \inparen{\frac{(k+g)D}{q}}^{\ceil{\frac{L-L'}D}}\eperiod
		\end{equation}
	\end{lemma}
    We prove \cref{lem:LikelyToNotContainW} in the following section.

        \subsection{If $\pot_W(S,\Psi)$ is Very Negative Then $W$ is Probably $S[\Psi,\balpha]$-Dead --- Proof of \cref{lem:LikelyToNotContainW}}\label{sec:LikelyToNoContainW}
        
        Let $S\subseteq \F_q[X]^b$ be an $\F_q$-linear space. Recall that for $\alpha \in \F_q$ we denote $S(\alpha) \coloneqq \inset{\bP(\alpha)\mid \bP\in S}$. We require a key observation about the evaluation map $S\to S(\alpha)$ in order to prove \cref{lem:LikelyToNotContainW}. We prove it in the following lemma.

	\begin{lemma}[The change in $\pot_W$ in a single step]\label{lem:GammaChange}
		Let $\Psi = \inparen{\psi_1,\dots, \psi_n}$ be a $b$-local polynomial profile, let $S\subseteq \F_q[X]^b$ be a $k$-bounded $\F_q$-linear space ($k\in \N$). Let $W \subseteq \F_q(X)^b$ be  an $S$-live $\F_q(X)$-linear space.

        Given $\alpha \in \F_q$, write $p = \pot_W(S,\Psi)$ and $p_\alpha = \pot_W(S[\psi_1, \alpha],\Psi_\uf 2)$. Also, let $D = \diml  \psi_1(W)$ and $g = \deg \psi_1$.
        
        Then, $$p\le p_\alpha \le p +D$$ 
        for all $\alpha \in \F_q$. Furthermore,
		$$\PROver{\alpha\sim \uniform(\F_q)}{p_\alpha = p} \ge 1 - \frac{(k+ g)\cdot D} q\eperiod$$
	\end{lemma}
	\begin{proof}
		Denote
            $$t_\alpha := \dims  (W\cap S) - \dims \inparen{W\cap S[\psi_1, \alpha]} = \dims  \inparen{\psi_1\inparen{W\cap S}(\alpha)}\eperiod$$ Observe that
            $$p_\alpha = p - t_\alpha +D\eperiod$$

            Since $\spnl (W\cap S) = W$, we also have $$\diml \spnl (\psi_1(W\cap S)) = \diml (\psi_1(W)) = D\eperiod$$ 
            Observe that every tuple in $\psi_1(W\cap S)$ is $(g+ k)$-bounded.         
            Hence,
            applying \cref{lem:EvalDimProb} to the space $\psi_1\inparen{W\cap S}$ yields $0\le t_\alpha\le D$ for all $\alpha$, and $\PROver{\alpha\sim \uniform(\F_q)}{t_\alpha = D} \ge 1 - \frac{D\cdot (g+ k)}q$.
            
	\end{proof}
	
	We can now prove \cref{lem:LikelyToNotContainW}.

	\begin{proof}[Proof of \cref{lem:LikelyToNotContainW}]
		It suffices to prove \cref{eq:ProbOfSmallGamma}, since the latter implies \cref{eq:ProbOfSpannnigW} by taking $L' = 0$. We turn to prove \cref{eq:ProbOfSmallGamma}. Let $\balpha$ be uniformly sampled from $\F_q^n$. For $0\le i\le n$ denote
        $S_i = S[\Psi_\ut i, \balpha_\ut i]$ and
        $p_i = \pot_W(S_i,\Psi_\uf {i+1})$. Recall that $p_0 = -L$.
        
       Observe that the bad event $E_{L'}$ implies that there is some $1\le t\le n$ such that $W$ is $S_{t-1}$-live and $p_t \ge -L'$. Indeed, suppose that $E_{L'}$ holds and consider two cases. If $W$ is $S_n$-dead, then taking $t=s$ (recall that $s$ is the minimum index for which $W$ is $S_s$-dead) satisfies the claim. Otherwise, take $t=n$. Since $W$ is $S_n$-live it is also $S_{n-1}$-live. Now, $$p_t = p_n = \dims  (W\cap S_n) \ge 0 \ge -L'\ecomma$$
       and the claim holds. Therefore,
        $$\PROver{\balpha}{E_{L'}}\le \PROver{\balpha}{p_t \ge -L' }\eperiod$$

        Let
		$$
		M = \inset{i\in [n] \mid p_{i+1} > p_i}\ecomma
		$$
		namely, $M$ is the set of steps in which the potential function strictly increases. By the first part of \cref{lem:GammaChange}, each step in $M$ increases the potential by at most $D$. Hence, $$p_t \le p_0 + D\cdot \inabs{M \cap [t]} = -L + D\cdot \inabs{M\cap [t]}\eperiod$$
		Thus, a necessary event for $p_t\ge -L'$ is that
		$\inabs{M\cap [t]} \ge m$, where $m = \ceil{\frac{L-L'}D}$. For $0\le i\le n$, let $G_i$ denote the event that $W$ is $S_i$-live (note that $G_i$ implies $G_0,\dots, G_{i-1}$). Note that the events $G_0,\dots, G_{t-1}$ hold by definition of $t$. Thus, taking $I = M\cap [t]$, we have
		\begin{align*}
			\PROver{\balpha}{E_{L'}}\le \PROver{\balpha}{p_t\ge -L'} &\le \PROver{\balpha}{\bigvee_{\substack{I\subseteq [n]\\|I| = m}}\inparen{ I\subseteq M} \wedge G_{\max I-1}} \\
			&\le \sum_{\substack{I\subseteq [n]\\|I| = m}} \PROver{\alpha_1,\dots\alpha_{\max I}}{\inparen{I\subseteq M} \wedge G_{\max I - 1}}\eperiod
		\end{align*}
		The sum on the right-hand side has $\binom{n}{m}$ terms. To prove \cref{eq:ProbOfSmallGamma}, it suffices to show that each of these terms is at most  $\inparen{\frac{(k+g)D}{q}}^{m}$. We proceed to do so.
		
		Fix $I\subseteq [n]$ with $|I| = m$. Write $I = \inset{i_1,\dots, i_m}$ where $i_1 < i_2 <\dots < i_m$. Then,
		\begin{align*}
			\PROver{\alpha_1,\dots\alpha_{i_m}}{I\subseteq M \wedge G_{i_m-1}} &= \PROver{\alpha_1,\dots\alpha_{i_m}}{\inparen{I\subseteq M} \wedge G_0 \wedge G_1\wedge \dots \wedge G_{i_m - 1}} \\
			&\le \prod_{j=1}^m \PROver{\alpha_1,\dots,\alpha_{i_j}}{i_j\in M \mid \inparen{\inset{i_1,\dots,i_{j-1}}\subseteq M} \wedge G_{i_j-1}} \eperiod
		\end{align*}
        The inequality follows from the fact that $G_i$ implies $G_0 \land \ldots \land G_{i-1}$.
		To finish the proof we show that each term on the right-hand side is at most $\frac{(k+g)D}q$. Let $1\le j\le m$.
        
        Condition on $\alpha_1,\dots, \alpha_{i_j-1}$ for which $\inset{i_1,\dots,i_{j-1}}\subseteq M \wedge G_{i_j-1}$ holds. In particular, under this conditioning, $S_{i_j-1}$ is determined and satisfies $\spnl  (S_{i_j-1} \cap W) = W$. Furthermore, recall that $i_j \in M$ is the event where the potential function increases after substituting a random $\alpha_{i_j}$. Thus, for all $\alpha_1,\dots, \alpha_{i_j-1}$ for which $\inset{i_1,\dots,i_{j-1}}\subseteq M \wedge G_{i_j-1}$ holds, by the second part of \cref{lem:GammaChange}, we have
        \[
            \PROver{\alpha_1,\ldots,\alpha_{i_j}}{i_j\in M \mid \alpha_1,\dots, \alpha_{i_j-1}} \le \frac{(k+g)D}{q}.
        \]
        Because the events $\inset{i_1,\dots,i_{j-1}}\subseteq M$ and $G_{i_j-1}$ only depend on $\alpha_1,\dots, \alpha_{i_j-1}$, we see that the $j$-th term in the product above is at most $\frac{(k+g)D}q$.
	\end{proof}

    \subsection{If $\pot_U(S,\Psi)$ is Very Negative for All $U$ Then $S[\Psi,\balpha]$ is Probably Trivial} \label{sec:LikelyToBeEmpty}
        In this section, we prove a stronger version of \cref{lem:LikelyToNotContainW}, which directly implies  \cref{lem:LikelyToBeEmptyInitial}.
            \begin{lemma}[If $\pot_U(S,\Psi)$ is very negative for all $U$ then $S\inbrak{\Psi,\balpha}$ is probably trivial]\label{lem:LikelyToBeEmpty}
		Let $S\subseteq \F_q[X]^b$ be a $k'$-bounded $\F_q$-linear space ($k'\in \N$). Let $W = \spnl  S$ and write $D_W = \diml  W$. Let $\Psi = \inparen{\psi_1,\dots, \psi_n}$ be a $b$-local polynomial profile and denote $g := \max\inset{\deg \Psi, b\cdot k'}$. 
          Suppose that there exists some \begin{equation}\label{eq:lambdaLB}
			\lambda \ge 2\cdot D_W\cdot(D_W+1)
		\end{equation} such that
		\begin{equation} \label{eq:potentialLB}
			\pot_U\inparen{S,\Psi} \le -\lambda \cdot \diml U
		\end{equation}
		for every $S$-live $\F_q(X)$-linear subspace $U\subseteq W$.
        Then,
		$$
		\PR{S\inbrak{\Psi,\balpha} \supsetneq \{0\}} \le  \inparen{2^{D_W}-1}\cdot \inparen{\frac{4e \cdot 2^{D_W} \cdot b^{2D_W}{\cdot g \cdot n}}{\lambda\cdot q}}^{\frac{\lambda}{2D_W}}\eperiod
		$$
	\end{lemma}

        \begin{proof}[Proof of \cref{lem:LikelyToBeEmptyInitial} given \cref{lem:LikelyToBeEmpty}]
            The claim follows by applying \cref{lem:LikelyToBeEmpty} to $S := Q_{k,b}$ and $\Psi$. Note that $g = b\cdot k$. Also, \cref{eq:LikelyToBeEmptyInitialStrengthGap} yields
            $$\pot_U(S,\Psi) = \dims(U\cap Q_{k,b}) - \rf_U(\Psi) \le k\cdot \diml U - \rf_U(\Psi) \le -\lambda \cdot \diml U$$
            for every $Q_{k,b}$-live $U\subseteq W$. The first inequality, $\dims(U\cap Q_{k,b})\le k\cdot \diml U$, follows from the following argument. Let $G\in \F_q(X)^{b\times \diml U}$ be a generating matrix for $U$ in systematic form, namely, $U = \inset{Gx \mid x\in \F_q(X)^{\diml U}}$. Therefore, $G$ contains a $\diml U \times \diml U$ identity matrix as a sub-matrix. Due to the systematic form of $G$, we have
            $$U\cap Q_{k,b} \subseteq \inparen{Gx\mid x\in Q_{k,\diml U}}\eperiod$$
            Thus, $$|U\cap Q_{k,b}|\le |Q_{k,\diml U}| = q^{k\cdot \diml U}\ecomma$$ and so,
            $$\dims (U\cap Q_{k,b}) = \log_q |U\cap Q_{k,b}| \le k\cdot \diml U\eperiod$$
        \end{proof}

        One may attempt to prove \cref{lem:LikelyToBeEmpty} by the following straighforward approach. Denote $S_n = S[\Psi,\balpha]$. An equivalent condition for $S_n= \{0\}$, is that every $\F_q(X)$-linear space $U\subseteq \F_q(X)^b$ is $S_n$-dead.  \cref{eq:lambdaLB,lem:LikelyToNotContainW} imply that this is very likely for any given $U$, so it seems reasonable to attempt a union bound over all $S$-live subspaces $U$ (an analogous union bound over subspaces is inherent to our proof of \cref{prop:RLCThresholdForProfile}). Unfortunately, the number of $S$-live $\F_q(X)$-linear subspaces (roughly $q^{\Omega\inparen{k\cdot b^2}}$) is too large for this approach to work. Therefore, we require a subtler strategy.

        In our proof of \cref{lem:LikelyToBeEmpty}, we follow the run of the iterative stochastic process described at the opening of this subsection. We run the process until a time $s$ in which one of two stopping conditions occurs. The first condition is that the space $W$ (namely, the $\F_q(X)$-span of $S$) is $S_s$-dead, at which point we continue the run with regard to $S_s$, whose $\F_q(X)$-span is now smaller. This is the simple case. 

        The second stopping condition is that some $\F_q(X)$-subspace $U\subseteq W$ has become ``worrisome''. By this, we mean that $p_{U,s} := \pot_{U}(S_s,\Psi_{\uf{s+1}})$ has increased and gotten too close to zero from the negative side. This is indeed worrying since, if this potential becomes positive, it is guaranteed that $S_n \cap U \ne \{0\}$, so we cannot hope to have $S_n = \{0\}$.  In this case, we proceed recursively by analyzing two different runs. First, we consider the set $S_s \cap U$ and show the remaining negative potential $p_{U,t}$ is still negative enough to ensure that $S_n\cap U = \{0\}$ with high probability. Second, we define a ``quotient set'' $S/U$ and ``quotient profile'' $\Psi/U$. Using the fact that $p_{U,s}$ is only slightly negative, we show that this quotient run likely leads to the $\{0\}$-set, meaning that $S_n\subseteq U$ is likely. Together, $S_n\cap U = \{0\}$ and $S_n\subseteq U$ imply that $S_n = \{0\}$.

	\begin{proof}[Proof of \cref{lem:LikelyToBeEmpty}]
        Throughout this proof, if $U$ is an $\F_q(X)$-linear space, we denote $D_U = \diml  U$. We prove the claim by induction on $D_W$.	The case $D_W=0$ is immediate. Suppose that $D_W \ge 1$. Write $L = \lambda\cdot D_W$ and let $L' = \lambda^*\cdot D_W$, where \begin{equation}\label{eq:lambda'Def}\lambda^* = \ceil*{\frac{(D_W-1)\cdot \lambda}{D_W}+D_W}\eperiod\end{equation}
        Let $\balpha$ be sampled uniformly from $\F_q^n$. For $0\le i\le n$, denote $S_i = S[\Psi_{\ut i}, \balpha_{\ut i}]$ and $p_{U,i} = \pot_U(S_i, \Psi_{\uf {i+1}})$ (where $U\subseteq W$ is $F_q(X)$-linear).

        Let $1\le t$ denote the minimal index at which $W$ is $S_t$-dead (if such an index does not exist, let $t = \infty$). Let $F$ denote the event that $t < \infty$ and $p_{W,t} < -L'$.

        Let $1\le s$ denote the minimal index in which at least one of the following two events occur:
        \begin{enumerate}
			\item $W$ is $S_s$-dead OR
			\item For some $S_s$-live $\F_q(X)$-linear space $\{0\}\ne U\subsetneq W$, there holds
			\begin{equation}\label{eq:tCondition}
				p_{U,s} > - \lambda^*\cdot D_U\eperiod
			\end{equation}
		\end{enumerate}
        Let $F'$ denote the event that $s < \infty$ and $p_{W,s} < -L'$. Observe that $F$ implies $F'$. Indeed, assume that $F$ holds. Clearly, $s\le t$, so $s < \infty$. By \cref{lem:GammaChange}, $p_{W,0},\dots, p_{W,n}$ is non-decreasing, so $p_{W,s}\le p_{W,t} < -L'$, implying $F'$. Therefore, by \cref{lem:LikelyToNotContainW},
        \begin{align*}
            \PROver{\balpha}{\overline{F'}} \le \PROver{\balpha}{\overline F} &\le  \binom{n}{\ceil{\frac{L-L'}{D_W}}}\cdot \inparen{\frac{(k'+\deg \Psi)\cdot D_W}q}^{\ceil{\frac{L-L'}{D_W}}} \\
            &\le \binom{n}{\ceil{\frac{L-L'}{D_W}}}\cdot \inparen{\frac{2g\cdot D_W}q}^{\ceil{\frac{L-L'}{D_W}}} 
            \\
            &=  \binom n {\ceil{\lambda - \lambda^*}}\cdot \inparen{\frac{2g\cdot D_W}{q}}^{\ceil{\lambda - \lambda^*}} \\
            &\le \inparen{\frac{2e\cdot n\cdot g\cdot D_W}{(\ceil{\lambda-\lambda^*})q}}^{\ceil{\lambda-\lambda^*}}\\
            &\le \inparen{\frac{4e\cdot n\cdot g\cdot D_W^2}{\lambda q}}^{\ceil{\lambda-\lambda^*}} &\text{by \cref{eq:lambdaLB,eq:lambda'Def}}\\
			&\le \inparen{\frac{4e\cdot n\cdot g\cdot D_W^2}{\lambda q}}^{\frac{\lambda}{D_W}-D_W-1} &\text{by \cref{eq:lambda'Def}} \\
			&\le \inparen{\frac{4e\cdot n\cdot g\cdot D_W^2}{\lambda q}}^{\frac{\lambda}{2D_W}} 
			\eperiod\numberthis \label{eq:PrE}
        \end{align*}

    We will show that, given $F'$, the event $S_n = \{0\}$ is highly likely. This statement, combined with \cref{eq:PrE}, will complete the proof.
    
    Henceforth, we condition on the event $F'$. Note that $\alpha_1,\dots,\alpha_s$ fully determine $F'$ and $s$. Therefore, the evaluation points $\alpha_{s+1},\dots,\alpha_n$ are unaffected by this conditioning and remain uniformly distributed and independent.

       We will assume that $S_s \ne \{0\}$, because otherwise $S_n \subseteq S_s = \{0 \}$, and we are done.
       We select an $S_s$-live $\F_q(X)$-linear subspace $U\subsetneq W$ as follows.
       If $W$ is $S_s$-dead, we set $U = \spnl  S_s$. Otherwise, by our assumption of $F'$ and definition of $s$, there exists an $S_s$-live subspace $\{0\} \ne U \subsetneq W$ with $p_{U,s} > -\lambda^*\cdot D_U$. We then take $U$ to be some inclusion-maximal subspace with this property.
       Note that in either case, $U \ne \{0 \}$ and therefore $\diml U \ge 1$.

       We now consider two events: 
            $S_n\cap U = \{0\}$ and 
            $S_n \subseteq U$.       
       We will prove that each of these events is very likely, and thus, so is their conjunction, namely, $S_n = \{0\}$.
       
        \subsubsection*{Probability of $S_n \cap U = \{0\}$}
        To prove that $S_n \cap U = \{0\}$ is very likely, we apply the induction hypothesis to the set $S' := S_s\cap U$ and the polynomial profile $\Psi' := \Psi_{\uf{s+1}}$. Recall that $U$ is $S_s$-live, which implies $U$ is also $S'$-live, therefore $U = \spnl (S' \cap U) = \spnl S'$ holds. Moreover $D_U < D_W$.

        Since $S$ is $k'$-bounded, so is $S'$. Similarly, $\deg \Psi' \le  g$. 

        Let $V\subseteq U$ be $\F_q(X)$-linear and $S'$-live. Denote $D_V \coloneqq \diml V$. Then,
        
        \begin{align*}
		\pot_V(S',\Psi') &= p_{V,s}\\ &\le p_{V,s-1} + D_V &\text{by the first part of \cref{lem:GammaChange}}\\&\le -\lambda^* \cdot D_V + D_V &\text{by definition of $s$}\\ &= -(\lambda^* -1 )\cdot D_V\eperiod 
        \end{align*}
        Also, $$\lambda^*-1 \ge \frac{(D_W-1)\cdot \lambda}{D_W} \ge 2\cdot(D_W-1)(D_W+1) \ge 2 D_U(D_U+1)\eperiod$$
        Hence, $\lambda^*-1$ satisfies \cref{eq:lambdaLB,eq:potentialLB}.
        Thus, the induction hypothesis is indeed valid here, and yields
       \begin{align*}
		\PROver{\balpha}{S_n\cap U \supsetneq \{0\}\mid F'} &\le \inparen{2^{D_U}-1}\cdot \inparen{ \frac{4e\cdot 2^{D_U} \cdot b^{2D_U} \cdot g\cdot n}{q(\lambda^*-1)}}^{\frac{\lambda^*-1}{2D_U}} \numberthis\label{eq:probSCapUTrivial}
        \end{align*}
        
        \subsubsection*{Probability of $S_n\subseteq U$}
        
	We turn to proving a lower bound on the probability of $S_n\subseteq U$. If $W$ is $S_s$-dead then $U = \spnl  S_s \supseteq S_n$, so the event holds deterministically. We therefore proceed under the assumption that $W$ is $S_s$-live, and that $U$ is an $S_s$-live $\F_q(X)$-linear subspace satisfying $p_{U,s} > -\lambda^* \cdot D_U$.
    
    Our strategy is to use $\Psi$ to construct a new polynomial profile $\Psi'$, such that applying the induction hypothesis to $\Psi'$ implies that $S_n\subseteq U$ with high probability. For each entry $\psi_i$ in $\Psi$, the corresponding entry $\psi'_i$ of $\Psi'$ is a polynomial map chosen so that the following diagram commutes:
    \[
    \begin{tikzcd}
    \mathbb{F}_q(X)^b \arrow[r, "\psi_i"] \arrow[d, "\pi"'] & \mathbb{F}_q(X)^a \arrow[r, "\chi_i"] & \mathbb{F}_q(X)^r \\
    \mathbb{F}_q(X)^{b - d_U} \arrow[rru, "\psi_i'"']
    \end{tikzcd}\eperiod
    \]
    Here, $\pi$ and $\chi_i$ are polynomial maps whose kernels are  $U$ and $\psi_i(U)$, respectively.
    
    In order to apply the induction hypothesis to $\Psi'$, we also need to control the degrees of the polynomial maps $\pi$ and $\psi'$. This construction is achieved by \cref{claim:Cramer,claim:MapToKernel,claim:MapOfPsi,claim:polynomialMapZeroes}, which rely on notions from \cref{def:polyMap}. 
            
            \begin{claim}\label{claim:Cramer}
                Let $\mT\in \F_q[X]^{p\times e}$ ($p \ge e$) be a full-rank (over $\F_q(X)$) matrix of $d$-bounded polynomials ($d\in \N$). Then, there is a full-rank matrix $\mT^*\in \F_q[X]^{(p-e)\times p}$ of $(e\cdot d)$-bounded polynomials such that $\ker \mT^* = \im \mT$.
            \end{claim}
            \begin{proof}
                Suppose without loss of generality that $\mT$ can be written as a block matrix $\inbrak{\begin{array}{c}
					\mT_1\\
					\hline
					\mT_2
			\end{array}}$ where $\mT_1 \in \F_q[X]^{e\times e}$ has full-rank over $\F_q(X)$, and $\mT_2\in \F_q[X]^{(p-e)\times e}$.
			
			Let $\mT^*\in \F_q(X)^{(p-e)\times p}$ be the block matrix $\det(\mT_1)\cdot\inbrak{\begin{array}{c|c}
					-\mT_2\mT_1^{-1}&\mI_{p-e}
			\end{array}}$. Observe that $\ker \mT^* = \im \mT$. By Cramer's Rule, every entry of $\det(\mT_1)\cdot \mT_1^{-1}$ is a polynomial of degree at most $(e-1)\cdot d$. Thus, every entry of $\mT^*$ is a polynomial of degree at most $e\cdot d$.
            \end{proof}
		\begin{claim}\label{claim:MapToKernel}
			There exists an $\F_q(X)$-linear polynomial map $\pi \colon \F_q(X)^b\to \F_q(X)^{b-D_U}$ with $\ker \pi = U$ and $\deg \pi \le D_U\cdot k'$.
   
		\end{claim}
		\begin{proof}
			Since $U$ is $S_s$-live, there is a matrix $\mT\in \F_q(X)^{b\times D_U}$ such that $\im \mT = U$ and the columns of $\mT$ belong to $S_s$. In particular, because $S_s$ is a $k'$-bounded space, every entry of $\mT$ is a polynomial in $\F_q[X]$ of degree at most $k'$. Therefore, $\mT$ satisfies the conditions of \cref{claim:Cramer} with $p=b, e=D_U$, and $d=k'$. Let $\mT^*$ be the matrix guaranteed by \cref{claim:Cramer}. Take $\pi$ to be the $\F_q(X)$-linear map that $\mT^*$ represents in the standard basis. The claim follows immediately.
		\end{proof}

            \begin{claim}\label{claim:MapOfPsi}
                Let $\pi$ be as in $\cref{claim:MapToKernel}$ and let $\psi \colon \F_q(X)^b\to \F_q(X)^a$ ($a\le b$) be an $\F_q(X)$-linear polynomial map of degree $d$ and rank $a$. Then, there exist $r\in \N$ and $\F_q(X)$-linear maps $\psi' \colon \F_q(X)^{b-d_U}\to \F_q(X)^{r}$ and $\chi \colon \F_q(X)^a\to \F_q(X)^{r}$ with the following properties:
                \begin{itemize}
                    \item $\chi\circ \psi = \psi'\circ \pi$.
                    \item $\psi'$ is a polynomial map of degree at most $\max\{b\cdot d,b\cdot D_U\cdot k'\}$
                    \item $\ker \chi = \psi(U)$
                \end{itemize}
            \end{claim}
            \begin{proof}
                In this proof, all dimensions and ranks are with regard to the field $\F_q(X)$.
                Let $\mM_\pi \in \F_q[X]^{(b-D_U)\times b}$ and $\mM_\psi \in \F_q[X]^{a\times b}$ be full-rank matrices representing $\pi$ and $\psi$ in the standard basis, respectively. By \cref{claim:MapToKernel} we know that every entry in $\mM_\pi$ is a polynomial having degree at most $D_U\cdot k$. By hypothesis, we assume that every entry in $\mM_\psi$ is a polynomial having degree at most $d$. By a slight abuse of notation, we will use $\spnl (\mM)$ to denote the $\F_q(X)$-span of the columns of some matrix $\mM$.

                Let 
                \begin{align*}
                r&\coloneqq \diml\inparen{\spnl(\mM_\psi^\intercal) \cap \spnl(\mM_\pi^\intercal)} \\
                &= b - \diml\inparen{\ker \psi + \ker \pi} \\
                &= b - \diml(\ker \psi) - \diml(\ker \pi) + \diml(\ker \psi\cap \ker \pi) \\
                &= a - \inparen{\diml(\ker \pi) + \diml\inparen{\ker \psi\cap \ker \pi}} = a - \inparen{\diml(U) - \diml\inparen{\ker \psi \cap U}} \\ &= a - \diml\inparen{\psi(U)}\eperiod
                \end{align*}
                The first equality follows by the rank-nullity theorem and De-Morgan's law.
                 Let $\mZ \in \F_q[X]^{(b-D_U+a)\times b}$ denote the block matrix 
                 $\inbrak{\begin{array}{c}
					\mM_\pi\\
					\hline
					\mM_\psi
			\end{array}}$.
                Define $e \coloneqq b-D_U+a-r \le b$ and note that $\rankl \mZ = e$. Let $\mT\in \F_q[X]^{(b-D_U+a)\times e}$ be a submatrix of $\mZ$ obtained by selecting a set of $e$ columns so that $\rankl \mT = e$. Let $\mT^*\in \F_q[X]^{r\times (b-D_U+a)}$ be the matrix guaranteed by \cref{claim:Cramer}, corresponding to $\mT$. In particular, $\ker \mT^* = \spnl (\mT) = \spnl (\mZ)$.

                Write $\mT^*$ as a block matrix $\inbrak{\begin{array}{c|c}
					\mP&\mQ
			\end{array}}$
                where $\mP\in \F_q[X]^{r\times (b-D_U)}$ and $\mQ\in \F_q[X]^{r\times a}$. We take $\psi'$ and $\chi$ to be the $\F_q(X)$-linear maps represented by $\mP$ and $-\mQ$, respectively. We turn to show that $\psi'$ and $\chi$ satisfy the required properties.

                The first property is equivalent to $-\mQ\mM_\psi = \mP\mM_\pi$. This is true since $\mT^*\mT = 0$, thus, implying $\mT^*\mZ=0$. Therefore,
                $$0 = \mT^*\mZ = \mP\mM_\pi+\mQ\mM_\psi\eperiod$$
                For the second condition, \cref{claim:Cramer} guarantees that each entry of $\mT^*$, and thus also of $\mP$, has degree at most $\max\{d,D_U\cdot k'\}\cdot e \le \max\{b\cdot d,b\cdot D_U\cdot k'\}$.

                For the third condition, let $u\in U$, and recall that $U = \ker \pi$. Then,
                $$0 = \mT^* \mZ u = \mP\mM_\pi u + \mQ \mM_\psi u = \psi'(\pi(u)) +  \chi(\psi(u)) = \chi(\psi(u))\ecomma$$
                where $\psi'(\pi(u))=0$ as $u \in \ker \pi$, so $\psi(U)\subseteq \ker \chi$. On the other hand, suppose that $x\in \F_q(X)^a$ such that  $\chi(x)=0$. Let $y \coloneqq \bz_{b-D_U}\circ x$ denote the concatenation of $b-D_U$ zeros and $x$. Then, $\mT^*y = \mQ x = \chi(x) = 0$, so $y\in \ker \mT^* = \im \mZ$. Let $w\in \F_q(X)^b$ be such that $\mZ w = y$. Then, $\pi(w) = \mM_\pi w = 0$ and $\psi(w) = \mM_\psi w = x$ hold by the definition of $y$. This implies that $w\in \ker \pi = U$ and $x\in \psi(U)$.
            \end{proof}

            \begin{claim}\label{claim:polynomialMapZeroes}
                Let $\chi \colon \F_q(X)^a\to \F_q(X)^r$ be a polynomial map. Let $\alpha \in \F_q$ and $p\in \F_q[X]^b$ be such that $p(\alpha) = 0$. Then, $\chi(p)(\alpha) = 0$.
            \end{claim}
            \begin{proof}
                It suffices to prove the claim for $r=1$. Write $\chi(p) = \sum_{i=1}^a q_i p_i$ where $q_i \in \F_q[X]$ for each $i$. Then, $$\chi(p)(\alpha) = \sum_{i=1}^a q_i(\alpha)\cdot p_i(\alpha) = \sum_{i=1}^a q_i(\alpha)\cdot 0 = 0\eperiod$$
            \end{proof}

            To lower bound the probability of $S_n\subseteq U$ we apply the induction hypothesis to a space of polynomials and a polynomial profile that we now construct. Let $\pi$ be as in \cref{claim:MapToKernel} and let $S'=\pi(S_s)$. Recall that $\Psi$ is a $b$-local polynomial profile, therefore for every $i$ such that $s+1 \le i\le n$, we have a polynomial map $\psi_i \colon \F_q(X)^b \to \F_q(X)^{a_i}$, for some $0 \le a_i \le b$. By applying \cref{claim:MapOfPsi} for $\pi$ and every $\psi_i$ for $s+1 \le i \le n$, we see that there exists an $r_i$ and $\F_q(X)$-linear maps $\psi_i' \colon \F_q(X)^{b-D_U}\to \F_q(X)^{r_i}$ and $\chi_i \colon \F_q(X)^{a_i}\to \F_q(X)^{r_i}$. Let $\Psi' = \inparen{\psi'_{s+1},\dots, \psi'_n}$. Let $\balpha' = \balpha_{\uf{s+1}}$.

            We claim that 
            \begin{equation}\label{eq:sn-empty-implies-within-u}
                S'[\Psi',\balpha'] = \{0\} \quad \implies \quad S_n\subseteq U\eperiod
            \end{equation}
            Indeed, suppose that $S'[\Psi',\balpha'] = \{0\}$ and let $p \in S_n$. We need to show that $p\in U$. Let $p' = \pi(p)$ and let $i$ be such that $s+1\le i\le n$. Since $p \in S_n$, we have $\psi_i(p)(\alpha_i) = 0$. Therefore,
            $$\psi'_i(p')(\alpha_i) = ((\psi'_i\circ \pi) (p))(\alpha_i) = ((\chi_i\circ \psi_i)(p))(\alpha_i) = \chi_i (\psi_i(p))(\alpha_i) = 0\eperiod$$
            Where the last equality is due to \cref{claim:polynomialMapZeroes}. Therefore, $p' \in S'[\Psi',\balpha']$, whence $\pi(p) = p' = 0$. Thus, $p\in \ker \pi = U$.

            \paragraph{Applying the induction hypothesis: }
            Observe that because $W$ is $S_s$-live and $S_s \subseteq S \subseteq W$, $\spnl S_s = W$ holds. Therefore $\spnl S' = \spnl \pi(S_s) = \pi( \spnl S_s) = \pi(W)$. Recall that $\diml U \ge 1$, let $W' := \spnl  S' = \pi(W)$ and note that $\diml  W' = \diml W - \diml (\ker \pi \cap W) = \diml W - \diml U < \diml W$. We will now apply the induction hypothesis to $S'$, $\Psi'$ and $W'$, after showing that they satisfy the required conditions. First, note that $W'$ satisfies $W' = \spnl S'$ by definition, and we just showed that $\diml W' < \diml W$.

            We now prove that for every $S'$-live $\F_q(X)$-linear space $V'\subseteq W'$, the following holds
            \[
                \pot_{V'}(S',\Psi') \le -\lambda^*\cdot D_{V'} \eperiod
            \]
            This is easily seen to be true when $V'$ is the zero subspace, hence assume that it is not.
            Let $U' = \pi^{-1}(V')$ and note that $U\subset U'\subseteq W$. We now prove the following claim:
            \begin{claim}\label{claim:piIntersection}
                Let $S_s, S', V', U'$ be as defined above. The following holds:
                \begin{equation}
                \pi(S_s \cap U') = \pi(S_s) \cap \pi(U') = S' \cap V'\eperiod
                \end{equation}
            \end{claim}
            \begin{proof}
                The second equality is true by definitions of $S', U'$, so we prove the first equality. The containment $\pi(S_s \cap U') \subseteq \pi(S_s) \cap \pi(U')$ is easy to see. The other containment is proved as follows: fix a $p \in \pi(S_s) \cap \pi(U')$. Because $p \in S'$, there is a $q \in S_s$ so that $p = \pi(q)$. Now $q$ must also be in $U'$, because if not, then $\pi(q) \not\in V'$ by the definition of $U'$, which would contradict the fact that $\pi(q) = p \in V'$. 
            \end{proof}

            Combining \cref{claim:piIntersection} and the following two equations:
            \[
                \ker \pi \cap (S_s \cap U') = U \cap S_s \cap U' = U \cap S_s
            \]
            \[
                \dims (\pi(S_s \cap U')) = \dims (S_s \cap U') - \dims (\ker \pi \cap (S_s \cap U')) \ecomma
            \]
            we see that the following is true
            \begin{equation}\label{eq:smallDimS'V'}
                \dims (S'\cap V') = \dims (S_s\cap U') - \dims (U \cap S_s)\eperiod
            \end{equation}
            Observe that we have used $\dims$ (and not $\diml$) in the previous sentence. Because the spaces and maps we are considering are $\F_q$-linear (in fact, they are $\F_q(X)$-linear) subspaces and maps respectively, we can apply rank-nullity theorem while viewing them as linear over the field $\F_q$.

            Note that for $s+1\le i\le n$, 
            \begin{align}
            \diml (\psi_i'(V')) &= \diml (\psi'_i\circ\pi(U'))) = \diml (\chi_i\circ \psi_i(U')) \nonumber\\&= \diml (\psi_i(U')) - \diml (\psi_i(U')\cap \ker \chi_i) = \diml (\psi_i(U')) - \diml (\psi_i(U)) \label{eq:dimOfPsiPrimeVPrime}
            \end{align}
            where we used the fact that $\ker \chi_i = \psi_i(U)$ for the last equality.
            Now, by \cref{eq:smallDimS'V'} and \cref{eq:dimOfPsiPrimeVPrime} we have
            \begin{align*}
                \pot_{V'}(S',\Psi') &= \dims(S' \cap V') - \rf_{V'}(\Psi') \\
                &= \dims(S_s\cap U')- \dim(U \cap S_s) - \rf_{U'}(\Psi_{\uf {s+1}}) + \rf_{U}(\Psi_{\uf {s+1}}) \\
                &= p_{U',s} - p_{U,s}\eperiod
            \end{align*}

            We will now prove
            \[
                p_{U',s} \le -\lambda^* D_{U'}\eperiod
            \]
            Observe that because $U$ is an inclusion-maximal $S_s$-live subspace such that $p_{U,s} > -\lambda^* \cdot D_U$, it is suffices to prove that $U'$ is $S_s$-live. Note that $\spnl(S' \cap V') = V'$ because $V'$ is $S'$-live, and by \cref{claim:piIntersection} and $\F_q(X)$-linearity of $\pi$, we see that
            \begin{equation}\label{eq:V'EqualsPiOfSsU}
                V' = \spnl(S' \cap V') = \spnl(\pi(S_s \cap U')) = \pi(\spnl(S_s \cap U')) \eperiod
            \end{equation}
            Because $U$ is $S_s$-live, $\spnl(S_s \cap U')$ contains $\ker \pi = U$ and therefore we have that $\pi^{-1}(\pi(\spnl(S_s \cap U'))) = \spnl(S_s \cap U')$. This together with \cref{eq:V'EqualsPiOfSsU} implies $\spnl(S_s \cap U') = \pi^{-1}(V') = U'$, and so $U'$ is $S_s$-live.

            Therefore,
            $$\pot_{V'}(S',\Psi') \le -\lambda^* (D_{U'} - D_{U}) = -\lambda^*\cdot D_{V'}\eperiod$$

            The last equality holds by the rank-nullity theorem applied on the subspace $U'$ and map $\pi$. By \cref{eq:lambdaLB} applied to $\lambda$, and \cref{eq:lambda'Def},
            $$\lambda^* \ge \frac{(D_W-1)\cdot \lambda}{D_W}\ge \frac{2 D_W(D_W+1)(D_W-1)}{D_W} = 2(D_W+1)(D_W-1)\ge 2(D_{W'}+1)\cdot D_{W'},$$
            so $\lambda^*$ satisfies \cref{eq:lambdaLB} with regard to $D_{W'}$.
            Recall that $g$ was equal to $\max(\deg \Psi, b \cdot k')$, and set $\hat{k} \coloneqq b\cdot D_U \cdot k'$ and $\hat{g} \coloneqq b^2g$ . By \cref{claim:Cramer} and  \cref{claim:MapToKernel}, $S'$ is $\hat{k}$-bounded. By \cref{claim:MapOfPsi}, 
            \begin{align*}\deg \Psi'\le \max\inset{b\cdot \deg \Psi, b\cdot D_U\cdot k'} \le bg\eperiod
            \end{align*}
            Note that $\hat{g} \ge \max\inset{\deg \Psi', b\cdot \hat{k}}$.
            
            We have now proven that $S', \Psi',$ and $W'$ indeed satisfy the required conditions, with $\hat{g}, \lambda^*, \hat{k}$ being the new values for $g, \lambda, k'$ respectively, and therefore by \cref{eq:sn-empty-implies-within-u},
            
            \begin{align*}
                \PROver{\balpha}{S_n\not\subseteq U\mid F'} &\le \PROver{\balpha}{S'[\Psi',\balpha']\supsetneq \{0\}\mid F'} 
                \\&\le \inparen{2^{D_{W'}}-1}\cdot \inparen{ \frac{4e\cdot 2^{D_{W'}} \cdot b^{2D_{W'}}\cdot \hat{g}\cdot (n-s)}{q\cdot \lambda^*}}^{\frac{\lambda^*}{2D_{W'}}} 
                \\&= \inparen{2^{D_{W'}}-1}\cdot \inparen{ \frac{4e \cdot 2^{D_{W'}}\cdot b^{2D_{W'}}\cdot b^2\cdot g\cdot (n-s)}{q\cdot\lambda^*}}^{\frac{\lambda^*}{2D_{W'}}}
                \numberthis\label{eq:probSnInU}
            \end{align*}

        \subsubsection*{Putting it All Together - Probability of $S_n = \{0\}$}
        \cref{eq:probSCapUTrivial,eq:probSnInU} yield
        \begin{align*}
            \PROver{\balpha}{S_n\ne \{0\}\mid F'} &\le \PROver{\balpha}{S_n\cap U \ne \{0\}\mid F'} + \\
            &\phantom{{}={}}\PROver{\balpha}{S_n\not\subseteq U \mid F'} 
            \\
            &\le  \inparen{2^{D_U}-1}\cdot \inparen{ \frac{4e \cdot 2^{D_U} \cdot b^{2D_U} \cdot g\cdot n}{q(\lambda^*-1)}}^{\frac{\lambda^*-1}{2D_U}}  + \\ 
            &\phantom{{}={}}\inparen{2^{D_{W'}}-1}\cdot \inparen{ \frac{4e \cdot 2^{D_{W'}} \cdot b^{2(D_{W'}+1)} \cdot g\cdot (n-s)}{q\cdot\lambda^*}}^{\frac{\lambda^*}{2D_{W'}}}
            \\
            &\le \inparen{2^{D_W}-2}\cdot \inparen{ \frac{4e\cdot 2^{D_W-1} \cdot b^{2D_W} \cdot g\cdot n}{q(\lambda^*-1)}}^{\frac{\lambda^*-1}{2(D_W-1)}} 
            \\
            &\le \inparen{2^{D_W}-2}\cdot \inparen{ \frac{4e\cdot 2^{D_W} \cdot b^{2D_W} \cdot g\cdot n}{q\cdot \lambda}}^{\frac{\lambda}{2D_W}}
        \end{align*}
        The second last inequality is true because $D_{W'}<D_W$, $D_U<D_W$ are true, and the last inequality is true by \cref{eq:lambda'Def} and the fact that $D_W>1$ holds (recall that $U \subsetneq W$ and $U$ is a non-trivial subspace).

        Recall that \cref{eq:PrE} yields an upper bound on $\PROver{\balpha}{\overline{F'}}$. Therefore,
        \begin{align*}
            \PROver{\balpha}{S_n\ne \{0\}} &\le \PROver{\balpha}{\overline{F'} \land S_n\ne \{0\}} + \PROver{\balpha}{ S_n\ne \{0\} \land F'} \\
            &\le \PROver{\balpha}{\overline{F'}} + \PROver{\balpha}{F'}\cdot \PROver{\balpha}{ S_n\ne \{0\}  \mid F'} \\
            &\le \PROver{\balpha}{\overline{F'}} + \PROver{\balpha}{S_n\ne \{0\}\mid F'} \\ &\le
            \inparen{\frac{4e\cdot D_W^2\cdot g\cdot n}{q\cdot \lambda}}^{\frac{\lambda}{2D_W}}   + \inparen{2^{D_W}-2}\cdot \inparen{ \frac{4e \cdot 2^{D_W} \cdot b^{2D_W}\cdot g\cdot n}{q\cdot \lambda}}^{\frac{\lambda}{2D_W}} \\
            &\le \inparen{2^{D_W}-1}\cdot \inparen{ \frac{4e \cdot 2^{D_W} \cdot b^{2D_W}\cdot g\cdot n}{q\cdot \lambda}}^{\frac{\lambda}{2D_W}}\ecomma
        \end{align*}
        which yields the lemma.
        \end{proof}
\ifauthors
    \section{Acknowledgments}
	   The authors would like to thank Yeyuan Chen and Zihan Zhang for showing us a counterexample which refuted a claim about list-recovery made in an earlier version of the paper, and also inspired the proof for \cref{thm:RLCThresholdForLRIntro}. The authors would also like to thank Joshua Brakensiek, Yeyuan Chen, Manik Dhar, and Zihan Zhang for pointing out a bug in \cref{sec:RSContainsPRofile} in an earlier version of the paper.
\fi
	\printbibliography
	
	\appendix
	\section{Random Code Models}\label{sec:Models}
	\subsection{Models of Random Linear Codes}
	In this work, a \deffont{random linear code of rate $R$} is the kernel $\cC$ of a uniformly random matrix $P\in \F_q^{(n-k)\times n}$. We note that if $P$ happens not to have full rank, the code $\cC$ would have rate larger than $R$. An arguably more accurate way to interpret the notion of \deffont{random linear code} is to sample a code $\cC'$ uniformly at random from among all linear subspace of $\F_q^n$ of dimension $Rn$. However, the former model is usually nicer to work with because it satisfies the elegant \cref{lem:ProbInRLC}. We justify our use of the former model rather than the latter by the following observation.
	
	\begin{lemma}
		Write $k = Rn$. Let $\cC$ be the kernel of a uniformly random matrix $P\in \F_q^{(n-k)\times n}$. Let $\cC$ be a uniformly random subspace of $\F_q^n$ of dimension $k$. Then, the statistical difference between $\cC$ and $\cC'$ is at most $1-e^{-q^{-Rn}\cdot n}$.
	\end{lemma}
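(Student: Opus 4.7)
My plan is to reduce the statistical distance computation to bounding the probability that the random matrix $P \in \F_q^{(n-k) \times n}$ is rank-deficient. Let $F$ denote the event that $P$ has full row rank $n-k$. When $F$ holds, $\cC = \ker P$ has dimension exactly $k$; when $F$ fails, $\dim \cC > k$. Since $\cC'$ is always $k$-dimensional, I first plan to show that conditional on $F$ the code $\cC$ is uniformly distributed over all $k$-dimensional subspaces of $\F_q^n$, hence identically distributed to $\cC'$. Granting this, the distributions of $\cC$ and $\cC'$ agree on the event $F$ (up to the common scaling by $\Pr[F]$), while $\cC'$ is concentrated on $k$-dimensional subspaces and $\cC$ places mass $\Pr[\neg F]$ off this set. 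A one-line calculation then shows that the total variation distance is exactly $\Pr[\neg F]$, so the task reduces to lower bounding $\Pr[F]$.

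To establish the conditional uniformity, I plan to use a standard orbit-counting argument. The group $\mathrm{GL}_{n-k}(\F_q)$ acts on the set of full-rank matrices in $\F_q^{(n-k) \times n}$ by left multiplication, and this action preserves $\ker P$. Conversely, two full-rank matrices with equal kernel have equal row span, and so one is obtained from the other by a change of basis of the rows; hence the orbits of the action are exactly the fibers of $P \mapsto \ker P$. The action is free on full-rank matrices (if $AP = P$ and $P$ has full row rank, then $A = I$), so every orbit has size $|\mathrm{GL}_{n-k}(\F_q)|$. Consequently, each $k$-dimensional subspace arises as the kernel of exactly the same number of full-rank matrices, which yields the claimed uniform distribution on kernels.

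It remains to lower-bound $\Pr[F]$. Building up the rows of $P$ one at a time, row $i+1$ fails to lie in the span of the previous rows with probability $1 - q^{-(n-i)}$, so
\[
\Pr[F] \;=\; \prod_{i=0}^{n-k-1}\inparen{1 - q^{-(n-i)}} \;=\; \prod_{j=k+1}^{n}\inparen{1 - q^{-j}}.
\]
Using $\log(1-x) \ge -x/(1-x)$ for $x \in (0,1)$ together with $\tfrac{1}{1-q^{-j}} \le 2$ for $j \ge 1$ and $q \ge 2$, I obtain
\[
\log \Pr[F] \;\ge\; -2\sum_{j=k+1}^n q^{-j} \;\ge\; -2(n-k)\,q^{-(k+1)} \;\ge\; -\tfrac{2n}{q}\cdot q^{-Rn} \;\ge\; -n\cdot q^{-Rn},
\]
where the final step uses $q \ge 2$. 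Exponentiating gives $\Pr[F] \ge e^{-n q^{-Rn}}$, whence the statistical distance is at most $1 - e^{-n q^{-Rn}}$, as required. The only step with any real content is the orbit-counting argument in the second paragraph; the distance identity and the probability estimate are routine.
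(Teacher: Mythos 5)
Your proof is correct and follows the same approach as the paper: reduce the statistical distance to the probability that $P$ fails to have full row rank, establish that conditionally on full rank the kernel is uniform over $k$-dimensional subspaces, and bound $\Pr[F]=\prod_{j=k+1}^n(1-q^{-j})$ from below. The orbit-counting argument and the observation that the TV distance is exactly $\Pr[\neg F]$ are details the paper simply asserts, and you have filled them in correctly.

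One point worth flagging: the paper's intermediate inequality $(1-q^{-k})^n \ge e^{-q^{-Rn}\cdot n}$ is actually reversed, since $1-x < e^{-x}$ for $x>0$, so that step does not give a lower bound on $\Pr[F]$. Your chain
\[
\log \Pr[F] \;\ge\; -2\sum_{j=k+1}^n q^{-j} \;\ge\; -2(n-k)q^{-(k+1)} \;\ge\; -\tfrac{2n}{q}\,q^{-Rn} \;\ge\; -n\,q^{-Rn}
\]
is valid (using $\log(1-x)\ge -x/(1-x)$ and $1/(1-q^{-j})\le 2$ for $j\ge 1$, $q\ge 2$), so your derivation in fact repairs a small slip in the paper while reaching the same conclusion.
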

	\begin{proof}
		It is a well known fact that $P$ has full rank with probability  
		$$\prod_{i=k+1}^n\inparen{1-q^{-i}}\ge \inparen{1-q^{-k}}^n\ge e^{-q^{-Rn}\cdot n}\eperiod$$
		The lemma follows since, conditioned on $P$ having full rank, $\cC$ and $\cC'$ have the same distribution.    
	\end{proof}
	
	\subsection{Models of Random RS Codes}
	For us, a random RS code is a code $\cC = \CRS[\F_q]{(\alpha_1,\dots,\alpha_n)}k$ where $\alpha_1,\dots,\alpha_n$ are sampled independently and uniformly from $\F_q$. Consider now the random code $\cC' =  \CRS[\F_q]{(\beta_1,\dots,\beta_n)}k$ where $\beta_1,\dots,\beta_n$ are sampled uniformly from $\F_q^n$ \emph{without repetitions}. While we have chosen to work with the former model out of convenience, we show that our results about list-decodability and list-recoverability essentially apply to the latter model as well. In fact, any result about the \deffont{geometry} of $\cC$ also holds approximately for $\cC'$. This is due to the following lemma, which gives a coupling between the two models such that there exists an approximately distance preserving map between the two, provided that $\frac nq$ is small enough.
	
	\begin{lemma}
		Write $k=Rn$. Then, exists a coupling $(\cC,\cC')$ where $\cC\subseteq \F_q^n$ is a random RS code of dimension $k$ and $\cC'\subseteq \F_q^n$ is a random RS code of dimension $k$ without repetitions, such that there exists a linear bijection $\varphi:\cC\to\cC'$ with
		$$\PR{\max_{x\in \cC}\inset{\wt{x-\varphi(x)}}\ge n-q\inparen{1-e^{-\frac nq}}\cdot(1-\delta)} \le \inparen{\frac{e^{-\delta}}{(1-\delta)^{1-\delta}}}^{q\inparen{1-e^{-\frac nq}}}$$
		for all $0 < \delta < 1$.
	\end{lemma}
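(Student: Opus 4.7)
The plan is to couple the two ensembles by producing the evaluation points of $\cC'$ from those of $\cC$ via a minimal ``repair.'' Sample $\alpha_1,\dots,\alpha_n$ i.i.d.\ uniformly in $\F_q$ (defining $\cC$), let $S=\inset{i\in[n]:\alpha_i\notin\inset{\alpha_1,\dots,\alpha_{i-1}}}$ be the set of first-occurrence indices, and let $U=\F_q\setminus\inset{\alpha_1,\dots,\alpha_n}$; assuming $q\ge n$, we have $|U|=q-|S|\ge n-|S|$. Sample an independent uniformly random injection $\pi:[n]\setminus S\to U$, and set $\beta_i=\alpha_i$ for $i\in S$, $\beta_i=\pi(i)$ for $i\notin S$. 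The whole construction is equivariant under relabeling of $\F_q$ by any permutation, and this permutation action on the set of sequences of $n$ distinct elements of $\F_q$ is transitive; hence $\beta$ is uniform over such sequences, and we may take $\cC':=\CRS[\F_q]{(\beta_1,\dots,\beta_n)}{k}$.

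Next, we use the polynomial representation to define $\varphi$: for every $f\in\F_q[X]$ with $\deg f<k$, set $\varphi\inparen{(f(\alpha_1),\dots,f(\alpha_n))}=(f(\beta_1),\dots,f(\beta_n))$. On the event $\inset{|S|\ge k}$, no nonzero polynomial of degree $<k$ vanishes on all the $\alpha_i$ (it would have $\ge k$ distinct roots), so the evaluation map $f\mapsto(f(\alpha_i))_i$ is injective, $\varphi$ is well-defined, and since $\beta$ has $n\ge k$ distinct coordinates, $\varphi$ is a linear bijection onto $\cC'$. On the complementary event we set $\varphi$ arbitrarily; this will be absorbed into the failure probability below. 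Since $\beta_i=\alpha_i$ for every $i\in S$, the codewords $(f(\alpha_i))_i$ and $(f(\beta_i))_i$ agree on all coordinates of $S$, giving the deterministic bound $\max_{x\in\cC}\wt{x-\varphi(x)}\le n-|S|$.

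It remains to lower-tail bound $|S|$, the number of non-empty bins in the occupancy model of throwing $n$ i.i.d.\ uniform balls into $q$ bins. Writing $|S|=\sum_{a\in\F_q}\id{a\in\inset{\alpha_1,\dots,\alpha_n}}$, each indicator is a monotone function of a distinct coordinate of the (negatively associated) multinomial occupancy counts, so the indicators are themselves negatively associated and the multiplicative Chernoff lower-tail bound applies to their sum. Since $\E{|S|}=q\inparen{1-(1-1/q)^n}\ge q(1-e^{-n/q})=:\mu$, applying Chernoff together with a short monotonicity argument (using that $e^{-\delta}/(1-\delta)^{1-\delta}\in(0,1)$ for $\delta\in(0,1)$, and that the Chernoff bound only strengthens when either the mean or the deviation parameter is replaced by a larger value) yields $\PR{|S|\le(1-\delta)\mu}\le\inparen{e^{-\delta}/(1-\delta)^{1-\delta}}^{\mu}$. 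Combining this with the deterministic bound from the previous paragraph finishes the proof.

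The main non-routine step is verifying that $\beta$ is uniformly distributed over sequences of $n$ distinct elements of $\F_q$; I expect to do this by the permutation-equivariance argument sketched above. A minor technicality is that $\varphi$ is not canonically defined when $|S|<k$, but this event is already contained in the event we bound in the final tail bound, so no special treatment is needed.
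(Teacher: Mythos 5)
Your proposal is correct and follows essentially the same route as the paper: the identical coupling that keeps the first-occurrence evaluation points and completes the rest to a uniformly random repetition-free sequence, the same polynomial-defined map $\varphi$, the same deterministic bound $\max_{x\in\cC}\wt{x-\varphi(x)}\le n-|S|$, and the same negative-association plus Chernoff--Hoeffding lower-tail bound on the number of occupied bins. The only differences are points of extra care the paper glosses over (verifying uniformity of $\beta$ by equivariance, and well-definedness of $\varphi$ when fewer than $k$ evaluation points are distinct); just note that your claim that $\inset{|S|<k}$ is contained in the tail event only holds when $k\le(1-\delta)q\inparen{1-e^{-n/q}}$, a degenerate regime which the paper's own proof silently ignores as well.
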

	\begin{proof}
		Sample $\alpha_1,\dots, \alpha_n$ uniformly and independently at random from $\F_q$. Let $$I = \inset{i\in [n] \mid \exists j< i~~\alpha_i = \alpha_j}\eperiod$$
		
		Sample $\beta_1,\dots, \beta_n$ as follows: For each $i\in [n]\setminus I$, take $\beta_i = \alpha_i$. Then, sample $\beta_i$ for all $i\in I$ to complete the sequence in a uniform repetition-less manner.
		
		Let $\varphi:\cC\to \cC'$ map $\inparen{Q(\alpha_1),\dots,Q(\alpha_n)}$ to $\inparen{Q(\beta_1),\dots,Q(\beta_n)}$ for every $Q\in \F_q[X]$ with $\deg Q \le k$. Clearly, $\wt{x-\varphi(x)} \le |I|$ for each $x\in \cC$. Hence, to prove the lemma it suffices to give a probabilistic upper bound on $|I|$.
		
		Consider a balls and bin scenario in which $n$ balls are each thrown into one of $q$ bins with uniform probability. Let $X$ denote the number of non-empty bins. Note that $|I|$ is distributed identically to $n-X$. Let $X_1,\dots, X_q$ be variables such that $X_i$ takes $1$ if the $i$-th bin is full, and $0$ if it is empty. Note that $\E{X_i} = 1-\inparen{1-\frac 1q}^n \ge 1-e^{-\frac nq}$, so $\E{X} \ge q \cdot \inparen{1-e^{-\frac nq}}$.
		
		Observe that $X_1,\dots, X_q$ are in \deffont{negative association} (see \cite{Wajc2017}). Indeed, by \cite[Theorem 10]{Wajc2017}, the occupancy numbers, indicating the number of balls in each bin, are in negative association. Since $X_1,\dots,X_q$ are monotone-increasing functions of the occupancy numbers, the relevant closure property \cite[Lemma 9]{Wajc2017}, implies that $X_1,\dots, X_q$ are also negatively associated. Thus, a Chernoff-Hoeffding bound \cite[Theorem 5]{Wajc2017} applies to their sum. Thus,
		$$\PR{|I| \ge n-q\inparen{1-e^{-\frac nq}}\cdot(1-\delta)} = \PR{X \le  q\inparen{1-e^{-\frac nq}}\cdot(1-\delta)} \le \inparen{\frac{e^{-\delta}}{(1-\delta)^{1-\delta}}}^{q\inparen{1-e^{-\frac nq}}}\eperiod$$
	\end{proof}
	
\end{document}